\newif\iflong
\def\nobreakbefore{\relax
  \ifvmode\else
    \ifhmode
      \ifdim\lastskip > 0pt\relax
        \unskip\nobreakspace
      \fi
    \fi
  \fi
}
\let\oldcite\cite
\renewcommand\cite{\nobreakbefore\oldcite}
\newcommand{\raisedrule}[2][0em]{\leaders\hbox{\rule[#1]{1pt}{#2}}\hfill}
\def\sepline#1{\addtocounter{lstnumber}{-1}\text{\hspace{-15pt}\parbox{30pt}{\large\color{white}\mbox{}\raisedrule{2ex}~}\hspace{-25pt}\parbox{\textwidth}{\mbox{}\color{lightgray}\raisedrule{1ex}~\color{gray}#1~\color{lightgray}\raisedrule{1ex}\mbox{}}}}
\def\dcmnumberstyle{}
\def\preAlgo{}
\gdef\preAlgo{}%
\crefname{algoflt}{Protocol}{Protocols}
\def\Vardef#1{%
    \expandafter\newcommand\csname #1\endcsname[1]{%
        \def\first{##1}%
        \def\second{*}%
        \def\third{}%
        \ensuremath{\mathsf{\MakeLowercase #1}\ifx\first\second\else\ifx\first\third\else[{##1}]\fi\fi}%
    }%
}
\def\KWdef#1{%
    \expandafter\newcommand\csname #1\endcsname{%
    \mathsf{\MakeLowercase #1}%
    }%
}
\let\originalleft\left
    \let\originalright\right
    \renewcommand{\left}{\mathopen{}\mathclose\bgroup\originalleft}
    \renewcommand{\right}{\aftergroup\egroup\originalright}
\newcommand{\DeclareMathActive}[2]{%
  \expandafter\edef\csname keep@#1@code\endcsname{\mathchar\the\mathcode`#1 }
  \begingroup\lccode`~=`#1\relax
  \lowercase{\endgroup\def~}{#2}%
  \AtBeginDocument{\mathcode`#1="8000}%
}
\newcommand{\std}[1]{\csname keep@#1@code\endcsname}
\patchcmd{\newmcodes@}{\mathcode`\-\relax}{\std@minuscode\relax}{}{\ddt}
\newtheorem{theorem}{Theorem}
\newtheorem{lemma}[theorem]{Lemma}
\newtheorem{observation}[theorem]{Observation}
\newtheorem{claim}[theorem]{Claim}
\newtheorem{definition}[theorem]{Definition}
\newcommand\bc[1]{\left(#1\right)}
\newcommand\ceil[1]{\left\lceil{#1}\right\rceil}
\newcommand\floor[1]{\left\lfloor{#1}\right\rfloor}
\newcommand{\N}{{\mathbb{N}}}
\newcommand{\vX}{\vec{X}}
\newcommand{\Exp}[1]{{\operatorname{\mathbb{E}}[#1]}}
\newcommand{\ExpCond}[2]{{\operatorname{\mathbb{E}}[#1\mid #2]}}
\newcommand{\Prob}[1]{{\operatorname{\Pr}\left[#1\right]}}
\newcommand{\ProbCond}[2]{{\operatorname{\Pr}[#1\mid #2]}}
\newcommand{\Geom}{{\mathrm{Geom}}}
\newcommand{\NegBin}{{\mathrm{NegBin}}}
\newcommand{\CouponCollector}{{\mathrm{CouponCollector}}}
\newcommand{\OWE}{{\mathrm{\std{O}WE}}}
\newcommand{\cwait}{c_{\mathrm{wait}}} % constant used for defining # of iterations spent waiting
\newcommand{\cliveness}{c_{\mathrm{live}}}
\newcommand{\creset}{\cliveness}
\newcommand{\UnsafeSilentProt}{{\normalfont\textsc{Ranking}}}
\newcommand{\Ranking}{\UnsafeSilentProt}
\newcommand{\SpaceEfficientRanking}{{\normalfont\textsc{SpaceEfficientRanking}}}
\newcommand{\StableRanking}{{\normalfont\textsc{StableRanking}}}
\newcommand{\ResetProt}{{\normalfont\textsc{PropagateReset}}}
\newcommand{\PropagateReset}{\ResetProt}
\newcommand{\FastLeaderElect}{{\normalfont\textsc{FastLeaderElection}}}
\newcommand{\TriggerReset}{{\normalfont\textsc{TriggerReset}}}
\def\Phase#1{\operatorname{\normalfont\texttt{phase}}(#1)}
\def\Coin#1{\operatorname{\normalfont\texttt{coin}}(#1)}
\def\LivenessCount#1{\operatorname{\normalfont\texttt{aliveCount}}(#1)}
\def\Rank#1{\operatorname{\normalfont\texttt{rank}}(#1)}
\def\WaitCount#1{\operatorname{\normalfont\texttt{waitCount}}(#1)}
\def\qLE#1{\operatorname{\normalfont\texttt{q}_{\normalfont\texttt{LE}}}(#1)}
\def\ResetCount#1{\operatorname{\normalfont\texttt{resetCount}}(#1)}
\def\DelayCount#1{\operatorname{\normalfont\texttt{delayCount}}(#1)}
\def\Dmax{\ensuremath{D_{\mathsf{max}}}}
\def\Rmax{\ensuremath{R_{\mathsf{max}}}}
\def\LECount#1{\operatorname{\texttt{LECount}}(#1)}
\def\Lmax{\ensuremath{L_{\mathsf{max}}}}
\def\LeaderBit#1{\operatorname{\normalfont\texttt{isLeader}}(#1)}
\def\LeaderDone#1{\operatorname{\normalfont\texttt{leaderDone}}(#1)}
\def\CoinCount#1{\operatorname{\normalfont\texttt{coinCount}}(#1)}
\def\FlipBit#1{\operatorname{\texttt{coin}}(#1)}
\def\LiveCount#1{\operatorname{\texttt{liveCount}}(#1)}
\def\Wmax{\ensuremath{W_{\mathsf{max}}}}
\def\PhaseT#1#2{\operatorname{\normalfont\texttt{phase}}_{#1}(#2)}
\def\CoinT#1#2{\operatorname{\normalfont\texttt{coin}}_{#1}(#2)}
\def\RankT#1#2{\operatorname{\normalfont\texttt{rank}}_{#1}(#2)}
\def\WaitCountT#1#2{\operatorname{\normalfont\texttt{waitCount}}_{#1}(#2)}
\newcommand{\CP}{\ensuremath{C_{\mathrm{Prep}}}}
\newcommand{\CLE}{\ensuremath{C_{\mathrm{LE}}}}
\newcommand{\QLE}{\ensuremath{Q_{\mathrm{LE}}}}
\newcommand{\CT}{\ensuremath{C_{\mathrm{T}}}}
\newcommand{\CM}{\ensuremath{C_{\mathrm{Main}}}}
\newcommand{\QM}{\ensuremath{Q_{\mathrm{Main}}}}
\newcommand{\QRanking}{\ensuremath{Q_{\mathrm{Ranking}}}}
\newcommand{\CS}{\ensuremath{C_{\mathrm{SR}}}}
\newcommand{\CSR}{\CS}
\newcommand{\CSRPlus}{\ensuremath{C_{\mathrm{SR+}}}}
\newcommand{\CL}{\ensuremath{C_{\mathrm{L}}}}
\newcommand{\CWF}[2]{\ensuremath{C_{#1,#2}}}
\newcommand{\CWFWait}[1]{\CWF{#1}{\mathrm{wait}}}
\newcommand{\CWFRank}[1]{\CWF{#1}{\mathrm{rank}}}
\newcommand*{\boldone}{\text{\usefont{U}{bbold}{m}{n}1}}
\def\?#1{}
\def\paragraph#1{\subsubsection*{#1. \?}}
\def\subsubsection{\@startsection{subsubsection}{3}{0pt}{1.5ex plus 0.2ex minus 0.2ex}%
{0ex}{\normalfont\normalsize\itshape}}%
\title{Silent Self-Stabilizing Ranking:\\Time Optimal and Space Efficient\thanks{
Research was funded by DFG under project number 491453517  and by Austrian Science Fund (FWF) under project number I 5862-N.
}}
\newcommand{\linebreakand}{%
  \end{@IEEEauthorhalign}
  \hfill\mbox{}\par
  \mbox{}\hfill\begin{@IEEEauthorhalign}
}
\author{
\IEEEauthorblockN{Petra Berenbrink}
\IEEEauthorblockA{\textit{University of Hamburg} \\
Hamburg, Germany \\
petra.berenbrink@uni-hamburg.de}
\and
\IEEEauthorblockN{Robert Elsässer}
\IEEEauthorblockA{\textit{University of Salzburg} \\
Salzburg, Austria \\
robert.elsaesser@plus.ac.at}
\and
\IEEEauthorblockN{Thorsten Götte}
\IEEEauthorblockA{\textit{University of Hamburg} \\
Hamburg, Germany \\
thorsten.goette@uni-hamburg.de}
\linebreakand
\IEEEauthorblockN{Lukas Hintze}
\IEEEauthorblockA{\textit{University of Hamburg} \\
Hamburg, Germany \\
lukas.rasmus.hintze@uni-hamburg.de}
\and
\IEEEauthorblockN{Dominik Kaaser}
\IEEEauthorblockA{\textit{TU Hamburg} \\ Hamburg, Germany \\ dominik.kaaser@tuhh.de}
}
\begin{document}

\maketitle

\begin{abstract}
We present a silent, self-stabilizing ranking protocol for the population protocol model of distributed computing, where agents interact in randomly chosen pairs to solve a common task.
We are given $n$ anonymous agents, and the goal is to assign each agent a unique rank in $\{1, \dots, n\}$.
Given unique ranks, it is straightforward to select a designated leader.
Thus, our protocol is a self-stabilizing leader election protocol as well.

Ranking requires at least $n$ states per agent; hence, the goal is to minimize the additional number of states, called overhead states.
The core of our protocol is a space-efficient but \emph{non-self-stabilizing} ranking protocol that requires only $n + O(\log n)$ states.
Our protocol stabilizes in $O(n^2\log n)$ interactions w.h.p.\ and in expectation, using $n + O(\log^2 n)$ states in total.
Our stabilization time is asymptotically optimal (see \citeauthor{DBLP:conf/podc/BurmanCCDNSX21}, PODC'21).
In comparison to the currently best known ranking protocol by \citeauthor{DBLP:conf/podc/BurmanCCDNSX21}, which requires  $n + \Omega(n)$ states, our result exponentially improves the number of overhead states.
\end{abstract}

\begin{IEEEkeywords}
Self-Stabilization, Ranking, Leader Election, Labeling, Population Protocols
\end{IEEEkeywords}

\section{Introduction}

The population protocol model \cite{DBLP:journals/dc/AngluinADFP06} is a simple yet expressive computational model for distributed computing. A population of $n$ anonymous \emph{agents} is given and the agents execute a protocol to solve a common task. At any time each agent is on one state out of a given set of states.
In a sequence of discrete time steps pairs of agents are chosen uniformly at random to interact.
In each interaction, the selected agents update their states according to a common transition function.
The required number of states and the number of interactions until a valid configuration is reached form the main performance criteria of population protocols.  
The model has many applications, for example the authors of
\textcite{DBLP:journals/dc/AngluinADFP06} motivate the model with sensor networks, where devices with limited resources are required to perform simple computations. 
Other examples entail chemical reaction networks \cite{DBLP:journals/nc/SoloveichikCWB08} and DNA computing \cite{CDS+13}. Also, certain
 biochemical regulatory processes in living cells can be modeled as population protocols
\cite{CCN12}. 
We refer to the surveys by \textcite{DBLP:journals/sigact/AlistarhG18,DBLP:journals/eatcs/ElsasserR18} for further details and applications.

For many computational tasks in this model we are confronted with the following dilemma: the lack of unique identifiers prevents us from solving many problems efficiently. Unfortunately assigning and maintaining unique identifiers is a notoriously difficult problem, especially if memory (which means in our model the number of states) is scarce, and the systems are prone to faults.
In this paper, we consider the problem of self-stabilizing ranking in the population protocol model.
It is assumed that the agents start in an arbitrary configuration, and the goal is to assign a unique rank from  $\{1, \dots, n\}$ to each agent.
We are interested in
self-stabilizing protocols, which are protocols that are guaranteed to eventually reach a valid configuration. This has to hold for any arbitrary initial configuration, including configurations following transient faults.

The ranking problem is closely related to leader election. Here the goal is to reach a configuration where exactly one agent is in a leader state while all other agents are in so-called follower states.
Given unique ranks it is straightforward to select a leader, e.g., by declaring the agent with rank $1$ to be the leader.
For the ranking problem $n$ is a trivial lower bound on the size of the state space needed by any ranking protocol. 
This holds since each of the indistinguishable agents has to be able to adopt any of the $n$ ranks. 
We, therefore, refer to the states that are required in addition to storing the $n$ ranks as overhead states.

% Leader election can be solved in expected $O(n \log n)$ interactions using $O(\log \log n)$ states \cite{DBLP:conf/stoc/BerenbrinkGK20}.
% Unfortunately, this protocol is not self-stabilizing: it requires a predefined initial configuration.
% Indeed, due to \textcite{DBLP:journals/mst/CaiIW12} any self-stabilizing leader election protocol always requires $n$ states.

\paragraph{Results in a Nutshell}
% We solve self-stabilizing leader election by deriving a self-stabilizing protocol for the ranking problem.
We present a novel self-stabilizing protocol for the ranking problem which stabilizes in $O(n^2 \log n)$ interactions w.h.p.
It belongs to the natural class of so-called \emph{silent} protocols which are protocols where, at a certain point, no agent changes its state any longer.
Note our time complexity is optimal within this class \cite{DBLP:conf/podc/BurmanCCDNSX21}.
Our algorithm uses only $O(\log^2 n)$ overhead states in addition to the $n$ states required to store the ranks of the agents.
We underscore the prohibitive nature of this stringent memory restriction: the \emph{additive} overhead of size $O(\log^2 n)$ does not allow agents to hold their rank together with any additional piece of information. Indeed, even a single additional bit would immediately double the state space size.
For example, this rules out that ranked agents participate in a phase-clock to synchronize the protocol, or store the information whether they are a leader or not.
Similarly, if a leader exits which is then used to assign the ranks (as in, e.g., \cite{DBLP:conf/podc/BurmanCCDNSX21,DBLP:conf/opodis/GasieniecJLL21}), the leader agent cannot store any information about the ranks assigned so far, including its own rank.
In order to overcome this issue, our protocol still works with a leader, but our leader is blissfully unaware of its special state.
This piece of information is only revealed to the ``unaware'' leader once it communicates with an unranked agent $u$: depending on the combination of its own state and the state of $u$, the unaware leader realizes which role it has,  which allows the agent to assign the correct rank to $u$. Finally, after ranking,  our protocol solves the problem of self-stabilizing leader election by selecting the agent with the lowest rank as the leader.

% {\color{green}
% A full version including our simulation software and detailed proofs for all claims can be obtained from \url{https://anonymous.4open.science/r/SelfStabilizingRanking-ICDCS/}.
% }
% TODO: Arxiv-Link here

\section{Related Work}

\paragraph{Ranking Protocols}
% Most related self-stabilizing leader election protocols are based on self-stabilizing protocols for the ranking problem.
%
%As described above, in the ranking problem we are given $n$ agents and the goal is to assign each agent a unique rank from a predefined space of ranks. Clearly,

The ranking problem occurs in different communication models and under various assumptions  \cite{DBLP:conf/wdag/AlistarhAGGG10,DBLP:conf/podc/AlistarhACGZ11,DBLP:journals/dc/BrodskyEW11,DBLP:conf/podc/GiakkoupisW12,DBLP:journals/jpdc/BerenbrinkBEFN21}.
In this overview we focus on results in the population protocol model.
Ranking in population protocols typically considers so-called \emph{safe} and \emph{silent} protocols.
In a safe protocol, once an agent receives a rank, this rank is never changed.
In a silent protocol the population eventually reaches a final configuration in which agents no longer \mbox{change their states.}

\Textcite{DBLP:conf/opodis/BeauquierBRR12} present a population protocol for a generalization of the ranking problem.
Their protocol distributes $m$ unique labels with $m \geq n$ among the agents.
For $m=n$, this corresponds to the ranking problem.
The authors focus on the feasibility of the solution but do not analyze the time needed for the population to stabilize.
Another set of self-stabilizing ranking algorithms is provided by \textcite{burman_et_al:LIPIcs.DISC.2019.9}.
Again, their focus lies on feasibility under weak scheduler while additionally optimizing the required number of states.

\Textcite{DBLP:conf/opodis/GasieniecJLL21} present two safe and silent ranking protocols, one for a range of $[1,(1+\epsilon) \cdot n]$ and one for the optimal range $[1,n]$.
The first protocol requires $O(n\log(n)/\epsilon)$ interactions w.h.p.\ and uses $(2+\epsilon) n + O(n^{\alpha})$ states for an arbitrary constant $\alpha$.
While the protocol used $\Omega(n)$ overhead states, for $\epsilon=\Omega(1)$ the protocol only requires an asymptotically optimal number of $O(n \log n)$ interactions.
In addition to the upper bound, the authors show for this protocol a lower bound, which can also be generalized to a wider class of protocols:
to assign ranks from the range $[1, n + r]$ the expected number of interactions is at least $n\cdot (n-1)/(2(r+1))$.
For the optimal range $[1, n]$, the authors present a protocol which requires $O(n^3)$ interactions in expectation and $n+5\sqrt{n}+O(n^c)$ states, where $c$ can be an arbitrarily small constant.
A generalization of the protocol is parameterized by $\epsilon\ge n^{-1}$ and uses $(1+9\sqrt{\epsilon})\cdot n+O(\log\log n)$ states and $O(n^2/\epsilon)$ interactions w.h.p.
Finally, the authors show a general lower bound for safe and silent protocols that even holds if a designated leader agent is present from the beginning:
any safe and silent protocol that uses a range of $[1,n]$ and produces a valid ranking with probability larger than $1 - 1/n$ requires at least $n +\sqrt{n-1}-1$ states (see the full version \cite{gasieniec2021efficient} of \cite{DBLP:conf/opodis/GasieniecJLL21}).
Note that none of these protocols are self-stabilizing.

Very recently, \textcite{gąsieniec2024} present a different approach for deanonymizing a population that is orthogonal to the previously mentioned approaches.
Here, all agents agree on a common coordinate system and assign themselves unique points in that system.

\paragraph{Leader Election}
Leader election is a prominent problem in the population protocol model that is closely related to ranking.
A long series of papers on non-self-stabilizing leader election \cite{DBLP:conf/icalp/AlistarhG15,DBLP:conf/podc/BilkeCER17,DBLP:conf/soda/AlistarhAG18,DBLP:conf/soda/BerenbrinkKKO18,DBLP:conf/soda/AlistarhAEGR17,DBLP:conf/soda/GasieniecS18,DBLP:journals/tpds/SudoOIKM20,DBLP:conf/spaa/GasieniecSU19} has lead to the currently best known protocol by \textcite{DBLP:conf/stoc/BerenbrinkGK20} which uses $O(\log \log n)$ states and stabilizes in asymptotically optimal $O(n\cdot \log n)$ interactions in expectation. 
\Textcite{DBLP:conf/wdag/SudoEIM21} present a so-called loosely-stabilizing protocol that elects a leader starting from any arbitrary initial configuration.
Informally, a loosely-stabilizing protocol converges quickly from an arbitrary initial configuration to a valid configuration with a unique leader and then remains in a valid configuration for a long time.
Their protocol improves on earlier works on loosely-stabilizing leader election \cite{DBLP:journals/tcs/SudoNYOKM12,DBLP:journals/tcs/SudoOKMDL20}.

To the best of our knowledge, all efficient self-stabilizing leader election protocols in the population protocol model are based on self-stabilizing ranking protocols.
These protocols compute a ranking which trivially implies a leader.
In this setting, \textcite{DBLP:journals/mst/CaiIW12} present a silent self-stabilizing leader election protocol that requires $O(n^3)$ interactions w.h.p.\ using $n$ states.
Furthermore, \textcite{DBLP:conf/podc/BurmanCCDNSX21} present three self-stabilizing protocols for leader election based on ranking. 
The first one is silent and requires $O( n^2\log n)$ interactions w.h.p.\ using $O(n)$ states.
The second protocol and the third protocol are both non-silent.
The second one needs only $O(n \log n)$ interactions w.h.p.\ at the expense of an exponential number of $\exp(O(n^{\log n}\cdot \log n))$ states.
The third one allows a trade-off between the number of states and the running time controlled by parameter $1 \leq H = O(\log n)$.
It requires $O(H n^{1+1/(H+1)})$ interactions w.h.p.\ and $O(n^{\Theta(n^H)}\cdot \log n)$ states.
Finally, note that \textcite{DBLP:journals/mst/CaiIW12} show that any self-stabilizing leader election protocol requires at least $n$ states and 
\textcite{DBLP:conf/podc/BurmanCCDNSX21} show that every silent leader election protocol requires $\Omega(n^2)$ interactions in expectation (and $\Omega(n^2 \log n)$ interactions w.h.p.).
Thus, the silent self-stabilizing leader election protocol implied by our ranking protocol matches the lower bound on the time complexity from \cite{DBLP:conf/podc/BurmanCCDNSX21} and almost matches the state complexity from \textcite{DBLP:journals/mst/CaiIW12}, except for the \emph{additive} $O(\log^2 n)$ states.

\paragraph{Ranking of Anonymous Networks}
Another related problem is assigning a rank to all nodes of an anonymous network.
The network is modeled by a connected graph $G = (V, E)$ whose edges may change over time.
Time proceeds in synchronous rounds, and in each round, a node $v \in V$ can send a message to all its neighbors.
The nodes have no identifiers, but usually, they differentiate between their neighbors based on \emph{port numbers}.
\Textcite{DBLP:conf/icdcs/KowalskiM21} present a (non-self-stabilizing) leader election protocol that runs in $O(t_{\mathrm{mix}} \log^2 n)$ time w.h.p., where $t_{\mathrm{mix}}$ is the mixing time of simple random walk on $G$.
\Textcite{DBLP:conf/focs/LunaV22, DBLP:conf/mfcs/LunaV24} consider the ``reverse'' problem of determining the number of nodes $n$ given a predetermined leader.
Their algorithm takes $O(n)$ rounds.
Note that algorithms for population protocols can usually be transferred to anonymous networks:
the authors of \cite{DBLP:conf/opodis/AlistarhGR21} present a general framework that simulates a population protocol on a graph. Their approach is random-walk based and the runtime depends on the properties of the graph $G$ such as diameter or conductance. The converse direction is not straightforward:
in an anonymous network \emph{all} nodes can communicate with \emph{all} their neighbors in each time step, while in a population protocol  \emph{exactly one pair} of agents interacts per time step.

\section{Model and Results}
\label{sec:model_results}

%\Textcite{DBLP:journals/dc/AngluinADFP06} introduce population protocols assuming an underlying communication graph, and the scheduler is only allowed to select agents for interaction that are connected by an edge. In this paper we assume -- as most of the previous work -- that the interaction graph is a clique and interacting agents are selected by a random scheduler. That is, in every time step two agents are selected uniformly at random without replacement to interact. Furthermore, asymmetry is assumed, i.e., the first agent selected for interaction in a step is called \emph{initiator} while the second is called \emph{responder}, and the state transition applied to each of the two agents also depends on the role of that agent, in addition to their states (see \cite{DBLP:journals/dc/AngluinADFP06}). Our focus lies on the standard complexity metrics: the size of the state space defined by the transition function and the time (i.e., the number of interactions) required for the population to reach a stable configuration -- more details are given below. In many previous papers the so-called parallel time is considered, which is defined as the number of interactions divided by $n$.

We consider a set $V$ of $n$ agents. Each agent $v$ has a state $x(v)$ from a state space $Q$.
A \emph{configuration} is a vector $(x(v))_{v \in V} \in Q^n$ that maps each agent $v \in V$ to its state $x(v)$.
Time is measured in discrete steps.
In each time step, two agents are chosen uniformly at random to interact.
The two chosen agents update their states according to a common transition function.
The configuration $\vX_0$ at time $0$ is called \emph{initial configuration}.
Due to the random interactions, the configuration at time $t>0$ is a random vector $\vX_t \in Q^n$.
Since we consider self-stabilization, the initial configuration is arbitrary.
The goal is to reach a \emph{valid configuration} in which all agents have a unique rank from $[n]$ defined as $[n] = \{1,\ldots, n\}$.
A valid configuration is called \emph{stable} if no sequence of interactions exists that changes the output on any agent.
We say that a protocol stabilizes after $\tau$ steps if $\vX_\tau$ is valid and stable.

A population protocol with state space $Q$ is \emph{self-stabilizing} with respect to a set of configurations $C_L \subset Q^n$ if and only if it fulfills the following two properties.
\begin{itemize}
    \item \emph{Closure:}\quad If $\vX_t \in C_L$ for some $t$, then $\vX_{t+1} \in C_L$.
    If additionally $\vX_{t+1} = \vX_{t}$, i.e., no agent changes its state, the protocol is \emph{silent}.
    \item \emph{Probabilistic Stabilization:}\quad For every $\vX_t \in Q^n$ we have $\displaystyle\lim_{\tau \to \infty} \Prob{\vX_{t+\tau} \in C_L}=1$.
    Note that in contrast to other models, we cannot guarantee deterministic stabilization for population protocols due to the random interactions.
\end{itemize}
For our protocols, we let $C_L$ be the set of all permutations of~$[n]$.
That is, $C_L$ is the set of all configurations in which every agent is assigned a rank, and all ranks are unique.
Together with an output function that maps a rank of $1$ to ``leader'' and any other state to ``follower'' this immediately carries over to self-stabilizing leader election.

The following theorem is our first main result. 
The corresponding protocol \SpaceEfficientRanking\ is introduced and analyzed in \cref{sec:unstable_ranking}.

\begin{theorem}[label=thm:non-self-stab-ranking,restate=thmOne]
    \SpaceEfficientRanking\ is a silent population protocol with $n + \Theta(\log n)$ states that reaches a valid ranking in $O(n^2 \log n)$ interactions w.h.p.
\end{theorem}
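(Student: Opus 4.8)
The plan is to establish three things about \SpaceEfficientRanking: that it uses $n+\Theta(\log n)$ states, that from its designated initial configuration it always reaches a valid ranking and is then silent, and that this happens within $O(n^2\log n)$ interactions w.h.p. The state count will be read off directly from the description of the protocol in \cref{sec:unstable_ranking} (the $n$ rank states, plus a constant number of roles together with a counter/coin register of $\Theta(\log n)$ values). So the real work splits into a correctness argument and a running-time argument, the latter being the crux.

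For correctness I would argue by a configuration invariant maintained inductively over interactions, starting from the common initial state of all agents. The $\Theta(\log n)$ auxiliary states first single out a unique \emph{ranker}; here I would show that the auxiliary sub-protocol keeps the number of ranker candidates non-increasing (so at most one ranker ever exists) and that it converges to exactly one within $O(n\log n)$ interactions w.h.p. The main invariant is then: the set of currently assigned ranks is a prefix $\{1,\dots,k\}$ of $[n]$, all held by distinct agents, and the ``frontier'' driving the assignment (the ranker together with whatever token/counter it carries) is well-formed and consistent with $k$. Each transition is of one of three types --- advance the auxiliary part, promote a single unranked agent to rank $k+1$ while advancing the frontier, or a no-op --- and each preserves the invariant; in particular no rank is ever duplicated or skipped. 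Once $k=n$ there are no unranked agents left, the promotion transitions are disabled, and the auxiliary part is quiescent, so the configuration is a valid ranking and every further interaction is a no-op. This gives silence (hence also stability of the reached ranking); liveness, i.e.\ that $k$ actually reaches $n$, follows from the running-time bound below.

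The heart of the proof is the time bound. Condition on the event that the ranker has been fixed, which costs $O(n\log n)$ interactions w.h.p. For each of the $\Theta(n)$ stages, let $T_j$ be the number of interactions between the moments when the $j$-th and the $(j{+}1)$-th agent get ranked. A promotion occurs exactly when the frontier agent --- a single fixed agent --- meets one of the $n-j$ remaining unranked agents, possibly followed by a short auxiliary propagation, so $T_j \le A_j + B_j$ where, successively, $A_j$ is stochastically dominated by an independent $\Geom(p_j)$ with $p_j = \Theta((n-j)/n^2)$ and $B_j = O(n\log n)$ w.h.p.\ bounds the auxiliary follow-up. The $B_j$ contribute $O(n^2\log n)$ in total, and $\Exp{\sum_j A_j}=\sum_j \Theta(n^2/(n-j)) = \Theta(n^2\log n)$, which already yields the expectation. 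For the w.h.p.\ bound a naive union bound over the $n$ stages loses a $\log n$ factor, so instead I would bound the moment generating function of $\sum_j A_j$ directly: since $\Exp{e^{\theta A_j}}=p_j/(p_j-\theta)$, choosing $\theta$ a small constant times $1/n^2$ gives $\prod_j \Exp{e^{\theta A_j}} = \prod_{i=1}^{n} 1/(1-\Theta(1/i)) = n^{O(1)}$, and a Chernoff-type tail bound then yields $\Prob{\sum_j A_j > c\,n^2\log n}\le n^{-2}$ for a suitable constant $c$. Intersecting the events ``ranker fixed in time'', ``$\sum_j A_j$ small'', and ``all $B_j$ small'' with the invariant shows that \SpaceEfficientRanking\ reaches a valid, silent ranking within $O(n^2\log n)$ interactions w.h.p.

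The main obstacle I expect is twofold. First, the concentration step must genuinely avoid the spurious $\log n$ factor: the late stages, with only $O(1)$ unranked agents left, each contribute $\Theta(n^2)$ interactions with non-negligible variance, and one has to argue that their \emph{sum} still concentrates at $O(n^2\log n)$; the MGF/Chernoff computation above is the clean route but has to be set up carefully because the success probabilities $p_j$ sweep the whole range from $\Theta(1/n^2)$ to $\Theta(1/n)$. Second, one must verify that the $\Theta(\log n)$ auxiliary machinery never interferes with correctness --- it cannot spuriously create a second ranker, cannot resurrect an already-finished agent, and settles into a quiescent state within the allotted time --- and this is exactly where the precise transition rules from \cref{sec:unstable_ranking} are needed.
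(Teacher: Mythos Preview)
Your proposal misreads the structure of \SpaceEfficientRanking. The invariant you state --- that the assigned ranks always form a prefix $\{1,\dots,k\}$ and that each promotion hands out rank $k+1$ --- is false for this protocol. Ranks are assigned \emph{top-down} in $\lceil\log_2 n\rceil$ phases: phase $k$ fills the interval $(f_{k+1},f_k]$, so after phase~$k$ the assigned ranks form the \emph{suffix} $\{f_k+1,\dots,n\}$, while the unaware leader temporarily holds some small rank $r\le f_k-f_{k+1}$ that is recycled every phase. The correctness argument therefore cannot be a single prefix invariant; it has to track the phase structure and the uniqueness of the low-rank holder. The paper does this via the configuration classes $\CWFWait{k}$ and $\CWFRank{k}$ of \cref{def:initial_configurations} and alternates between them using \cref{lem:unsafe_waiting_subphase,lem:unsafe_ranking_subphase}.

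The same misreading breaks your time bound for the ``auxiliary follow-up''. You assert $B_j=O(n\log n)$ w.h.p., but between phases the leader's wait counter is decremented only on meeting a \emph{phase agent}, and the one-way epidemic propagating the new phase index runs only among phase agents; in phase~$k$ there are merely $f_k-1\approx n/2^{k-1}$ of those. Both the waiting time and the epidemic therefore take $\Theta\bigl(\tfrac{n^2}{f_k}\log n\bigr)=\Theta(2^k n\log n)$ interactions, which for the final phases is already $\Theta(n^2\log n)$, not $O(n\log n)$. The sum over the $\lceil\log n\rceil$ phase boundaries still comes out to $O(n^2\log n)$, but your per-step bound is off by up to a factor of~$n$, and you only see the correct accounting once you work phase by phase. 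Your MGF argument for $\sum_j A_j$ is a legitimate alternative; the paper avoids it by grouping the geometric summands by phase, dominating each phase's ranking time by a single negative binomial (\cref{lem:unsafe_ranking_subphase}), and taking a union bound over only $\log n$ phases.
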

In our second main result, we transform the protocol from \cref{thm:non-self-stab-ranking} into a self-stabilizing protocol.
Most notably, we amend it with error-detection and a resetting mechanism.
We present the required changes and analyze the corresponding protocol \Ranking+\ in \cref{sec:ss_protocol}.

%The resetting mechanism is based on the resetting protocol $\ResetProt$ from \cite{DBLP:conf/podc/BurmanCCDNSX21}. We give a brief description in \cref{sec:resetprot_description}.
%The error-detection is integrated into our $\Ranking+$ protocol which we describe in \cref{sec:ranking_plus_description}.
%Finally, we prove the correctness in \cref{sec:ss_ranking_analysis}.

\begin{theorem}[label=thm:ss-ranking,restate=thmTwo]
    \Ranking+\ is a silent population protocol for self-stabilizing ranking that requires $n + O(\log^2 n)$ states and stabilizes in $O(n^2 \cdot \log n)$ interactions w.h.p.
\end{theorem}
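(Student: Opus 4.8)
The plan is to build the self-stabilizing protocol \Ranking+\ on top of the non-self-stabilizing protocol \SpaceEfficientRanking\ from \cref{thm:non-self-stab-ranking} by bolting on two orthogonal modules: an \emph{error-detection} predicate that every agent can evaluate locally on each interaction, and a \emph{reset-propagation} mechanism (\PropagateReset/\TriggerReset) that, once any inconsistency is spotted, floods the population and drives it back into the clean initial configuration of \SpaceEfficientRanking. The correctness argument then splits into closure and probabilistic stabilization, matching the two bullet points in \cref{sec:model_results}. For closure I would show that the set $C_L$ of rank-permutations, once reached, is a fixed point: in a valid ranking no local error predicate fires (this must be checked against the exact transition rules of \Ranking+, so that legitimate ``unaware leader assigns a rank'' interactions are not mistaken for faults), hence no reset is ever triggered and no agent changes state, giving silence. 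For probabilistic stabilization I would argue that from an arbitrary configuration the system reaches $C_L$ w.h.p.\ within $O(n^2\log n)$ interactions.

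The stabilization argument I would organize as a short chain of phases, each taking $O(n^2\log n)$ interactions w.h.p.\ (a union bound over $O(1)$ phases preserves the bound). First, a \emph{detection/trigger} phase: either the current garbage configuration is already a consistent run of \SpaceEfficientRanking\ heading toward a valid ranking, or some agent detects a local inconsistency; I would need a lemma stating that any configuration \emph{not} on a legal trajectory of \SpaceEfficientRanking\ contains, or within $O(n^2\log n)$ interactions produces, a witnessing interaction that fires the error predicate. Second, a \emph{reset-propagation} phase: once one agent enters the reset state, \PropagateReset\ behaves like a one-way epidemic, so by the standard coupon-collector / epidemic bound all $n$ agents are reset within $O(n\log n)$ interactions w.h.p.; I would also have to ensure that stray reset signals present in the adversarial initial configuration cannot keep re-triggering forever — the usual fix is a bounded counter (\ResetCount, bounded by \Rmax) or a monotone phase tag so that only finitely many spurious resets can occur before the population is globally clean. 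Third, after a global reset the population sits in (or a bounded number of interactions away from) the initial configuration of \SpaceEfficientRanking, so \cref{thm:non-self-stab-ranking} applies verbatim and yields a valid ranking in $O(n^2\log n)$ further interactions w.h.p. Finally, since a valid ranking lies in $C_L$ and $C_L$ is closed, the protocol has stabilized. The expectation bound follows from the w.h.p.\ bound by the standard restart argument: conditioned on failure (probability $\le n^{-c}$) we re-run the whole analysis, and the resulting geometric series is dominated by $O(n^2\log n)$.

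For the state count I would simply tally the components: the $n$ rank values, the $\Theta(\log n)$ overhead of \SpaceEfficientRanking, and the extra $O(\log^2 n)$ bits needed for the reset machinery — the bounded reset/delay counters \ResetCount, \DelayCount\ (each $O(\log n)$ values) together with whatever $O(\log n)$-sized phase or liveness counters the detection module needs; the product of two $O(\log n)$-range counters gives the $O(\log^2 n)$ overhead, so the total is $n + O(\log^2 n)$ as claimed. The delicate point here is that, as emphasized in the introduction, a ranked agent must carry \emph{no} extra bit alongside its rank; so I would have to verify that the reset counters and phase tags live only on \emph{unranked} agents (or are encoded into the $\Theta(\log n)$ slack states that exist precisely when an agent is not yet ranked), and that a ranked agent participating in error-detection only ever \emph{reads} its neighbor's state without needing auxiliary memory of its own.

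The main obstacle I expect is the detection-completeness lemma: proving that \emph{every} illegal configuration is eventually caught. Unlike a protocol with a known leader and a phase clock, here the ``leader'' is unaware of itself, so a faulty configuration could have zero, one, or several would-be leaders, inconsistent rank-multisets, or phase-counter values that look locally plausible but are globally incoherent. Designing a local predicate whose non-violation across all pairs certifies global legality — while simultaneously not firing on any honest trajectory of \SpaceEfficientRanking\ — is the technical heart of the result, and this is presumably where the $O(\log^2 n)$ overhead (as opposed to $O(\log n)$) is spent. The second-hardest part is bounding the number of spurious resets that the adversary can seed; getting this to be finite (rather than merely ``rare'') is what makes the limit in the probabilistic-stabilization definition actually go to $1$, and it is where the bounded counters \Rmax, \Dmax\ earn their keep.
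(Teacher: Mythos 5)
Your high-level architecture (local error detection, epidemic reset via bounded counters, then re-run the ranking) matches the paper's, and your closure and state-count discussions are essentially right, including the important observation that ranked agents cannot carry auxiliary bits. However, there are two genuine gaps. First, you explicitly defer the detection-completeness lemma as ``the technical heart'' without supplying an argument. This is precisely where the paper does its real work (\cref{lemma:convergence_bad_case}): it introduces a potential function $\Phi_t = \sum_{v:\,\Phase{v}\neq\bot} 2^{-\Phase{v}}$ (set to $0$ when no ``productive pair'' exists or a reset is underway), shows an expected multiplicative drop of $1-\Omega(n^{-2})$ in ``good'' time steps, bounds the number of non-good steps (which requires a delicate analysis of the synthetic coin restricted to the shrinking subpopulation of phase agents, where standard techniques do not apply), and then handles the resulting ``dead'' configurations by a three-way case split (duplicate ranks; a single unranked agent caught by the $\texttt{aliveCount}$ mechanism; multiple unranked agents, handled via an exponential potential $\sum_v 3^{\texttt{aliveCount}(v)}$ adapted from Burman et al.). None of this is a routine instantiation of your sketch; the liveness-counter design, which lets a stalled protocol self-diagnose even though the unaware leader cannot store a timer, is the key invention and your proposal does not describe any mechanism that would detect a globally stuck but locally consistent configuration.

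Second, your claim that after a reset ``\cref{thm:non-self-stab-ranking} applies verbatim'' is false for the protocol the paper actually builds. The leader election of \cite{DBLP:conf/soda/GasieniecS18} cannot be reused in the self-stabilizing setting: making it timeout-safe would multiply its state space by $\Theta(\log^2 n)$, exceeding the budget, so the paper substitutes a lottery-style $\FastLeaderElect$ that succeeds only with \emph{constant} probability. Consequently the analysis must show that failed or multi-leader elections are detected (via duplicate ranks or the $\texttt{LECount}$ timeout) and retried $O(\log n)$ times within the time budget (\cref{lemma:convergence_leader_elect}), and the good-case run (\cref{lemma:convergence_good_case}) must additionally prove that no $\texttt{aliveCount}$ spuriously hits zero during an honest execution --- again requiring the coin analysis over a shrinking subpopulation. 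Your union bound over ``$O(1)$ phases'' and the verbatim reuse of Theorem~1 therefore do not go through as stated.
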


\section{Non-self-stabilizing Ranking}% in $n + O(\log n)$ States}
\label{sec:unstable_ranking}

In this section we describe our non-self-stabilizing ranking protocol \SpaceEfficientRanking.
Intuitively, the protocol works as follows.
All agents start with leader election using the protocol from \cite{DBLP:conf/soda/GasieniecS18}.
As soon as one agent is elected as the unique leader, this agent starts the actual ranking.
The ranking then runs in multiple phases, and in each phase a contiguous interval of ranks is assigned: in phase $1$, the leader assigns ranks $n/2+1$ to $n$, in phase $2$ the leader assigns ranks $n/4+1$ to $n/2$, and so on (assuming $n$ is a power of $2$ for now).

Recall that one agent cannot remember all necessary information such as being the unique leader or not, the current phase number, the next rank to be assigned, and its rank at the same time.
Instead, we distribute this information across multiple agents:
agents either store a rank or the current phase index, and nothing else.
% In particular, the leader also only stores a rank.
In phase $k$, the leader stores a rank between $1$ and $n/2^{k+1}$.

Now suppose that we are in phase $k$ where the ranks $n/2^{k+1}+1$ to $n/2^k$ are assigned.
% The leader agent looks superficially like a regular ranked agent having rank $l$. 
Our protocol ensures that the leader is the sole agent with a rank  $r \leq n/2^{k+1}$.
When the leader interacts with an unranked agent, it assigns rank $n/2^{k+1} + r$.
If $r$ is below $n/2^{k+1}$, the leader increments its own rank by $1$.
Otherwise, it starts a broadcast that increases the phase to $k+1$.
The leader goes into a special waiting state until the broadcast has finished. 
Then, it assigns itself rank $r=1$ again and thus starts the next phase $k+1$.

Note that the leader is ``unaware'' of its special state.
% Furthermore, the next rank to be assigned is the $l^{th}$ rank of the phase $k$.
Only when it interacts with an unranked agent, it realizes that it is indeed the leader. 
% In this case it assigns the new rank $n/2^{k+1} + l$, using its own rank $l$ as a counter pointing to the next empty rank in that interval.
Our protocol ensures that at all times there is w.h.p.\ only one \emph{unaware leader}, namely the one elected in the beginning.

\medskip

Before we give the formal protocol definition, we first give a formal overview of the state space.
We assume that the exact value of $n$ is known. This is in fact necessary for leader election and thus also for ranking, see Theorem 1 in \cite{DBLP:journals/mst/CaiIW12}.
\begin{align*}
    \QRanking &= \underbrace{\QLE}_{\texttt{q}_{\texttt{LE}}} \times \underbrace{\{0, 1\}}_{\mathclap{\texttt{leaderDone}}} \uplus \underbrace{\left\{  1, \ldots,  \ceil{\cwait\cdot\log n}\right\}}_{\texttt{waitCount}}\\
    & \phantom{{}={}}\uplus \underbrace{\left\{  1, \ldots,  \lceil\log n\rceil\right\}}_{\texttt{phase}} \uplus \underbrace{\left\{  1, \ldots,  n\right\}}_{\texttt{rank}}.
\end{align*}
Here, $\uplus$ is the disjoint union of two sets and $\QLE$ is the state space of the leader election protocol by \textcite{DBLP:conf/soda/GasieniecS18}.
The expression $\qLE{v} \in \QLE$ contains the leader-election state of $v$
    and is initialized to the initial state $q_0 \in \QLE$.
% The variable $\LeaderDone{v}$ indicates if $v$ has finished the leader election, it is initialized with $0$ as described in \cite{DBLP:conf/podc/BerenbrinkKR19}.
Similarly to \cite{DBLP:conf/podc/BerenbrinkKR19}, we assume that the leader election protocol provides, additionally to the state $\qLE{v} \in \QLE$, a variable $\LeaderDone{v}$ which is set to $1$ when agent $v$ assumes that the leader election is done.
When all agents have $\LeaderDone{v}=1$ there is w.h.p.\ exactly one leader agent~$\ell$.
The variables $\WaitCount v$ and $\Phase v$ are both used to guide our ranking protocol and $\Rank v$ is used to store the rank of agent $v$.
These values are all initialized with $\bot$, indicating that the value is as-yet undefined.
Throughout the run of the protocol,
    each agent can have exactly one value of $\qLE{v}$, $\WaitCount{v}$, $\Phase{v}$, or $\Rank{v}$ be \emph{not} equal to $\bot$, and $\LeaderDone{v} \neq \bot$ if and only if $\qLE{v} \neq \bot$.
We call these agents leader-electing agents, waiting agents, phase agents, and ranked agents, respectively.

% {\color{red}
% The paper’s readability is a notable drawback, as it lacks self-containment. Critical details are omitted, including the representation of variables (referenced in [15]), the PropagateReset protocol [19], leader election mechanics, the synthetic random coin concept, and more, forcing the reader to independently verify aspects of the solution to confirm its feasibility and the accuracy of the stated time and space bounds.
% }

\medskip

\noindent Our protocol consists of two parts,  $\SpaceEfficientRanking$ (\cref{alg:unsafe_silent_plus}) and $\UnsafeSilentProt$ (\cref{alg:unsafe_silent}).
The former elects a unique leader and transitions to the latter, which assigns the ranks given a leader.
%The wrapper protocol is given as $\SpaceEfficientRanking$ in \cref{alg:unsafe_silent_plus}.
%It elects a unique leader and transitions to the actual ranking protocol $\UnsafeSilentProt$, given in \cref{alg:unsafe_silent}.

\begin{algorithm}[b]{$\SpaceEfficientRanking(u, v)$%
\label{alg:unsafe_silent_plus}}
${\textwidth=\columnwidth\sepline{Leader Election}}$
if $\qLE u, \qLE v \neq \bot$ then /*two leader-electing agents interact\label{ln:le-s}*/
	execute $\textsc{ElectLeader}(u, v)\label{ln:le-e}$

${\textwidth=\columnwidth\sepline{The Leader is Done}}$
if $\exists\ell \in \{u, v\}:$ ($\LeaderBit \ell = \LeaderDone \ell = 1$) then$\label{ln:le-s-s}$
    $(\qLE \ell, \LeaderDone \ell) \gets (\bot, \bot)$ /*$\ell$ forgets LE state*/
    $\WaitCount \ell \gets \ceil{\cwait \cdot \log n}$ /*$\ell$ becomes a waiting agent\label{ln:le-s-e}*/
    return$\label{ln:le-s-e-return}$

${\textwidth=\columnwidth\sepline{Propagate Start of Ranking}\iffalse/*LE agent $w$ meets non-LE agent $x$*/\fi}$
if $\qLE w \neq \bot$ and $\qLE x = \bot$ for a $\{w, x\} = \{u, v\}$ then$\label{ln:oe-s}$
    $(\qLE w, \LeaderDone w) \gets (\bot, \bot)$ /*$w$ forgets its LE state*/
    $\Phase w \gets 1$ /*$w$ becomes a phase agent\label{ln:oe-e}*/

${\textwidth=\columnwidth\sepline{Ranking Protocol}}$
if $\qLE u, \qLE v = \bot$ then /*non-leader-electing agents interact*/
    execute $\UnsafeSilentProt(u, v)$
\end{algorithm}
% The formal description of the transition between the leader-election protocol and our core ranking protocol is given in \cref{alg:unsafe_silent_plus}.
% It works as follows. 

$\SpaceEfficientRanking$ begins by electing a unique leader using the protocol from \cite{DBLP:conf/soda/GasieniecS18}:
% Note that the leader agent $\ell$ cannot be identified by storing an additional bit, as this would immediately double the state space size to $2n$ states per agent.
% Nevertheless, as we will see later, the protocol maintains (w.h.p.) the invariant that exactly one agent \emph{acts} as leader at any given time.
% During the course of the protocol, the leader agent $\ell$ will alternate between being the sole waiting agent and being the unaware leader.
% The leader agent $\ell$ will assign ranks to unranked agents unless it is waiting.
all agents start in a leader-election state.
Whenever two leader-electing agents interact, they follow the transition function of the leader election protocol (lines \ref{ln:le-s}--\ref{ln:le-e}).
The real ranking protocol is started by the leader $\ell$ as soon as $\LeaderDone{\ell}$ is set to $1$.
Then, $\ell$ immediately forgets its leader-election state (setting it to $\bot$) and sets $\WaitCount \ell = \ceil{\cwait \log n}$ (lines \ref{ln:le-s-s}--\ref{ln:le-s-e-return}).
This initiates a one-way epidemic informing all agents that Phase~1 starts (lines \ref{ln:oe-s}--\ref{ln:oe-e}).
% All agents transition out of their leader-election state and into a state in which only values of our main ranking protocol (phase, wait count, or rank) are set.
At this time the agents are w.h.p.\ in a configuration with the following property (see \cref{leader}):
a unique leader agent~$\ell$ has $\WaitCount \ell = \ceil{\cwait \log n}$, and all other agents are either in a state of $\QLE$ where $\LeaderBit v = 0$, or have $\Phase u = 1$.
We call this set of configurations $\CS$ for \underline{s}tart \underline{r}anking.

\begin{figure}
\begin{algorithm}[H]{\UnsafeSilentProt($u$, $v$)\label{alg:unsafe_silent}%:
}
if $\Phase v = \bot$ then return /*if $v$ has rank, do nothing*/

if $\Rank u \neq \bot$ then
    let $k = \Phase{v}$

    if $1 \leq \Rank{u} \leq f_k - f_{k+1}$ then /*$u$ may assign a rank to $v$\label{ln:rank-s}*/
        $(\Phase v, \Rank v) \gets (\bot, f_{k+1} + \Rank{u})$
        if $\Rank{u} < f_k - f_{k+1}$ then /*phase not done*/
            $\Rank u \gets \Rank u + 1\label{ln:rank-se}$ 
        else if $k < \lceil \log_2 (k) \rceil\label{ln:rank-e-pre}$ /*$u$ reached end of non-final phase*/
            $(\Rank u, \WaitCount u) \gets (\bot, \ceil{\cwait \cdot \log n})\label{ln:rank-e}$

    if $\Rank u = f_k$ then /*$u$ has last rank in phase $k$\label{ln:transition-s}*/
        $\Phase{v} \gets \Phase{v} + 1\label{ln:transition-e}$

if $\Phase u \neq \bot$ then$\label{ln:phase-oe-s}$
/*if both agents are phase agents, broadcast more advanced phase*/
    $\Phase u, \Phase v \gets \max\{\Phase u, \Phase v\}\label{ln:phase-oe-e}$

if $\WaitCount u \neq \bot$ then$\label{ln:dec-waitcount-s}$
/*decrement wait counter and ultimately transition to rank $1$*/
    $\WaitCount u \gets \WaitCount u - 1\label{ln:dec-waitcount-e}$
    if $\WaitCount u = 0$ then$\label{ln:waitcount-0-s}$
        $\WaitCount u, \Rank u \gets \bot, 1\label{ln:waitcount-0-e}$
\end{algorithm}

\end{figure}

%In the latter case we call it an \emph{unaware leader} $\ell$.
%Only an unaware leader is allowed to distribute ranks.

%Recall that the leader $\ell$ starts out as a waiting agent with $\WaitCount{\ell} = \ceil{\cwait \cdot \log n}$.
%Once the protocol starts ranking, it proceeds in $\ceil{\log_2 n}$ phases,
%each of which consists of a waiting and ranking subphase.

Having reached a configuration in $\CS$, the actual ranking protocol $\UnsafeSilentProt$ described in \cref{alg:unsafe_silent} takes over.
%In the following we describe our protocol ``in execution order'' even though the actual definition does not match.
Whenever the agent $\ell$ with $\WaitCount \ell \neq \bot$ interacts with a phase agent, it decrements $\WaitCount{\ell}$  (lines \ref{ln:dec-waitcount-s}--\ref{ln:dec-waitcount-e}).
As soon as this counter reaches zero, $\ell$ assigns itself rank $1$, thereby taking on the role of unaware leader (lines \ref{ln:waitcount-0-s}--\ref{ln:waitcount-0-e}).
%A sufficiently large choice of $\cwait$ ensures that this happens only after a one-way epidemic among unranked agents has run its course w.h.p.
%This ensures that rank assignment starts only when all unranked agents have converged on the correct phase value.
The now-unaware leader $\ell$ with $\Rank {\ell} = 1$ starts the ranking.
At that point each agent $v \neq \ell$ has $\Phase v = 1$ w.h.p.

The ranking is performed in $\log_2(n)$ phases as sketched above.
At the beginning of phase $k$, the leader $\ell$ has $\Rank \ell = 1$, and each agent~$v$ that is not yet ranked has $\Phase v = k$.
%In the first phase, ranks $\lceil n/2 \rceil + 1,\ldots, n$ are assigned.
Writing $f_k$ for the maximal rank assigned in phase $k$, we let $f_1 = n$ and  $f_i = \lceil f_{i-1} / 2 \rceil$ for all $i > 1$.
Note that if $n$ is a power of two, then $f_k = n / 2^{k-1}$.
In general, in the $k$th phase, ranks $f_{k+1}+1,\ldots, f_k$ are assigned.

%As an invariant, in each of these phases, there is at most one leader $\ell$.
%While this leader stores a rank $\Rank \ell \leq f_{k+1}$, it only learns that it is a leader when it interacts with another agent $v$ with $\Phase v =k$.
%The leader $\ell$ is responsible for the ranking in the actual phase~$k$.
Let us consider an interaction between $\ell$ and an unranked agent $v$ with $\Phase v = k$.
If $\Rank {\ell}=r \leq f_k - f_{k+1}$,
    then agent~$v$ sets $\Rank v=f_{k+1} + r$,
    and as long as $r < f_{k+1} - f_k$, agent~$\ell$ increments $\Rank{\ell}$ by $1$ (lines \ref{ln:rank-s}--\ref{ln:rank-se}).
Otherwise, if $r = f_k - f_{k+1}$, then $v$ received the largest rank $f_k$ of phase $k$.
If $k=\lceil \log_2 n \rceil$, this was the final phase.
$\ell$ remains with $\Rank \ell = 1$, and the protocol is silent from now on.
Otherwise, a phase transition starts:
$\ell$ forgets its rank ($\Rank \ell = \bot$) and sets $\WaitCount \ell = \ceil{\cwait \log n}$ (lines \ref{ln:rank-e-pre}--\ref{ln:rank-e}).
$\WaitCount \ell$ is decremented whenever $\ell$ meets a phase agent (line \ref{ln:dec-waitcount-e}).
When an agent $v$ with $\Phase v = k$ meets the agent with rank $f_k$, it can safely infer that phase $k$ is finished.
Thus, it increments its phase (see lines \ref{ln:transition-s}--\ref{ln:transition-e}), and the incremented phase spreads via one-way epidemic among phase agents (see lines \ref{ln:phase-oe-s}--\ref{ln:phase-oe-e}).
The leader's and phase agents' transitions are timed such that when $\WaitCount \ell = 0$, all phase agents have updated their phase w.h.p., and the leader can safely set $\Rank \ell = 1$ (lines \ref{ln:waitcount-0-s}--\ref{ln:waitcount-0-e}).

% {\color{red} 
% One of the main contributions of the paper is the state efficient ranking
%   algorithm in this section. However, I found its informal description quite
%   unclear and incomplete. For example, it was not clear to me, from this
%   description, whether there is always the same one leader in every phase
%   assigning ranks, or there may be a different leader agent in each phase.
%   The same is regarding the mechanism allowing the unaware leader to learn
%   it is a leader during an interaction. Please revise the description. I also
%   think that some graphical illustration will be of help here.
% }

\subsection{Analysis}
\label{sec:unstable_ranking_analysis}%TODO: Check whether references for label sec:unstable_ranking should actually reference label sec:unstable_ranking_analysis
In this section, we show \cref{thm:non-self-stab-ranking}.
First, we calculate the number of states used by \cref{alg:unsafe_silent_plus}.
Since these states of ranked agents, waiting agents, phase agents, and leader-electing agents are disjoint,
    the protocol uses $n + \ceil{\cwait \cdot \log n} + \ceil{\log n} + 2\lvert\QLE\rvert = n + \Theta(\log n)$ states (as $\lvert{\QLE}\rvert = O(\log\log n)$),
    as claimed.
It remains to show the correctness of the protocol and to calculate its runtime. The fact that the protocol is silent follows directly from the definition of the protocol.

The proof is split into two parts. First we show in
\cref{leader} that after $O(n\log n)$ interactions a state in $\CS$ (configurations in which the actual ranking is started) is reached.
Then we show in \cref{lemmacorrectness_non_ss} that in another $O(n^2 \log n)$ interactions each agent receives a unique rank. \Cref{thm:non-self-stab-ranking} then follows directly from \cref{leader,lemmacorrectness_non_ss}.

\begin{lemma}\label{leader}
%{\color{red} Assume $\vX_0$ is the initial configuration of the protocol.(?)}
W.h.p., there is a $\tau = O(n \log^2 n)$ such that  $\vX_{t+\tau} \in \CS$.
\end{lemma}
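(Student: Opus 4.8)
The plan is to treat the leader-election protocol of \textcite{DBLP:conf/soda/GasieniecS18}, augmented with the $\LeaderDone$ flag as in \cite{DBLP:conf/podc/BerenbrinkKR19}, as a black box, and to pinpoint the single interaction at which the first agent leaves its leader-election state. Recall that $\CS$ contains, in particular, every configuration in which one agent $\ell$ has $\WaitCount{\ell}=\ceil{\cwait\log n}$ while every other agent $v$ is leader-electing with $\LeaderBit{v}=0$. I claim it suffices to bound the time until the unique leader first executes the block in lines~\ref{ln:le-s-s}--\ref{ln:le-s-e-return} of \cref{alg:unsafe_silent_plus}: the configuration reached immediately after that interaction is of exactly this form, hence in $\CS$. (Here $\vX_t$ denotes the initial configuration of \SpaceEfficientRanking, in which every agent $v$ has $\qLE{v}=q_0\in\QLE$, $\LeaderDone{v}=0$, and $\WaitCount{v}=\Phase{v}=\Rank{v}=\bot$; this is also the configuration the system occupies right after a reset in the self-stabilizing version.)

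First I would establish an ordering property: no agent leaves its leader-election state before some agent $a$ satisfies $\LeaderBit{a}=\LeaderDone{a}=1$. Indeed, the only transitions of \cref{alg:unsafe_silent_plus} turning a leader-electing agent into a non-leader-electing one are the block in lines~\ref{ln:le-s-s}--\ref{ln:le-s-e-return}, which fires only when one of the two interacting agents has \emph{both} flags set, and the block in lines~\ref{ln:oe-s}--\ref{ln:oe-e}, which fires only when one of the interacting agents is \emph{already} not leader-electing. Since the initial configuration has every agent leader-electing, the latter block cannot be the first to fire; hence the first agent to leave leader election does so via lines~\ref{ln:le-s-s}--\ref{ln:le-s-e-return}, and only the unique leader can trigger this, being the only agent ever to carry both flags. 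In particular, until that moment every interaction between two leader-electing agents is governed purely by lines~\ref{ln:le-s}--\ref{ln:le-e} and thus runs the leader-election protocol verbatim.

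Next I would invoke the black box: w.h.p.\ within $O(n\log^2 n)$ interactions the population reaches a configuration with a unique leader $\ell$ --- that is, $\LeaderBit{\ell}=1$ and $\LeaderBit{v}=0$ for every $v\neq\ell$ --- that additionally carries $\LeaderDone{\ell}=1$; by closure of leader election, $\ell$ stays the unique leader at least until it next interacts, and the done-signal never appears on any agent before leader election has converged to this situation (and is never lowered once raised). Conditioning on this, the next interaction involving $\ell$ triggers the block in lines~\ref{ln:le-s-s}--\ref{ln:le-s-e-return}: if the partner is also leader-electing, lines~\ref{ln:le-s}--\ref{ln:le-e} run first but preserve $\LeaderBit{\ell}=\LeaderDone{\ell}=1$, and the partner --- not being the leader --- is untouched by the block and stays leader-electing. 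After the interaction $\ell$ has $\qLE{\ell}=\LeaderDone{\ell}=\bot$ and $\WaitCount{\ell}=\ceil{\cwait\log n}$, every other agent is still leader-electing and not the leader, and no phase or rank field is set anywhere --- precisely a configuration in $\CS$. The number of steps until $\ell$ next participates in an interaction is geometrically distributed with parameter $\Theta(1/n)$, hence $O(n\log n)$ w.h.p.; a union bound over the two w.h.p.\ events then gives $\tau=O(n\log^2 n)$, as claimed.

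The main obstacle is the bookkeeping around the $\LeaderDone$ flag. One must be certain that the done-signal cannot appear on any agent while leader election still supports more than one candidate leader --- otherwise a transient spurious leader could trigger lines~\ref{ln:le-s-s}--\ref{ln:le-s-e-return} prematurely and create two waiting agents --- and that neither $\LeaderBit{\ell}=1$ nor $\LeaderDone{\ell}=1$ can be undone by the leader-election transitions (lines~\ref{ln:le-s}--\ref{ln:le-e}) that may still execute before $\ell$ finally transitions out. Both facts follow from the stated properties of the black box of \cite{DBLP:conf/soda/GasieniecS18,DBLP:conf/podc/BerenbrinkKR19}, and making them precise is the only non-routine step; the remainder is the ordering argument together with an elementary waiting-time bound.
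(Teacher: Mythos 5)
Your proposal is correct and follows essentially the same route as the paper: both reduce the claim to the black-box guarantee that the leader-election protocol of \textcite{DBLP:conf/soda/GasieniecS18} (with the $\LeaderDone$ flag as in \cite{DBLP:conf/podc/BerenbrinkKR19}) w.h.p.\ produces a unique agent with $\LeaderBit{\ell}=\LeaderDone{\ell}=1$ within $O(n\log^2 n)$ interactions while all others have $\LeaderBit{v}=0$, together with the observation that no agent can leave its leader-election state before this happens. The only (harmless) difference is that you add a geometric waiting step for $\ell$'s next interaction, whereas the paper notes that the block in lines~\ref{ln:le-s-s}--\ref{ln:le-s-e-return} already fires in the very interaction in which $\LeaderDone{\ell}$ is set, so no extra wait is needed.
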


\cref{leader} is mostly a direct consequence of the correctness of the leader election protocol. %Its full proof can be found in \cref{apx:sec:unstable_ranking}.

\begin{lemma}
\label{lemmacorrectness_non_ss}
Let $c$ be a sufficiently large constant, and assume $\vX_t \in \CS$ and $\cwait \geq 24 + 48 \gamma$.
Then, there is a $\tau \in [c \cdot n^2 \log n]$ such that $\vX_{t+\tau}$ is a configuration in $\CL$.
\end{lemma}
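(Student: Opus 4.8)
The plan is to track the protocol through its phases, showing by induction on the phase index $k$ that each phase completes correctly within $O(n^2)$ interactions w.h.p., and that at the end of phase $k$ the population is in the ``clean'' configuration that serves as the inductive hypothesis for phase $k+1$ --- namely, a unique unaware leader $\ell$ with $\Rank\ell = 1$, exactly the ranks $f_{k+1}+1,\dots,n$ already distributed uniquely among $n - f_{k+1}$ agents, and every remaining agent a phase agent with $\Phase v = k+1$. The base case is the hypothesis that $\vX_t \in \CS$, together with the analysis of the waiting phase: once $\ell$ is the unique waiting agent with $\WaitCount\ell = \ceil{\cwait \log n}$, I would show that before $\WaitCount\ell$ is decremented to $0$ (which takes $\Theta(n\log n)$ interactions since $\ell$ must meet phase agents roughly $\cwait\log n$ times, and almost all $n-1$ other agents are phase agents), the one-way epidemic of $\Phase \cdot = 1$ has reached all agents w.h.p. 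This is where the constant $\cwait \ge 24 + 48\gamma$ enters: the epidemic completes in $O(n\log n)$ interactions w.h.p. (standard one-way-epidemic / coupon-collector bound, with the constant governing the failure probability $n^{-\gamma}$ or similar), and $\cwait$ must be large enough that the waiting counter outlasts it. So at the moment $\Rank\ell$ is set to $1$, all other agents have $\Phase v = 1$, establishing the $k=1$ clean configuration.

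For the inductive step I would analyze a single phase $k$ assuming the clean configuration for $k$. The key structural invariant to maintain is \emph{uniqueness}: throughout phase $k$, $\ell$ is the only agent whose rank lies in $[1, f_{k+1}]$, so when $\ell$ (with $\Rank\ell = r$) meets an unranked agent $v$ with $\Phase v = k$, it legitimately hands out rank $f_{k+1} + r$ and that rank has not been used before; meanwhile $\ell$ increments to $r+1 \le f_k - f_{k+1}$, or, if $r = f_k - f_{k+1}$, $\ell$ drops its rank and becomes a waiting agent. Simultaneously, the agent that received rank $f_k$ triggers the phase-$(k+1)$ epidemic among the phase agents. The clean configuration for $k+1$ is reached once (i) all $f_k - f_{k+1}$ ranks of phase $k$ have been assigned --- this requires $\ell$ to meet an unranked phase agent $f_k - f_{k+1}$ times, and since there are always at least $f_{k+1} \ge f_k/2 = \Omega(f_k)$ unranked phase agents until the phase ends, each such meeting takes $O(n^2/f_k)$ interactions in expectation, for a total of $O(f_k \cdot n^2 / f_k) = O(n^2)$ in expectation per phase and $O(n^2\log n)$ summed over all phases, with concentration via a Chernoff/Azuma argument over the sequence of meeting times; (ii) the phase-$(k+1)$ epidemic has propagated to all remaining phase agents; and (iii) $\WaitCount\ell$ has been decremented to $0$, at which point $\ell$ sets $\Rank\ell = 1$. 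As in the base case, the ordering of (ii) before (iii) is guaranteed w.h.p.\ by the choice of $\cwait$, since the epidemic needs $O(n\log n)$ interactions while the counter survives for $\Theta(\cwait n\log n)$.

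Summing the per-phase bounds over the $\lceil\log_2 n\rceil$ phases gives the total $O(n^2\log n)$ interactions, and after the final phase $k = \lceil\log_2 n\rceil$ every agent has a unique rank in $[n]$, i.e.\ $\vX_{t+\tau} \in \CL$; choosing the constant $c$ large enough absorbs all the hidden constants and makes the accumulated failure probability at most $n^{-\Omega(1)}$ via a union bound over the $O(\log n)$ phases and the $O(\log n)$ sub-events within each. I would also note that silence is immediate from the protocol: once the last rank $f_{\lceil\log_2 n\rceil} = \dots = 1$ situation is reached, the guard $k < \lceil\log_2(k)\rceil$ (read as the final-phase check) fails, no agent is a phase or waiting agent, and no transition fires.

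The main obstacle is the concentration argument for the assignment times in step (i): the number of interactions for $\ell$ to complete phase $k$ is a sum of geometric-like waiting times whose success probabilities \emph{change} as unranked phase agents are consumed, and one must also rule out the bad event that the phase-$(k+1)$ epidemic ``overtakes'' an unranked agent before it gets its phase-$k$ rank --- actually this cannot happen because an agent only learns phase $k+1$ after the rank-$f_k$ agent exists, which is the very last rank of the phase, so the epidemic and the final rank assignment must be handled jointly and carefully. Making the high-probability statement for \emph{all} $O(\log n)$ phases simultaneously, while keeping the failure probability polynomially small, is the delicate bookkeeping part; everything else is standard epidemic and coupon-collector analysis.
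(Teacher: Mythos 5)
Your proposal follows essentially the same route as the paper: an induction over the $\lceil\log_2 n\rceil$ phases that alternates between a waiting sub-phase (a race between the one-way epidemic and the leader's wait counter, decided by $\cwait \geq 24+48\gamma$) and a ranking sub-phase (a sum of geometric waiting times, dominated by a negative binomial since at least $f_{k+1}$ unranked phase agents remain, giving $O(n^2)$ per phase), with a union bound over the $O(\log n)$ phases. The one imprecision is that for phase $k>1$ both the epidemic and the counter decrements are confined to the $f_k - 1$ remaining phase agents, so both times scale as $\Theta(2^k n\log n)$ rather than $\Theta(n\log n)$ — the race argument and the total $O(n^2\log n)$ bound survive unchanged because $\sum_k 2^k = O(n)$.
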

%\begin{proof}

The main idea of the proof is to show that the protocol alternates between so-called initial waiting and initial ranking configurations ($\CWFWait{k}$ and $\CWFRank{k}$, defined below in \cref{def:initial_configurations}).
The proof of the lemma is divided into two cases.
In \cref{lem:unsafe_waiting_subphase} we show that, starting with an initial waiting configuration in $\CWFWait{k}$, the protocol is in an initial ranking configuration after  $O(2^k n \log n)$ interactions.
Also, starting with an initial ranking configuration $\in \CWFRank{k}$, the protocol is in an initial waiting configuration after $O(n^2 + 2^k n\log n)$ interactions (\cref{lem:unsafe_ranking_subphase}).
From this \cref{lemmacorrectness_non_ss} follows via induction over the number of phases. Note that the number of interactions in a waiting phase increases with $k$.
This is because the one-way epidemics are restricted to the unranked agents.
We prove \cref{lem:unsafe_waiting_subphase,lem:unsafe_ranking_subphase} in the remainder of this section.%, with the full proof of \cref{lemmacorrectness_non_ss} given in \cref{apx:sec:unstable_ranking}.
%\medskip
To state them, %\cref{lem:unsafe_waiting_subphase,lem:unsafe_ranking_subphase},
    we need the following technical definition, with the largest rank in phase~$k$, $f_k$, defined as above as $f_1 = n$ and $f_i = \lceil f_{i-1} / 2 \rceil$ for $i > 1$.

\begin{definition}\label{def:initial_configurations}
    Let $k \in [\ceil{\log_2 n}]$ be a phase index.
    \begin{enumerate}%[label=(\alph*),parsep=0pt]
        \item The set of \emph{initial waiting configurations} for phase $k=1$, called $\CWFWait{1}$ is $\CS$.
        \item The set of \emph{initial waiting configurations} for phase $k>1$, called $\CWFWait{k}$, is the set of configurations with
        \begin{enumerate}%[label=(\roman*)]
            \item a unique waiting agent $\ell$, which has $\WaitCount \ell = \ceil{\cwait \cdot \log n}$.
            \item for each $i \in [f_k + 1, n]$, a unique agent $u_i$ with $\Rank{u_i} = i$, and these are the only ranked agents,
            \item $\Phase {w} \leq k$ for all phase agents,
            \item no leader-electing agents.
        \end{enumerate}
        \item The set of \emph{initial ranking configurations} called $\CWFRank{k}$ for phase $k$
            is defined as the set of configurations with \begin{enumerate}%[label=(\roman*)]
                \item a unique unaware leader $\ell$ having $\Rank{\ell} = 1$,
                \item for each $i \in [f_{k-1} + 1, n]$ a unique agent $u_i$ having $\Rank{u_i} = i$, and these and the unaware leader are the only ranked agents,
                \item $\Phase{w} = k$ for all phase agents,
                \item no leader-electing or waiting agents.
            \end{enumerate}
    \end{enumerate}
\end{definition}

\begin{lemma}\label{lem:unsafe_waiting_subphase}
    For any $k$ with $1 \leq k \leq \ceil{\log_2 n}$ and any $\gamma > 0$ the following statement holds.
    Assume that $\vX_t$ is an arbitrary configuration in $\CWFWait{k}$ and $\cwait \geq 24 + 48 \gamma$.
    Then with probability of at least $1 - 5 n^{-\gamma}$ there is a $\tau \leq \bc{\cwait + \gamma} 2^k \cdot n \log n$ such that $\vX_{t+\tau} \in \CWFRank{k}$.
   % To be precise,
  %  \[\ProbCond{\tnexthit{t}{\vX_{t'} \in \CWFRank{k}} \leq t + \bc{\cwait + \gamma} \cdot 2^k \cdot n \log n}{\vX_t \in \CWFWait{k}} \geq 1 - 4 n^{-\gamma}.\]
\end{lemma}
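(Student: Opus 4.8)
The plan is to track the run starting from $\vX_t \in \CWFWait{k}$ through two intertwined processes: the one-way epidemic that propagates phase $k$ among the unranked agents, and the countdown of $\WaitCount{\ell}$ on the unique waiting agent $\ell$. First I would establish that, as long as $\ell$ is a waiting agent, the protocol's transitions only ever (i) increment $\Phase{w}$ for phase agents up to $k$ via one-way epidemics, and (ii) decrement $\WaitCount{\ell}$ whenever $\ell$ meets a phase agent; in particular no ranks change and the set of ranked agents $\{u_i : i \in [f_k+1,n]\}$ stays fixed. This structural observation pins down exactly what must happen: the system must reach a configuration with $\Phase{w} = k$ for all $f_k$ phase agents and $\WaitCount{\ell} = 0$, at which point line~\ref{ln:waitcount-0-e} fires and $\ell$ becomes the unaware leader with $\Rank{\ell}=1$, giving a configuration in $\CWFRank{k}$ (identifying $f_{k-1} = f_k$-indexing: note $[f_{k-1}+1,n]$ in $\CWFRank{k}$ matches $[f_k+1,n]$ when we are entering phase $k$ — this bookkeeping needs a careful check against \cref{def:initial_configurations}).

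The core of the argument is then a race: I need the epidemic to finish \emph{before} the counter hits zero, and I need the counter to actually reach zero. For the epidemic: among the $f_k \le n$ unranked agents, phase $k$ spreads by a standard one-way (push) epidemic, except that interactions are on the full population of $n$ agents, so the effective rate is slowed by a factor $\Theta(f_k / n) \ge \Theta(2^{-k})$ relative to an epidemic on $f_k$ agents alone. A Coupon-Collector / epidemic tail bound (the kind already implicit in the paper's machinery) gives that w.h.p. the epidemic completes within $O(2^k \cdot n \log n)$ interactions, with the constant controllable; call this event $A$, failing with probability $\le n^{-\gamma}$. For the counter: $\ell$ meets a phase agent with probability $\Theta(f_k/n)$ per interaction (since initially almost all non-$\ell$, non-$u_i$ agents are phase agents, and this only grows as the epidemic does not reduce the phase-agent count), so by a Chernoff bound $\WaitCount{\ell}$ is decremented $\ceil{\cwait \log n}$ times within $O(\cwait \cdot 2^k \cdot n \log n)$ interactions w.h.p.; call this event $B$. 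Crucially I must also show the counter does \emph{not} reach zero too early — i.e., within the first $c \cdot 2^k n \log n$ interactions needed for the epidemic, $\ell$ meets phase agents strictly fewer than $\ceil{\cwait \log n}$ times; this is again Chernoff, and it is where the hypothesis $\cwait \ge 24 + 48\gamma$ is spent: the constant must be large enough that the expected number of such meetings over the epidemic window is comfortably below $\cwait \log n$ with the right slack. Union-bounding the (at most five) failure events over $A$, $B$, the "not-too-early" event, and the correctness of the underlying epidemic concentration gives the claimed $1 - 5n^{-\gamma}$.

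The main obstacle I expect is the simultaneous two-sided control of the wait counter: it must be shown both that $\ell$ meets phase agents often enough to eventually zero out the counter, \emph{and} rarely enough that this does not happen before the epidemic has fully propagated phase $k$. These two requirements push $\cwait$ in opposite directions, and getting a single constant that works — while also absorbing the $\gamma$-dependence uniformly in $k$ and $n$ — is the delicate part; the factor $2^k$ appears on both sides because the rate of relevant interactions scales like $f_k/n \approx 2^{-k}$, so the window length and the counter threshold must be balanced carefully. A secondary technical point is confirming that the phase-agent population never drops during a waiting sub-phase (no agent leaves the "phase agent" class until $\ell$ starts assigning ranks in the next ranking sub-phase), which is what keeps the meeting probability bounded below by $\Theta(2^{-k})$ throughout; this should follow directly from inspecting \cref{alg:unsafe_silent}, but it must be stated explicitly since the whole timing argument rests on it.
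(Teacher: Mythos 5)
Your plan matches the paper's proof essentially step for step: the paper formalizes the ``race'' as a two-sided negative-binomial tail bound on the time $T_{\mathrm{wait}}$ for the counter to hit zero (decrements occur at rate $(f_k-1)/(n(n-1))$ since the phase-agent population is fixed during the waiting sub-phase) versus a one-way-epidemic upper tail on the phase-$k$ spread among those $f_k-1$ agents, with $\cwait \geq 24+48\gamma$ spent exactly where you say --- forcing the epidemic's upper bound below the counter's lower tail --- and a union bound over the same handful of events (plus a separate treatment of $k=1$, where the epidemic is the start-of-ranking broadcast and one must also account for the leader election possibly electing more than one leader). One small correction to your intuition: only the ``not-too-early'' direction actually constrains $\cwait$ (from below), since the lemma's time bound $\bc{\cwait+\gamma}2^k n\log n$ already scales with $\cwait$, so the two requirements do not in fact push $\cwait$ in opposite directions.
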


\newcommand{\Twait}{{T_{\mathrm{wait}}}}
\newcommand{\TOWE}{{T_{\mathrm{\std{O}WE}}}}

\begin{proof}
%{\color{red} this proof is still very long and wordy..., and i assume hard to understand}

    \emph{Case $k > 1$:}\quad
    Let $\ell$ be the unique waiting agent at the time $t$ where $\vX_t \in \CWFWait{k}$,
        and let $\Twait$ be defined such that $t + \Twait$ is the first time after $t$ at which $\ell$ becomes ranked again.
    At all times in the interval $[t, t+\Twait]$
        there are $n - 1 - (n-f_k) = f_k - 1$ phase agents. This holds
        since there are $n - f_k$ ranked agents and one waiting agent (agent $\ell$) in the population, the other agents are phase agents.
        %Also no unaware leader is present in this time frame that could rank a phase agent.
    Since $\WaitCount \ell$ is decremented every time $\ell$ meets a phase agent,
        after
        %$\WaitCountT t \ell =
        $\ceil{\cwait \log n}$ such meetings $\ell$ becomes ranked again.
    Thus, $\Twait$ has the negative binomial distribution $\NegBin\bc{\ceil{\cwait \log n}, \frac{f_k - 1}{n(n-1)}}$.
    From the upper and lower tail bounds (see \iflong\cref{lem:negbin_tailbound} in \cref{apx:tail-bounds}\else full version\fi)
    we get
    \begin{align}
    \label{eq_lemma6_1}
    \Prob{\Twait \leq \frac{n^2}{f_k - 1} \cdot \bc{\cwait + \gamma} \log n} & \geq 1 - n^{-\gamma}
    ~ \text{and} \\
    \Prob{T_{\mathrm{wait}} > \frac{1}{4} \cdot \frac{n(n-1)}{f_k - 1} \cdot \cwait \log n} & \geq 1 - n^{-\cwait/6}.
    \label{eq_lemma6}
    \end{align}
    Taking into account that $f_k \geq 2$ for all $k$,
        we have $1/(f_k - 1) \leq 2/f_k \leq 2^k / n$,
        which together with the upper bound on $\Twait$ above yields
        $\Twait \leq  \bc{\cwait + \gamma} \cdot 2^k \cdot n \log n$.
        %the one claimed for the time
        %to reach $\CWFRank{k}$.
        %in the lemma statement.

    Next we prove that with probability at least $1 - 3n^{-\gamma}$,
        %the protocol is indeed in $\CWFRank{k}$
        at time $t + \Twait$
    %This is the case if at that time,
    all phase agents $w$ have $\Phase w =k$
        (remember, no phase agent can become ranked $t + \Twait$).
    Each phase agent $w$ switches $\Phase w$ to $k$ by being prompted via a one-way epidemic spread among the phase agents. The initiator of this epidemic is the ranked agent $v$ with $\Rank v = f_k$.

    Consider now a modified protocol in which, as long as the epidemic did not reach all phase agents, $\ell$ is not allowed to assign any rank to the agents. Let
    %For the number of phase agents to decrease during the epidemic,
    %    the waiting agent would need to become ranked again fore the epidemic is done.
    %So consider an identity coupling with a version of the protocol where no agent which is a phase agent at time $t$ becomes ranked,
        %and let
        $\TOWE$ be defined such that $t + \TOWE$ is the first time after $t$ at which the one-way epidemic
        reaches all phase agents.
        %among phase agents is done in that protocol.
    Clearly, if $\TOWE < \Twait$, then the original and the modified protocol behave identically.
    %(because the case where the coupled protocol diverges never takes place),
        %so it is enough to show that $\Prob{\TOWE < \Twait} \geq 1 - 4n^{-\gamma}$.
For the modified protocol the upper tail bound \iflong (see \cref{lem:subset_broadcast_upper_tail} in \cref{apx:tail-bounds} with $m = f_k$)\else (see full version) \fi gives
\begin{align}
        \Prob{\TOWE > 3 \frac{n^2}{f_k} \cdot \bc{\log(f_k) + 2\gamma \log n}} \leq 2n^{-\gamma}.
        \label{eq_lemma6_3}
        \end{align}
        Assuming $n \geq 2$ the above bound implies $\TOWE \leq 6 \cdot \frac{n(n-1)}{f_k - 1} \cdot \bc{2\gamma + 1} \log n$.
    %As $\Twait \sim \NegBin\bc{\ceil{\cwait \log n}, \frac{f_k - 1}{n(n-1)}}$,
        %\cref{lem:negbin_tailbound} gives the lower tail bound
        %\[\Prob{T_{\mathrm{wait}} > \frac{1}{4} \cdot \frac{n(n-1)}{f_k - 1} \cdot \cwait \log n} \geq 1 - n^{-\cwait/6}.\]
    The assumption $\cwait \geq 24 + 48 \gamma$
        ensures that this upper bound is not larger than the lower bound on $\Twait$ given in \cref{eq_lemma6}. Note that $\cwait$ is also large enough to ensure $\cwait/6 \geq \gamma$ and by union bound \(\Prob{\TOWE \leq \Twait} \geq 1 - 3n^{-\gamma},\)
        which concludes the proof of the case $k>1$.

    %This lower bound on $\cwait$ is also large enough to ensure $\cwait/6 \geq \gamma$,
     %   so that by union bound, in fact,
      %  \(\Prob{\TOWE < \Twait} \geq 1 - 3n^{-\gamma},\)
      %  so that we have the lemma for the case $k > 1$ with a slack of $n{-\gamma}$.
\emph{Case $k = 1$:}\quad
For the analysis, we consider the following modified protocol. If two or more leaders are elected by the leader election protocol (lines \ref{ln:le-s}--\ref{ln:le-e} in \cref{alg:unsafe_silent_plus}), the agent $u$ which sets $\LeaderDone{u}$ to $1$ first, remains the leader, while any other elected leader $v$ loses its leader role instead of setting $\LeaderDone{v}$ to $1$. 
Clearly, if in the original protocol one leader is elected, then the two protocols behave identically. 
However, in the modified protocol we enforce that in fact at most one leader is elected. 
Apart from the modification described above, the two protocols have the same transition function.

Let $\ell$ be the unique leader elected in the modified protocol. Then, $\ell$ starts a one-way epidemic, which is spread among the whole population. Similarly to the case $k>1$, define $\Twait$ such that $t + \Twait$ is the first time at which $\ell$ becomes ranked. Also, let $\TOWE$ be defined such that $t + \TOWE$ is the first time at which the one-way epidemic reaches all agents.
We know that the waiting agent $\ell$ decrements its wait counter when interacting with any of the other $n-1 = f_k - 1$ agents. Thus, also in the case $k=1$, we obtain the same upper and lower tail bounds on $\Twait$ as in \cref{eq_lemma6_1,eq_lemma6}. For
$\TOWE$ we obtain the same bound as in \cref{eq_lemma6_3}. Thus, in the modified protocol at time $t+\Twait$ with $\Twait \leq (\cwait+\gamma) \cdot n \log n$ we have $\vX_{t+\Twait} \in \CWFRank{1}$ with probability $1-4 n^{-\gamma}$. Since the original protocol elects a single leader with probability at least $1-n^{-\gamma}$ applying union bound concludes the proof.
\end{proof}

\begin{lemma}\label{lem:unsafe_ranking_subphase}
    For any $k$ with $2 \leq k \leq \ceil{\log_2 n}$ and $\gamma > 0$ the following statement holds.
    Assume that $\vX_t$ is an arbitrary configuration in $\CWFRank{k}$.
    Then with probability at least $1 - n^{-\gamma}$ there is a $\tau \leq 2n^2 + 2\gamma 2^k n \log n$ such that $\vX_{t+\tau} \in \CWFWait{k+ 1}$ when $k < \ceil{\log_2 n}$,
        or $\vX_{t+\tau} \in \CL$ when $k = \ceil{\log_2 n}$.
\end{lemma}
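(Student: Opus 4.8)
The plan is to track, starting from an initial ranking configuration in $\CWFRank{k}$, the two processes that together finish phase $k$: (i) the unaware leader $\ell$ assigning, one at a time, the ranks $f_{k+1}+1, \ldots, f_k$ to unranked phase agents, and (ii) the one-way epidemic that, once the agent holding rank $f_k$ appears, informs all remaining phase agents to advance to phase $k+1$ (or, if $k = \ceil{\log_2 n}$, simply terminates the protocol). First I would argue that from a $\CWFRank{k}$ configuration the protocol behaves exactly as intended: $\ell$ is the only ranked agent with rank $\le f_{k+1}$, so by the guard in line~\ref{ln:rank-s} it is the unique agent able to hand out a rank to a phase agent, and the ranks it hands out are precisely $f_{k+1}+r$ for $r = 1, \ldots, f_k - f_{k+1}$; there are exactly $f_{k+1} = \ceil{f_k/2}$ unranked phase agents (besides, possibly, none already advanced since $\Phase w = k$ for all of them), matching the number of ranks to assign, so the rank-assignment phase completes with $\ell$ ending on $\Rank \ell = \bot$, $\WaitCount \ell = \ceil{\cwait \log n}$ and every agent $i \in [f_{k+1}+1, f_k]$ holding rank $i$ (lines~\ref{ln:rank-e-pre}--\ref{ln:rank-e}).

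The timing argument has two parts. For the rank-assignment duration: each of the $f_{k+1}$ assignments requires $\ell$ to meet one specific unranked phase agent; since there are at least $\ge f_{k+1}/2 \ge f_k/4 \ge n/2^{k+1}$ such agents at any point, each assignment step is dominated by a geometric random variable with success probability $\Omega\!\left(\tfrac{f_{k+1}}{n^2}\right)$... more carefully, the $j$-th remaining assignment succeeds with probability $\tfrac{2(f_{k+1}-j+1)}{n(n-1)}$ (meeting $\ell$ with any of the $f_{k+1}-j+1$ still-unranked phase agents), so the total time is a sum of independent geometrics, with expectation $\sum_{j=1}^{f_{k+1}} \tfrac{n(n-1)}{2j} = \Theta(n^2 \log f_{k+1}) = O(n^2 \log n)$. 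A Chernoff-type bound for sums of geometrics (the coupon-collector tail) gives that this is at most $2n^2$ (say) with probability $1 - n^{-\gamma}$, absorbing the $\log n$ factor into the constant for $n$ large; this is where the $2n^2$ term in the bound comes from. For the epidemic: once the agent with rank $f_k$ exists, the one-way epidemic spreads the incremented phase among the $\le f_k$ remaining phase agents; by the same subset-broadcast upper tail bound used in \cref{lem:unsafe_waiting_subphase} (\cref{lem:subset_broadcast_upper_tail} with $m \le f_k$), this takes $O\!\left(\tfrac{n^2}{f_k}(\log f_k + \gamma \log n)\right) = O(2^k n \log n \cdot \gamma)$ steps with probability $1 - n^{-\gamma}$, contributing the $2\gamma 2^k n \log n$ term. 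Adding the two (and noting the epidemic can be run concurrently with, or sequentially after, the tail of the rank assignment — the stated bound is a sum, so sequential suffices) gives $\tau \le 2n^2 + 2\gamma 2^k n \log n$, and a union bound over the two failure events gives the claimed probability (after possibly adjusting constants so the two bad events each have probability $\le \tfrac12 n^{-\gamma}$).

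Finally I would check that the resulting configuration is in the right target set. After the rank assignment, $\ell$ is a waiting agent with $\WaitCount \ell = \ceil{\cwait \log n}$, the ranked agents are exactly those with ranks in $[f_{k+1}+1, n]$ (the old ones from $[f_{k-1}+1,n]$ — wait, note $f_k \le f_{k-1}$, so actually $[f_k+1,n]$ were there before and now $[f_{k+1}+1, f_k]$ are added, giving $[f_{k+1}+1,n]$), all other agents are phase agents with $\Phase w \le k+1$ (those reached by the epidemic have $k+1$, the rest still $k$), and there are no leader-electing agents — this is exactly $\CWFWait{k+1}$ per \cref{def:initial_configurations}. In the case $k = \ceil{\log_2 n}$ we have $f_{k+1} = 1$, so the last rank assigned is rank $1$ going to the final unranked agent... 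I need to be careful here: actually when $k$ is the final phase, $f_k - f_{k+1} = f_k - 1$ and $\ell$ keeps $\Rank \ell = 1$ rather than resetting (the guard in line~\ref{ln:rank-e-pre} fails since $k = \ceil{\log_2 n}$), so every agent ends with a distinct rank in $[1,n]$, i.e.\ $\vX_{t+\tau} \in \CL$, and no waiting phase is entered — only the rank-assignment timing bound is needed, which is well within $2n^2 + 2\gamma 2^k n \log n$.

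\textbf{Main obstacle.} The delicate point is \emph{not} the timing tail bounds (those are routine given the quoted lemmas) but establishing that the protocol stays ``on track'' throughout the interval: one must verify that no spurious agent ever satisfies the rank-assignment guard of line~\ref{ln:rank-s}, that the epidemic initiator (the agent with rank $f_k$) is unique and correctly triggers line~\ref{ln:transition-e}, and that the $\WaitCount \ell$ counter, although it may start being decremented before the epidemic completes, does not hit $0$ prematurely — this last point is exactly what the lower tail bound on $\Twait$ in \cref{lem:unsafe_waiting_subphase} guards against, but here it must be re-examined because the waiting phase for phase $k+1$ is entered at a time determined by the phase-$k$ dynamics. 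I would handle this by a careful invariant-preservation argument over the interval, isolating the combinatorial structure of $\CWFRank{k}$ configurations and showing it is maintained (up to the monotone progress of rank assignments and the epidemic) until the transition.
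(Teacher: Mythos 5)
Your structural setup (tracking the leader's successive rank assignments and checking the resulting configuration is in $\CWFWait{k+1}$ or $\CL$) matches the paper's, but the central timing computation contains a genuine error. You model each assignment as requiring $\ell$ to meet ``one specific unranked phase agent,'' with the $j$-th remaining assignment succeeding with probability $\frac{2(f_{k+1}-j+1)}{n(n-1)}$, so that the pool of eligible partners shrinks all the way to $1$ and the expected total time is the harmonic sum $\Theta(n^2\log f_{k+1})$. That model is wrong, and the conclusion you draw from it --- that the total is ``at most $2n^2$ with probability $1-n^{-\gamma}$, absorbing the $\log n$ factor into the constant'' --- is impossible: a nonnegative random variable with expectation $\Theta(n^2\log n)$ cannot be $O(n^2)$ w.h.p. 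The correct observation (and the paper's) is that the leader hands the next rank to \emph{whichever} phase agent it meets, and during phase $k$ the number of phase agents only decreases from $f_k-1$ down to $f_{k+1}-1$, never to $1$. Hence each assignment time is $\Geom\bc{\frac{f_k-i}{n(n-1)}}$ with success probability always at least $\frac{f_{k+1}}{n^2}\geq \frac{n2^{-k}}{n^2}$, the sum of the $f_k-f_{k+1}$ assignment times is dominated by $\NegBin\bc{f_k-f_{k+1},\frac{f_{k+1}}{n^2}}$, and the negative-binomial upper tail gives $\frac{2n^2}{f_{k+1}}\bc{f_k-f_{k+1}+\gamma\log n}\leq 2n^2+2\gamma 2^k n\log n$. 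In particular the $2\gamma 2^k n\log n$ term is the \emph{deviation} term of that tail bound, not an epidemic time. This distinction is not cosmetic: it is exactly why each phase ranks only half the remaining agents, so that a phase costs $O(n^2)$ rather than $O(n^2\log n)$ in expectation and the $\ceil{\log_2 n}$ phases sum to $O(n^2\log n)$ overall; your accounting would give $O(n^2\log^2 n)$ in total.

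A secondary issue is that you budget time for the one-way epidemic spreading phase $k+1$ and attribute the second term of the bound to it. No epidemic needs to complete in this lemma: $\CWFWait{k+1}$ only requires $\Phase{w}\leq k+1$ for all phase agents, which holds the instant the last rank $f_k$ is assigned (no agent can reach phase $k+2$ before rank $f_{k+1}$ exists). The spread of phase $k+1$, and the race between it and the leader's wait counter, is the content of the \emph{waiting} subphase, \cref{lem:unsafe_waiting_subphase}; your ``main obstacle'' paragraph about $\WaitCount{\ell}$ expiring prematurely likewise belongs to that lemma, not this one. You do half-notice both points in your final paragraph, but the quantitative argument as written does not establish the claimed bound.
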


\begin{proof}
    For $1 \leq i \leq f_k - f_{k+1}$, let $C_{k,i}$ be the set of configurations with\begin{enumerate}%[label=(\roman*),noitemsep]
        \item a unique unaware leader $\ell$, which has $\Rank \ell = i$,
        \item for each $i \in [f_{k+1} + 1, f_{k+1} + i - 1] \cup [f_k + 1, n]$ a unique agent $u_i$ with $\Rank {u_i} = i$, and these and the unaware leader are the only ranked agents,
        \item $\Phase k$ for all phase agents, and
        \item no leader-electing agents
    \end{enumerate}
    By definition of the protocol,
        in any of these configurations,
        the only kind of interaction which will change the configuration
        is that between the unique unaware leader
            and one of the phase agents (lines \ref{ln:rank-s}--\ref{ln:rank-e} in \cref{alg:unsafe_silent}).
    Assume for now that $\vX_{t'}$ is an arbitrary configuration in $C_{k,i}$.
    Then if $1 \leq i < f_k - f_{k+1}$,
        the next configuration not in $C_{k,i}$ is in $C_{k,i+1}$.
    If $i = f_k - f_{k+1}$ and $k < \ceil{\log_2 n}$,
        the next configuration not in $C_{k,i}$ is in $\CWFWait{k+1}$;
        and if $i = f_k - f_{k+1}$ and $k = \ceil{\log_2 n}$,
        the next configuration not in $C_{k,i}$ is a valid ranking, i.e., in $\CL$.

    Now in $C_{k,i}$ there is one unaware leader
        and $n - 1 - (n-f_k) - (i-1) = f_k - i$ phase agents,
        so in total $f_k - i$ out of $n(n-1)$ possible ordered interaction pairs will lead to the next class of configurations.
        Let $T_{k,i}$ denote the number of interactions between the first time step at which the configuration is
        in $C_{k,i}$ and the first time step at which the configuration leaves $C_{k,i}$.
    Then, $T_{k,i}$
        has distribution $\Geom\bc{\frac{f_k - i}{n(n-1)}}$,
        and the $T_{k,i}$ are independent as the corresponding time steps are disjoint.

In the following analysis we write
        $X \preceq Y$ when $X$ is stochastically dominated by $Y$.
    The total time to reach $\CWFWait{k+1}$ if $k < \ceil{\log_2 n}$, or $\CL$ if $k = \ceil{\log_2 n}$)
        from $\CWFRank{k}$ is the sum of independent geometric random variables.
Since $\Geom(p) \preceq \Geom(q)$ for $p \geq q$
        and by definition of the negative binomial distribution,
        for all $k$,
        \begin{align*}\sum_{i=1}^{f_k - f_{k+1}} T_{k,i} &\sim \sum_{i=1}^{\mathclap{f_k - f_{k+1}}} \Geom\bc{\frac{f_k - i}{n(n-1)}} \preceq \sum_{i=1}^{\mathclap{f_k - f_{k+1}}} \Geom\bc{\frac{f_{k+1}}{n^2}}\\
        & \sim \NegBin\bc{f_k - f_{k+1}, \frac{f_{k+1}}{n^2}},\end{align*}
        with the geometric random variables in the sums being independent.
    By stochastic domination
        and \iflong the \else a \fi tail bound on negative binomial random variables \iflong(see \cref{lem:negbin_tailbound} in \cref{apx:tail-bounds})\fi 
        we thus have
        \[\Prob{\sum_{i=1}^{f_k - f_{k+1}} T_{k,i} \leq \frac{2n^2}{f_{k+1}} \left(f_k - f_{k+1} + \gamma \log n\right)} \geq 1 - n^{-\gamma}.\]
    Since $f_{k+1} \geq n \cdot 2^{-k}$ and $f_k - f_{k+1}  \leq n 2^{-k}$,
        this is further upper-bounded by
        \[\frac{2n^2}{n 2^{-k} }\bc{n 2^{-k} + \gamma \log n}
            %= 4n \bc{n + \gamma 2^k \log n}
            = 2n^2 + 2\gamma \cdot 2^k \log n. \qedhere\]
        %which concludes the proof.
\end{proof}

\section[Self-Stabilizing Ranking]{Self-Stabilizing Ranking}% in $n + O(\log^2 n)$ States}
\label{sec:ss_protocol}

In this section we present our self-stabilizing ranking protocol called $\StableRanking$,  which is based on $\Ranking$.  
Starting from \emph{any} possible configuration, it ranks all agents in $O(n^2\log n)$ interactions, w.h.p.
However, self-stabilization comes at a cost: it increases the memory complexity by an additive $O(\log^2 n)$ number of states.
 The core idea behind the protocol is to run the ranking protocol from the previous section and reset it whenever we detect an error, i.e., a doubly assigned label, or if the protocol does not make any progress.
To this end, $\StableRanking$ is divided into three \emph{subprotocols}, $\FastLeaderElect$, 
$\ResetProt$, and $\Ranking+$.
Both $\FastLeaderElect$ and  $\Ranking+$ are randomized and rely on a synthetic random coin (cf.\ \cite{DBLP:conf/soda/BerenbrinkKKO18}). 
In our state space, the coin is implemented under the variable $\Coin{v} \in \{0,1\}$ that is flipped every time the agent is activated.
Intuitively speaking, in each iteration, the coin shows \emph{heads} ($\Coin{v} = 1$) with probability roughly ${1}/{2}$, and \emph{tails} ($\Coin{v} = 0$) otherwise after \emph{warming up} for $O(n \log \log n)$ steps.
The protocol $\StableRanking$ uses the state space shown in \cref{alg:stable_ranking_full}.

The remainder of this section is structured as follows. We first describe the subprotocols $\ResetProt$, $\FastLeaderElect$, and $\Ranking+$ in  \cref{sec:resetprot_description,sec:simpleleader,sec:ranking_plus_description}, and then we present the analysis of our algorithm in \cref{sec:ss_ranking_analysis}.

\begin{figure*}
\let\columnwidth\textwidth

\newcommand*{\graybox}[2]{\raisebox{0.5em}{\colorbox{gray}{\parbox{#1}{\centering\color{white}#2}}}}

\def\preAlgo{
This protocol uses the following set of states, where $\uplus$ is the disjoint union of two sets:%TODO \todo{todo: uplus introduce at first use.}
\vspace{-1ex}
\setlength{\fboxsep}{2pt}
\[
Q =
\arraycolsep=0.5pt
\begin{array}{@{}ccccccccccccccccccc@{}}
  \graybox{3em}{\mathstrut\textsc{Rank}} & &
  \graybox{3em}{\mathstrut\textsc{Coin}} & &
  \graybox{10em}{\mathstrut\textsc{PropagateReset}} & &
  \graybox{6em}{\mathstrut\textsc{FastLE}} & &
  \graybox{10em}{\mathstrut\textsc{Ranking+}} \\
\underbrace{[n]}_{\texttt{rank}} & \uplus &
\underbrace{\{0, 1\}}_{\texttt{coin}} & \times \big( &
\underbrace{[\Theta(\log n)]}_{\texttt{resetCount}}  \times \underbrace{[\Theta(\log n)]}_{\texttt{delayCount}}
%}
& \uplus &
\underbrace{[\Theta(\log^2 n)]}_{Q_\mathrm{SLE}} & \uplus &
\underbrace{[\Theta(\log n)]}_{\texttt{aliveCount}}  \times
\underbrace{[\Theta(\log n)]}_{\substack{\text{non-$\texttt{rank}$ state(s) } \\ \text{from \Ranking}}}
\big) \\
&&&&Q_{\mathsf{Reset}}&&Q_{\mathsf{LeaderElect}} && Q_{\mathsf{Main}}
\end{array}
\]}

\begin{algorithm}[H]{\StableRanking($u$, $v$).\label{alg:stable_ranking_full}}
execute $\PropagateReset(u, v)$ /*if applicable, propagate resets and transition into leader election*/

if $\LeaderDone u \neq \bot \neq \LeaderDone v$ then
    execute $\textsc{ElectLeader}(u, v)$ /*elect leader, handle leader's transition to waiting*/

if $\LeaderDone w \neq \bot$ and $X(x) \in \QM$ for a $\{w, x\} = \{u, v\}$ /*agent executing leader election meets agent executing main protocol*/
    set all of $w$'s state except $\Coin w$ to $\bot$ /*$w$ forgets its leader election states and becomes a phase agent*/
    $\Phase w, \LivenessCount w \gets 1, \Lmax$

if $X(u) \in \QM$ and $X(v) \in \QM$ /*when two agents having main states interaction, execute the extended ranking protocol*/
    execute $\Ranking$+($u$, $v$)

if $\Coin{v} \neq \bot$ then /*toggle $v$'s coin if it has one*/
    $\Coin{v} \gets 1 - \Coin{v}$
\end{algorithm}
\end{figure*}

\subsection{$\ResetProt$}
\label{sec:resetprot_description}

First, we will briefly discuss the resetting protocol $\ResetProt$ from \cite{DBLP:conf/podc/BurmanCCDNSX21}, which we use here in an almost black-box-like manner.
$\ResetProt$ is responsible for restarting the protocol whenever an error in either $\Ranking+$ or $\FastLeaderElect$ occurs.
To be precise, whenever an agent detects an error, $\ResetProt$ resets the agents to a configuration where all agents start to elect a leader.
Each agent $v \in V$ has two counters $\ResetCount{v} \in [0,\Rmax]$ and $\DelayCount{v} \in [0,\Dmax]$ where $\Rmax,\Dmax \in \Theta(\log n)$.
We will fix the values of $\Rmax$ and $\Dmax$ in our analysis.

The protocol works as follows.
Based on the counters $\ResetCount{v}$ and $\DelayCount{v}$, the agents are divided into three classes.
If \mbox{$\ResetCount{v} = \bot$}, we say the agent is \emph{computing} (meaning it executes $\FastLeaderElect$ or $\Ranking+$),
and if $\ResetCount{v} > 0$, we say the agent is \emph{propagating}.
Finally, if $\DelayCount{v} > 0$ and $\ResetCount{v} = 0$, we say the agent is \emph{dormant}.

If a computing agent $v$ has to restart the protocol, it sets $\ResetCount{v} = \Rmax$  and all its other variables except for $\Coin{v}$ to $\bot$.
If $\Coin{v} \neq \bot$, the value of $\Coin{v}$ is maintained; otherwise, we initialize $\Coin{v}$ to $0$.
We call $v$ the  \emph{triggered} agent, a configuration containing a triggered agent a \emph{triggered configuration} and the set of all of these configurations $\CT$.
We will write $\TriggerReset(v)$ for a routine triggering a reset for $v$ as described.

A triggered agent $v$ starts a one-way epidemic that will eventually turn all computing agents into dormant agents as follows.
If a propagating agent $v$ interacts with computing agent $w$, it decreases its $\ResetCount{v}$ by $1$ and $w$ becomes propagating by setting $(\ResetCount{w}, \DelayCount{w}) = (\ResetCount{v},\Dmax)$ and all other values (except the coin) to $\bot$.
If two propagating agents $v$ and $w$ interact, they set $\ResetCount{v}$ and $\ResetCount{w}$ to $\max\{\ResetCount{v}, \ResetCount{w}\}-1$ (unless both are $0$).
Finally, if a propagating agent $v$ interacts with dormant agent, it decreases $\ResetCount{v}$ by $1$ and if a dormant agent $v$ interacts with an arbitrary agent, it decreases $\DelayCount{v}$ by $1$.

As soon as an agent $w$ reaches $\DelayCount{w} = 0$, it forgets the state associated with $\PropagateReset$ and initializes the state of the leader election protocol, where the value of $\Coin{w}$ is maintained.

%\paragraph{The $\PropagateReset$ Protocol}

% (it is waiting for a delay to allow the entire population to become dormant before they start waking up, this prevents an agent from waking up multiple times during one reset).

%
% Likewise, we will refer to a configuration as \emph{fully / partially} propagating (resp.\ dormant, computing, triggered) if all / some agents are propagating (resp.\ dormant, computing, triggered).

\subsection{$\FastLeaderElect$}\label{sec:simpleleader}

Unfortunately, we cannot use the leader election protocol by \textcite{DBLP:conf/soda/GasieniecS18} for our self-stabilizing algorithm.
In a self-stabilizing setting we need to cope with bad initializations resulting in no leader being elected.
A simple way of dealing with this is by attaching a \emph{timer} $\LECount{v}$ to each agent. 
The timer is initialized to the maximal number $\Lmax$ of interactions that are w.h.p.\ required by any agent in the protocol---for \cite{DBLP:conf/soda/GasieniecS18}, this is $\Theta(\log^2 n)$.
Agents in the leader election phase decrement $\LECount{v}$ whenever they interact.
If an agent's timer ever reaches $0$, it triggers a reset.
However, this simple trick blows up the state space by a multiplicative factor of $\Lmax$.
For the above-mentioned protocol \cite{DBLP:conf/soda/GasieniecS18}, this would result in a state space of size $O(\log^2 n \log\log n)$, which is slightly too large.
To mitigate this, we present a simple and fast protocol $\FastLeaderElect$ (similar to the lottery game of \cite{DBLP:conf/soda/AlistarhAEGR17}).
In a nutshell, the protocol works as follows.
Again, each agent $v$ stores two bits $\LeaderBit{v}$ and $\LeaderDone{v}$ that denote whether they are the leader and finished with the protocol execution. 
Recall that all unranked agents, even the \emph{dormant} agents, have a variable called $\Coin{v}$ that is \emph{flipped} on each activation.
An agent $v$ will declare itself to be the leader ($\LeaderBit{v} = 1$) if it observes $\lceil\log n\rceil$ \emph{heads} ($\Coin{w} = 1$) on its interaction partners in a row.
This requires each agent to only store $\lceil\log n\rceil$ bits, one for each coin flip.
With constant probability, there is \emph{exactly one} agent that archives this, i.e., there is exactly one leader.
This leader starts a broadcast that lets all agents start the ranking protocol just as before.
Note that, if there are two or more agents, this will trigger a reset within $O(n^2)$ interactions w.h.p., as this produces many duplicate labels. 
Furthermore, to avoid being stuck in a configuration without leaders, we use the aforementioned variable $\LECount{v}$ to count each agent's interactions.
If it reaches $0$ on one agent before it starts ranking, the agent also triggers a reset.
\iflong A detailed description of the protocol including the pseudocode can be found in \cref{sec:apx:simple_leader_elect}.
\else A detailed description of the protocol including the pseudocode can be found in the full version \cite{fullversion}.
\fi

\subsection{$\Ranking+$}
\label{sec:ranking_plus_description}

\def\preAlgo{
This protocol uses the following set of \emph{main states}, written $\QM$, where $\uplus$ is the disjoint union of two sets:%TODO \todo{todo: uplus introduce at first use.}
\vspace{-1ex}
\begin{align*}
\QM = \underbrace{\left\{  1, \ldots,  n\right\}}_{\texttt{rank}} \uplus \underbrace{\{0, 1\}}_{\texttt{coin}} \times \underbrace{\left\{  1, \ldots,  \Lmax\right\}}_{\texttt{aliveCount}} \times \left( \underbrace{\left\{  1, \ldots,  \cwait\cdot\log n\right\}}_{\texttt{waitCount}} \uplus \underbrace{\left\{  1, \ldots,  \lceil\log n\rceil\right\}}_{\texttt{phase}} \right).
\end{align*}
}

\begin{figure*}
\let\columnwidth\textwidth
\begin{algorithm}[H]{\Ranking+($u$, $v$).\label{alg:stable_unsafe_silent}\label{alg:ranking_plus} %
%Here $\UnsafeSilentProt_S$ behaves just like $\UnsafeSilentProt$, except that when agents become ranked agents, phase agents, or waiting agents, the values of $\LivenessCount{v}$ and $\Coin{v}$ are adjusted as described in the text.
}
$\sepline{error detection}$
if $\Rank u = \Rank v \neq \bot$ /*if $u$ and $v$ have the same rank or two waiting agents meet, trigger a reset and do nothing else*/$\label{ln:same-rank}$
        or $(\WaitCount u \neq \bot$ and $\WaitCount v \neq \bot)\label{ln:two-waiting}$ then
    execute $\TriggerReset(u)$
    return

$\sepline{liveness checking}$
if $\LivenessCount u \neq \bot \neq \LivenessCount v$ then /*if both $u$ and $v$ check liveness, adopt maximum counter minus one\label{ln:max-minus-one-s}*/
    $\LivenessCount u, \LivenessCount v \gets \max\{\LivenessCount u, \LivenessCount v\} - 1\label{ln:max-minus-one-e}$

if $\Rank u \in \{n-1, n\}$ and $\LivenessCount v \neq \bot$ then /*when meeting an agent ranked $\geq n-1$, decrement counter if present\label{ln:n-1-or-n-s}*/
    $\LivenessCount v \gets \LivenessCount v - 1\label{ln:n-1-or-n-e}$

if $\LivenessCount v = 0$ then /*if the counter hits zero, trigger a reset and do nothing else\label{ln:alive-count-s}*/
    execute $\TriggerReset(u)$
    return$\label{ln:alive-count-e}$

if $\Coin v = 0$ then /*if $v$'s coin is $0$, reset the counter if we \emph{could} have made progress\label{ln:coin-0-s}*/
    if $\WaitCount u \neq \bot$ or $\Big(\Rank u \neq \bot \neq \Phase v$ and $\Rank u \leq \floor{n \cdot 2^{-\Phase v}}\Big)$ then$\label{ln:productive}$
        $\LivenessCount v \gets \ceil{\cliveness \cdot \log n}\label{ln:coin-0-e}$

$\sepline{base protocol}$
else if $\Coin v = 1$ then /*if $v$'s coin is $1$, execute the base protocol\label{ln:coin-1-s}*/
    execute $\UnsafeSilentProt(u, v)\label{ln:coin-1-e}$
    if $u$ became waiting in the $\UnsafeSilentProt$ protocol then
        $(\Coin u,\LivenessCount u)  \gets (0, \Lmax)$
\end{algorithm}
\end{figure*}

Similarly to $\Ranking$, each agent can have exactly one value of $\WaitCount{v}$, $\Phase{v}$, or $\Rank{v}$ \emph{not} equal to $\bot$. We again call agents having one of these values waiting agents, phase agents, and ranked agents, respectively.
A leader $\ell$ alternates between the states waiting (\texttt{waitCount} $> 0$) and being unaware leader (there is an agent $v$ with $\Rank \ell \leq n \cdot 2^{-\Phase v}$).
The counter \texttt{aliveCount} will be used to check if the protocol is still making progress.
The protocol is defined in \cref{alg:ranking_plus}.

At its core, $\Ranking+$ extends $\Ranking$ by triggering a reset (via $\PropagateReset$) as soon as one of the three following errors occurs:
% \begin{enumerate}%[noitemsep]
% \item 
First, two agents have the same rank, which is detected when they interact directly (line \ref{ln:same-rank}).
% \item 
Second, more than two agents are waiting, which is also detected via direct interaction (line \ref{ln:two-waiting}).
Third, the protocol cannot assign more ranks.
As we will show later, in this case $\LivenessCount{v}$ will reach $0$ for at least one agent (line \ref{ln:alive-count-s}).

Note that it is unfortunately not always possible to detect if there are two or more leaders.
However, unless they ``accidentally'' produce a correct ranking, this case will ultimately result in one of the three kinds of errors above being produced and detected.

It remains to explain how $\Ranking+$ detects the third error. The idea of $\LivenessCount{v}$ is that it will reach the value $0$ whenever the protocol does not make any progress. The variable $\texttt{aliveCount}$ is decremented for two reasons.
First, whenever two unranked agents meet, they update their $\texttt{aliveCount}$ to the maximum of their respective counts minus $1$ (lines \ref{ln:max-minus-one-s}--\ref{ln:max-minus-one-e}).
Second, an unranked agent $u$ also decrements $\LivenessCount{u}$ when encountering an agent ranked $n-1$ or $n$ (lines \ref{ln:n-1-or-n-s}--\ref{ln:n-1-or-n-e}). This is necessary to reduce $\texttt{aliveCount}$ in the case where $u$ is the only unranked agent.
If there is no interaction that resets $\LivenessCount{v}$, the counter will eventually reach $0$.

To explain how $\Ranking+$ actually detects the third error we distinguish between two cases: the leader is either waiting or an unaware leader.
In the first case, whenever the waiting leader $\ell$ encounters a phase agent $v$, it resets $\LivenessCount{v}$.
In the second case, ideally we would like to reset the counter to its maximum value whenever the unaware leader $\ell$ assigns a new rank to an agent $v$. Unfortunately, neither $\ell$ nor $v$ have state space left to store $\texttt{aliveCount}$.
To circumvent this problem we do the following.
Whenever a ranked agent $u$ interacts with an unranked agent $v$, agent $u$  will determine if it is the unaware leader
$\ell$ (by checking if $\Rank u \leq n \cdot 2^{-\Phase v}$).
If the inequality is fulfilled $u$ decides between either assigning a rank to $v$ (if $\Coin v=1$, lines \ref{ln:coin-1-s}--\ref{ln:coin-1-e}) or setting $\LivenessCount{v}$ to $\Lmax$ (if $\Coin v = 0$, lines \ref{ln:coin-0-s}--\ref{ln:coin-0-e}).

% \subsection{The State Space}

% We can divide the state space into three disjoint groups $\QM$, $\QLE$, and $\QR$ that roughly belong to the three subprotocols of our self-stabilizing protocol.
% At any point in the protocol's execution, each agent has a state from \emph{either} of these three groups.
% Thus, the total number of states is the sum (and not the product) of the number of states in each of these groups.
% In the following, we will briefly explain the variables used in each of these groups.

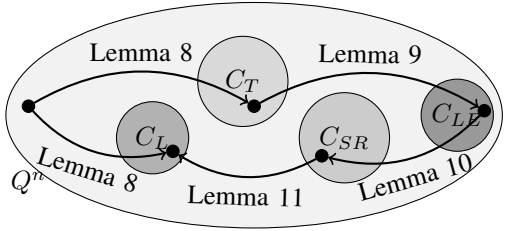
\begin{figure}[t]
    \centering

\begin{tikzpicture}[scale=0.6]

  \colorlet{colN}{gray!60}
  \colorlet{colZ}{gray!40}
  \colorlet{colQ}{gray!80}
  \colorlet{colR}{gray!10}
  \colorlet{colT}{gray!30}

  \coordinate (oN) at (-2.25, -0.5);
  \coordinate (oZ) at (2, -0.5);
  \coordinate (oQ) at (4.5, 0.0);
  \coordinate (oR) at (0, 0);
  \coordinate (oT) at (-0.25, 0.75);

\filldraw[fill=colR] (0, 0) ellipse (5.5 and 2.5);
\filldraw[fill=colQ] (oQ) circle (0.8);
\filldraw[fill=colZ] (oZ) circle (1.0);
\filldraw[fill=colN] (oN) circle (0.8);
\filldraw[fill=colT] (oT) circle (1.0);

\node (N) at (oN) {$C_L$};
\node (Z) at (oZ) {$C_{SR}$};
\node (Q) at (oQ) {$C_{LE}$};
\node (R) at (oT) {$C_T$};
\node at (-5,-1.5) {$Q^n$};

\node[draw, circle, fill, scale=0.5] (A) at (-5, 0.2) {};
\node[draw, circle, fill, scale=0.5] (B) at (0, 0.2) {};
\node[draw, circle, fill, scale=0.5] (C) at (5.1, 0.1) {};
\node[draw, circle, fill, scale=0.5] (D) at (1.5, -0.9) {};
\node[draw, circle, fill, scale=0.5] (E) at (-1.8, -0.8) {};

\path[->, thick] (A) edge[bend left] node[midway,sloped,above]{{\cref{lemma:convergence_bad_case}}} (B);
\path[->,  thick] (A) edge[bend right] node[midway,sloped,below]{{\cref{lemma:convergence_bad_case}}} (E);
\path[->,  thick] (B) edge[bend left] node[midway,sloped,above]{{\cref{lem:recovery}}} (C);
\path[->,  thick] (C) edge[bend left] node[midway,sloped,below]{{\cref{lemma:convergence_leader_elect}}} (D);
\path[->,  thick] (D) edge[bend left] node[midway,sloped,below]{{\cref{lemma:convergence_good_case}}} (E);
\end{tikzpicture}
\vspace{-1ex}
\caption{A high-level overview of the self-stabilizing algorithm.}
\label{fig:enter-label}
\end{figure}
\subsection{Analysis}
\label{sec:ss_ranking_analysis}

In this section we show \cref{thm:ss-ranking}. Similarly to the proof of \cref{thm:non-self-stab-ranking} we define a set of configurations which are safe entry states of the subprotocols of our protocol.

The set $\CLE$ contains the configurations from which we can safely start $\FastLeaderElect$. We call this set leader election configurations. % (see \cref{sec:apx:simple_leader_elect} for the formal definition).
The set $\CT$ contains the triggered configurations.
These are configurations in which we start $\PropagateReset$. % (see \cref{sec:resetprot_description} for the formal definition).
The set $\CSRPlus$ contains safe ranking configurations defined as the configurations from which we can safely call $\Ranking+$. % (see \cref{def:csrplus}).
Finally, $\CL$ contains all legal configurations in which all agents have a unique rank.
%(see \cref{sec:model_results}).
The following proof of \cref{thm:ss-ranking} essentially tracks
movement of the protocol through these sets until a configuration in $\CL$ is reached, see \cref{fig:enter-label} for an overview of these sets.
%We now give the formal proof.
%Again, we restate \cref{thm:ss-ranking} here for convenience.

%\thmTwo*

\begin{proof}[Proof of \cref{thm:ss-ranking}]
For the state space, see that the three sub-protocols have $O(\log^2 n)$ overhead states.
Together with the coin and the ranks, we have $|Q| = n + O(\log^2 n)$ as required.

The analysis of the correctness and running time of the protocol is split into \cref{lemma:convergence_bad_case,lem:recovery,lemma:convergence_leader_elect,lemma:convergence_good_case}.
In \cref{lemma:convergence_bad_case}, we show that when $\vX_t \not\in C_L$ is an arbitrary configuration, within $O(n^2 \log n)$ interactions the protocol either reaches configuration in
$\CT$ or $\CL$, w.h.p.
\Cref{lem:recovery} shows that, when the protocol is an arbitrary configuration of $\CT$, it reaches a configuration from $\CLE$ within $O(n \log n)$ interactions, w.h.p.
\Cref{lemma:convergence_leader_elect} shows that, when the protocol is in an arbitrary configuration of $\CLE$, it reaches a configuration from $\CSRPlus$ within $O(n^2 \log n)$ interactions, w.h.p.
Finally, \cref{lemma:convergence_good_case} shows that from a configuration in $\CSRPlus$, the protocol will reach a correct ranking configuration $\CL$ in $O(n^2 \log n)$ interactions, w.h.p.,
    directly implying 
the correctness and running time.

We conclude with the protocol's closure.
When the protocol is in a legal configuration $\vX_t \in C_L$, all pairs of agents $u$ and $v$ have distinct ranks. In this case, they do not change their states. This can be seen by inspecting the protocol as two ranked agents only perform an action when they have equal rank.
Therefore, $\vX_{t+1} = \vX_{t}$ and the protocol fulfills the closure property and is silent.
\end{proof}

\begin{lemma}
\label{lemma:convergence_bad_case}
Let $c_1$ be a sufficiently large constant, and assume that $\vX_t$ is an arbitrary configuration not in $\CL$.
Then, w.h.p., there is a $\tau \leq c_1 \cdot n^2 \log n$ such that either $\vX_{t+\tau} \in \CL$ or one the configurations $\vX_{t}, \ldots, \vX_{t+\tau}$ contains a triggered agent (i.e., is in $\CT$).
\end{lemma}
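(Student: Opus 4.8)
The plan is to show that if the protocol runs for $c_1 n^2 \log n$ interactions starting from an arbitrary configuration $\vX_t \notin \CL$ without ever entering $\CT$ (no agent triggers a reset), then w.h.p.\ it must already be in $\CL$. Equivalently, I would argue the contrapositive: as long as we stay outside $\CL \cup \CT$, one of the error-detection mechanisms of $\Ranking+$ must fire within $O(n^2\log n)$ interactions w.h.p. I would first dispense with the easy observation that if no agent is ever computing (i.e.\ everyone stays in $\QR$ or $\QLE$ forever without progress) then the timers $\LECount{v}$ or the reset counters force a trigger — but more carefully, I would split on which subprotocol the agents are predominantly executing, since the adversarial start configuration can mix agents across $\QR$, $\QLE$, and $\QM$ arbitrarily.

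The core case is when (a constant fraction of, or eventually all) agents are executing $\Ranking+$ with main states in $\QM$. Here I would use the three error conditions. If at any point two agents share a rank, or there are ever two waiting agents, a direct interaction between them happens within $O(n^2)$ interactions w.h.p.\ and triggers a reset (lines~\ref{ln:same-rank}--\ref{ln:two-waiting}). So assume ranks are all distinct and there is at most one waiting agent throughout. Then I claim the population is in a ``$\Ranking$-like'' configuration, and I would track the liveness counter $\LivenessCount{\cdot}$: the key invariant is that $\LivenessCount{v}$ is reset to its maximum $\Theta(\log n)$ only on a ``productive'' interaction (line~\ref{ln:productive}) — i.e.\ one where the (would-be) leader could legitimately assign a rank or decrement its wait counter — and otherwise decreases by one each time an unranked agent interacts, via the max-minus-one propagation (lines~\ref{ln:max-minus-one-s}--\ref{ln:max-minus-one-e}) and the decrement against agents ranked $n-1,n$. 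If productive interactions keep occurring, then since only $n$ ranks exist and each phase assigns a contiguous block, after $O(n)$ rank-assignments the protocol reaches a full ranking, hence $\CL$; the time for this is $O(n^2\log n)$ w.h.p.\ by essentially the argument of \cref{lemmacorrectness_non_ss} (negative-binomial / one-way-epidemic tail bounds). If on the other hand productive interactions stop, then within $O(n\log n)$ interactions the minimum liveness counter over all unranked agents drops to $0$ — because the counter value, suitably defined as a ``potential'' that decreases whenever an unranked agent meets another unranked agent or a top-ranked agent, cannot be propped up — and a reset is triggered at line~\ref{ln:alive-count-s}. I would make ``productive interactions stop'' precise by observing that a non-productive stretch means the configuration is stuck (no leader, or a leader that can never match the phase agents), and in such a configuration no interaction resets any liveness counter, so the counters just drain.

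For the remaining cases: agents executing $\PropagateReset$ either finish propagating and transition to leader election within $O(n\log n)$ interactions (the reset epidemic plus the delay counters, as in \cite{DBLP:conf/podc/BurmanCCDNSX21}) — but note a configuration still containing a propagating or triggered agent is already in $\CT$, so there is nothing to prove; and agents executing $\FastLeaderElect$ either elect a leader and hand off to $\Ranking+$ (reducing to the main case) or, if stuck with no leader, have some $\LECount{v}$ hit $0$ within $O(n\log^2 n) = O(n^2\log n)$ interactions and trigger a reset. One has to be careful that the adversarial start may have agents in $\QLE$ with corrupted timers, but since $\LECount{v}$ only ever decreases while in leader election and starts at most $\Lmax = \Theta(\log^2 n)$, it reaches $0$ in at most $\Lmax$ interactions of $v$, i.e.\ $O(n\Lmax)$ total w.h.p. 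Taking a union bound over the $O(1)$ cases and the polynomially many failure events gives the w.h.p.\ claim with $\tau \le c_1 n^2\log n$.

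\textbf{Main obstacle.} The delicate part is the liveness argument: precisely defining a potential function built from the $\LivenessCount{\cdot}$ values that (i) provably decreases on every interaction among the relevant agents during an ``unproductive'' stretch, (ii) is only increased by genuinely productive interactions, and (iii) yields that a stuck-but-not-yet-legal configuration forces some counter to $0$ in $O(n\log n)$ steps. The subtlety is handling the corner case where only one agent is unranked (hence no unranked--unranked interactions occur), which is exactly why the decrement against agents ranked $n-1,n$ is in the protocol; I would need to check that this single-unranked-agent case still drains a counter, and that an adversarial configuration cannot perpetually look ``productive'' (e.g.\ via a bogus leader whose rank happens to satisfy the phase inequality but never completes a phase) without actually making progress toward $\CL$.
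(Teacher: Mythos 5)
Your high-level dichotomy (directly detectable errors, versus ``productive'' steps forcing progress, versus no productive steps forcing the liveness counters to drain) matches the skeleton of the paper's argument, and your treatment of the drained case -- including the single-unranked-agent corner handled by the rank-$n{-}1,n$ decrement -- anticipates the paper's dead-configuration analysis. But there is a genuine gap in the middle branch, and it is exactly the point you flag as your ``main obstacle'' without resolving it. You claim that if productive interactions keep occurring, the protocol reaches $\CL$ in $O(n^2\log n)$ interactions ``by essentially the argument of \cref{lemmacorrectness_non_ss}.'' That lemma assumes a configuration in $\CSR$ with a unique leader, consistent phases, and a contiguous block of assigned ranks; none of these invariants hold in an arbitrary adversarial configuration, and productive pairs can persist indefinitely (a bogus ``leader'' whose rank happens to satisfy $\Rank u \le \lfloor n\cdot 2^{-\Phase v}\rfloor$) while repeatedly refreshing liveness counters and never producing a valid ranking. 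So the implication ``productive pairs persist $\Rightarrow$ converge to $\CL$'' is false as stated, and your case split does not close.

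The paper's resolution is the potential $\Phi_t=\sum_{v:\,\PhaseT t v\neq\bot}2^{-\PhaseT t v}$ (set to $0$ when no productive pair exists or a reset occurs): in a ``good'' step every productive interaction either ranks a phase agent, advances a minimum phase, or detects an error, each of which drops $\Phi$ by a factor $1-\Omega(n^{-2})$ in expectation, so $\Phi$ hits zero within $O(n^2\log n)$ interactions regardless of whether the apparent progress is legitimate (\cref{lem:main_states_reset_quickly}, Part 1); only then does the counter-drain argument take over. Making this work also requires bounding the number of non-good steps, which hinges on a concentration analysis of the synthetic coin among the shrinking (possibly $o(\log n)$-sized) subpopulation of phase agents (\cref{lem:few_bad_steps,lem:coin_heads}) -- machinery absent from your proposal. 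Two smaller issues: a configuration containing a propagating agent with $0<\ResetCount{v}<\Rmax$ is not in $\CT$ (only a \emph{triggered} agent, with counter at $\Rmax$, puts you there), so the preparation phase cannot be dismissed as ``nothing to prove''; and the counter-drain time in the stuck case is $O(n^2\log n)$, not $O(n\log n)$, since with few unranked agents the relevant meetings occur only once per $\Theta(n^2)$ interactions.
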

\begin{proof}[Proof sketch]
First, we show that after a \emph{preparation phase} having $O(n \log^2 n)$ rounds,
    the protocol will be in a configuration where all agents are in a main state (i.e., either waiting agents, phase agents, or ranked agents).
The remainder of the proof is then divided into two parts, both using a potential function.
This potential function allows us to treat the various ways to make progress (assigning ranks, advancing saved phases, or resetting) in a unified manner without excessive case distinctions.

We call a pair of agents $u \neq v$ a \emph{productive pair} if it fulfills the condition in line \ref{ln:productive} of \cref{alg:ranking_plus}.
That is, the protocol \emph{could} make progress if the phase agent's coin shows $1$.
    Our potential function $\Phi_t$ is then defined as
    $0$ if there is no productive pair or there is a resetting agent in $\vX_t$ and as $\sum_{v \in [n]\colon \PhaseT t v \neq \bot} 2^{-\PhaseT t v}$ otherwise.
        % \[\Phi_t \coloneqq \begin{cases}
        %     0, & \text{if there is no productive pair or there is a resetting agent in $\vX_t$} \\
        %     \displaystyle \vphantom{\sum}\smash{\sum_{\mathclap{\substack{v \in [n]:\\\PhaseT t v \neq \bot}}}}2^{-\PhaseT t v}, & \text{\normalfont otherwise.}\\
        %     %\displaystyle\quad~\sum_{\mathclap{\stackrel{v \in [n]\colon}{\PhaseT t v \neq \bot}}} 2^{-\PhaseT t v}, & \text{\normalfont otherwise.}
        % \end{cases}\]

    In the first part of the proof we show that the potential drops to zero within $O(n^2 \log n)$ rounds w.h.p. %(\cref{lem:main_states_reset_quickly}, first part),
    In the second part of the proof we then show that once the potential has reached zero, the protocol is either in a configuration in $\CL$ or will reset within $O(n^2 \log n)$ further interactions w.h.p. %(\cref{lem:main_states_reset_quickly}, second part).
    This directly implies \cref{lemma:convergence_bad_case}.
    %the full proof is given in \cref{sec:proof_bad_case}.

\medskip

For the first part of the proof we define a notion of \emph{good time steps}, for which we can show an expected drop of the potential $\Phi$ in $\Omega(\Phi / n^2)$, leading to a geometric decay with decay factor $1 - \Omega(n^{-2})$.
    %We now sketch
    There are multiple ways in which a time step can be good. %(defined formally in \cref{def:good_steps}), and why the potential drops as desired in each case (see \cref{lem:potential_onestep_drop} for the full proof).
    % A time step is good when at that time,   or when a rank can be assigned (and a sufficient proportion of phase agents has the coin at $1$); refer to \cref{def:good_steps} for the detailed definition.
    % In each of these cases, we show (in \cref{lem:potential_onestep_drop}) that
    %     whenever a time step is good,
% \begin{itemize}
% \item 
A time step is good if there is a directly detectable error, i.e., there are two agents having the same rank, or two waiting agents.
In this case there is at least an $n^{-2}$ probability of detection, in which case the potential immediately drops to zero.
% \item
A time step is also good if an agent can increase its saved phase by some interaction.
If this is the case, it must be true in particular for an agent $v$ with the lowest saved phase.
Letting $s$ be the number of phase agents in the configuration, $v$ contributes at least $\Phi / s$ to the potential, and this contribution will at least halve when $v$ increases its phase.
By case analysis, one can see that the probability of this occurring is in $\Omega(s/n^2)$.
$s$ cancels out between the probability and the drop in potential in this event, and we get the desired drop in potential in expectation.
% \item 
Lastly, a time step is good if an agent can be ranked in some interaction, and a sufficient proportion of phase agents have their coin showing $1$ (so that they can actually get ranked).
Assuming we are not in the previous case as well, \emph{all} phase agents have the same phase and may be ranked in an interaction.
So an agent gets ranked with probability in $\Omega(s / n^2)$.
As each phase agent contributes exactly $\Phi/s$ to the potential, and this potential contribution drops to $0$ when it ceases to be a phase agent, we get the desired drop in potential as in the previous case.
% \end{itemize}

Finally, we need to show that there are enough good time steps within a time interval of size $O(n^2 \log n)$.
This involves the analysis of the synthetic coin among phase agents.
Because this subpopulation shrinks (due to agents getting ranked) and can become quite small (even $o(\log n)$),
    and agents are removed from the subpopulation depending on the current value of the coin,
    we cannot use established techniques for this analysis.
\iflong The full proof can be found in \cref{sec:proof_bad_case}. \fi
\medskip

For the second part of the proof, we need to show that if the potential~$\Phi$ has reached zero but the protocol is not in a configuration in $\CL$, a reset is triggered within $O(n^2 \log n)$ interactions.
The challenge is to show that the protocol triggers a reset if there are no productive pairs.
Here, we proceed by case distinction: either there are two agents with the same rank, a single unranked agent, or multiple unranked agents.
In the proof we use an argument adapted from \cite[Lemma 3.3]{DBLP:conf/podc/BurmanCCDNSX21}, which, in turn, is adapted from \cite[Lemma 1]{DBLP:conf/dna/AlistarhDKSU17}.
\end{proof}

\begin{lemma}\label{lem:recovery}
   Let $c_2$ be a sufficiently large constant, and assume that $\vX_t$ is an arbitrary configuration in $\CT$, i.e., containing a triggered agent.
   Then, w.h.p., there is a $\tau \leq c_2 \cdot n \log n$ such that  $\vX_{t+\tau} \in \CLE$.
\end{lemma}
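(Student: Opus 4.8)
The statement to prove is \cref{lem:recovery}: from any triggered configuration in $\CT$, the protocol reaches a leader-election configuration in $\CLE$ within $O(n\log n)$ interactions w.h.p. Since $\ResetProt$ is taken almost verbatim from \cite{DBLP:conf/podc/BurmanCCDNSX21}, the natural approach is to invoke the analysis of the reset subprotocol there, adapting only the bookkeeping needed for our state space. The plan is to track two epidemics in sequence: first the ``reset'' epidemic driven by $\ResetCount{\cdot}$, which turns every computing agent into a propagating and then dormant agent, and second the ``delay''/wake-up phase driven by $\DelayCount{\cdot}$, after which each agent reinitializes its leader-election state. I would show each phase takes $O(n\log n)$ interactions w.h.p.\ and that after both, the resulting configuration lies in $\CLE$.

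\emph{Step 1 (reset epidemic reaches everyone).} Starting from a configuration with at least one triggered agent (i.e.\ some $v$ with $\ResetCount v = \Rmax$), observe that $\ResetCount{\cdot} > 0$ spreads as a one-way epidemic: whenever a propagating agent meets a computing or dormant agent, the latter becomes propagating, and propagating agents never revert to computing. By the standard one-way-epidemic tail bound (the same one used in \cref{lem:unsafe_waiting_subphase}, now with the full population of $n$ agents), all $n$ agents become propagating within $O(n\log n)$ interactions w.h.p. The only subtlety is to check that no agent can ``escape'' back into a computing state during this window — this is exactly where the counter value $\Rmax = \Theta(\log n)$ is chosen large enough that the decrements a propagating agent suffers before the epidemic completes cannot drive its counter to $0$ prematurely; I would cite the corresponding lemma of \cite{DBLP:conf/podc/BurmanCCDNSX21} for the precise constant.

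\emph{Step 2 (all agents go dormant, then wake into leader election).} Once every agent is propagating, each $\ResetCount{\cdot}$ is decremented on essentially every interaction (meetings with computing agents no longer occur), so within another $O(n\log n)$ interactions w.h.p.\ every agent has $\ResetCount{\cdot} = 0$ and $\DelayCount{\cdot} = \Dmax$, i.e.\ is dormant. Crucially, the rule that $\DelayCount{v}$ is set to $\Dmax$ when $v$ \emph{becomes} propagating, combined with $\Dmax = \Theta(\log n)$ chosen so that $\Dmax$ exceeds the number of interactions any single agent experiences between the last agent turning propagating and itself reaching $\ResetCount{\cdot}=0$, guarantees no agent wakes up before the reset epidemic has fully completed. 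Then each dormant agent decrements $\DelayCount{\cdot}$ on every interaction and, upon reaching $0$, discards its $\ResetProt$ state and initializes the $\FastLeaderElect$ state (keeping only $\Coin{\cdot}$); after $O(n\log n)$ further interactions all agents have done so w.h.p. At that point there are no propagating, dormant, computing-but-not-leader-electing, or ranked agents, every agent holds a fresh $\FastLeaderElect$ state and a (possibly not-yet-warmed-up) coin, which is precisely the definition of $\CLE$. Summing the three $O(n\log n)$ bounds and taking a union bound over the $O(1)$ failure events yields $\tau \le c_2 \cdot n\log n$ w.h.p.\ for a sufficiently large constant $c_2$.

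\emph{Main obstacle.} The routine epidemic bounds are easy; the real care is in the \emph{ordering} guarantees — that no agent re-enters a computing state before the reset epidemic finishes, and that no agent wakes into leader election before every agent is dormant. Both hinge on setting $\Rmax$ and $\Dmax$ to the right $\Theta(\log n)$ values so that the relevant counters survive the worst-case number of premature decrements, and on the fact that, conditioned on the high-probability event that each epidemic completes in $O(n\log n)$ steps, no individual agent is involved in more than $O(\log n)$ of those interactions (a Chernoff bound on the per-agent interaction count). I expect this coupling of the global epidemic timing with per-agent interaction counts to be the step requiring the most attention, though it follows the blueprint of \cite{DBLP:conf/podc/BurmanCCDNSX21} closely enough that it can largely be cited.
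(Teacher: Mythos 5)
Your Steps 1 and 2 follow the same route as the paper, which simply black-boxes the behaviour of $\PropagateReset$ via the analysis of \textcite{DBLP:conf/podc/BurmanCCDNSX21} (triggered configuration $\to$ fully dormant $\to$ awakening configuration in $\Theta(\Dmax n)$ interactions w.h.p.), so that part is fine. However, there is a genuine gap at the very end: you write that after the wake-up every agent holds a fresh $\FastLeaderElect$ state ``and a (possibly not-yet-warmed-up) coin, which is precisely the definition of $\CLE$.'' It is not. The definition of $\CLE$ additionally requires the global coin-balance property
\[
\bigl|\,|\{ v \mid \Coin{v} = 1 \}| - |\{ v \mid \Coin{v} = 0 \}|\,\bigr| \leq \frac{n}{4\log n},
\]
and this property is load-bearing: the subsequent analysis of $\FastLeaderElect$ (the constant-probability unique-leader bound) breaks down without it. The paper closes this by observing that agents keep toggling $\Coin{v}$ on every interaction even while $\PropagateReset$ runs, that the reset provably takes at least $\Omega(n\Dmax) = \Omega(n\log n)$ interactions (a \emph{lower} bound, which your argument never needs and therefore never establishes), and that this exceeds the $O(n\log\log n)$ warm-up time of the synthetic coin from \textcite{DBLP:conf/soda/BerenbrinkKKO18}. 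You must add this step, and for it you need the lower bound on the duration of the reset, not just the upper bound.

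A secondary, smaller issue: you conclude after ``all agents have woken up.'' Agents wake asynchronously, so by the time the last one wakes, the first ones have already decremented $\LECount{\cdot}$ and $\CoinCount{\cdot}$ and are no longer in initial states $q_{0,i}$. The target configuration should instead be taken at the \emph{awakening configuration} --- the first moment some agent is computing while the rest are still dormant --- which is exactly why the definition of $\CLE$ permits a mixture of dormant and freshly initialized leader-electing agents.
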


% \begin{proof}[Proof sketch]
    \textit{Proof sketch.}The lemma follows more or less directly from the correctness of the $\PropagateReset$ protocol \cite{DBLP:conf/podc/BurmanCCDNSX21},
        we also need to ensure that the synthetic coin used by the leader election is sufficiently ``warmed up'' by the time the reset has run its course. 
        However, this can be achieved by letting the agents be dormant long enough.
\iflong The detailed proof is given in \cref{sec:proof_recovery}. \fi
% \end{proof}

\begin{lemma}
\label{lemma:convergence_leader_elect}
Let $c_3$ be a sufficiently large constant, and assume that $\vX_t$ is an arbitrary configuration in $\CLE$.
Then, w.h.p., there is a $\tau \leq c_3 \cdot n^2 \log n$ such that $\vX_{t+\tau} \in \CSRPlus$.
\end{lemma}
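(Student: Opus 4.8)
The plan is to track the execution of $\FastLeaderElect$ starting from a configuration in $\CLE$ and argue that, with high probability, one of two things happens within $O(n^2 \log n)$ interactions: either a reset is triggered (landing us in $\CT$, from which $\CLE$ is re-reached and we recurse), or the leader election produces exactly one leader who successfully broadcasts the start of ranking, yielding a configuration in $\CSRPlus$. I would first establish the ``good event'' for the lottery-type protocol: after the synthetic coin has warmed up (which holds in $\CLE$ by construction, since agents were kept dormant long enough in \cref{lem:recovery}), the probability that exactly one agent observes $\lceil\log n\rceil$ consecutive \emph{heads} on its interaction partners before any other agent does is bounded below by a constant. This is the standard analysis of the lottery game from \cite{DBLP:conf/soda/AlistarhAEGR17}; here I would only need that the event has constant probability and that, conditioned on it, a unique leader is elected within $O(n \log^2 n)$ interactions w.h.p., well within the $O(n^2 \log n)$ budget.

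Next I would handle the two failure modes. If \emph{no} agent wins the lottery, then some agent's timer $\LECount{v}$ reaches $0$ before ranking starts — by choice of $\Lmax = \Theta(\log^2 n)$ as an upper bound on the number of interactions any agent needs, this happens w.h.p.\ within the budget — and that agent triggers a reset, so we reach $\CT$. If \emph{two or more} agents win the lottery, both declare themselves leader, both initiate the ranking broadcast, and (as sketched in \cref{sec:simpleleader}) the resulting duplicate rank assignments are detected by line \ref{ln:same-rank} of \cref{alg:ranking_plus} within $O(n^2)$ additional interactions w.h.p., again reaching $\CT$. In either failure case, \cref{lem:recovery} gives a return to $\CLE$ in a further $O(n \log n)$ interactions. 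Since each attempt succeeds with constant probability independently (the coin stays warmed up across resets because $\Coin{v}$ is preserved through $\TriggerReset$), after $O(\log n)$ attempts we succeed w.h.p., and the total interaction count is $O(\log n) \cdot O(n^2 \log n) = O(n^2 \log n \cdot \log n)$ — so to get the clean $O(n^2 \log n)$ bound I would instead argue that a \emph{single} attempt already succeeds with probability $1 - n^{-\Omega(1)}$ by boosting: run $\Theta(\log n)$ independent ``sub-lotteries'' in parallel within the state space, or observe that the constant-probability event can be amplified because the relevant coin flips across the $\Theta(n\log n)$-interaction window are essentially independent. The cleanest route is probably to note that the whole process is dominated by a geometric number of phases each of length $O(n^2)$, and a Chernoff bound over $\Theta(\log n)$ such phases fits in $O(n^2 \log n)$ total.

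The main obstacle I anticipate is the same one flagged for \cref{lemma:convergence_bad_case}: controlling the synthetic coin on the (shrinking, possibly $o(\log n)$-sized) subpopulation of unranked agents once a spurious leader has started ranking, and in particular showing that the duplicate-rank error is actually produced and detected quickly rather than the two leaders accidentally cooperating to produce a valid ranking. I would argue that with two unaware leaders, the rank assignments overlap on a constant fraction of ranks with constant probability per phase, so some duplicate appears w.h.p.\ within $O(n^2)$ interactions and is then detected in $O(n^2)$ more by a coupon-collector argument on the pair meeting. A secondary subtlety is ensuring that starting from an \emph{arbitrary} configuration in $\CLE$ — not a ``clean'' all-agents-in-leader-election configuration — there are no leftover main-protocol or reset states; this should follow from the definition of $\CLE$, but I would need to state precisely what $\CLE$ guarantees (no ranked or phase or waiting agents, all agents computing in $\FastLeaderElect$ with warmed-up coins, timers at $\Lmax$) and check that \cref{lem:recovery} indeed delivers such a configuration.
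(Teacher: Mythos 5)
Your outline matches the paper's proof in its essential structure: a constant-probability unique-leader event (the paper's \cref{lemma:leader_prob} gives probability at least $1/(8e)$, using the coin-balance property that is baked into the definition of $\CLE$), the no-leader case resolved by the per-agent timer $\LECount{v}$ forcing a reset, the multi-leader case resolved by duplicate-rank detection via line~\ref{ln:same-rank} of \cref{alg:ranking_plus}, and a retry argument over $O(\log n)$ attempts. Two places where the paper is tighter than your sketch deserve mention. First, in the multi-leader case the paper does not need your probabilistic ``overlap on a constant fraction of ranks'' argument: it shows that the first two leaders to finish waiting both do so while every unranked agent still stores phase $1$, so both assign the \emph{identical} rank $\lceil n/2\rceil+1$; the duplicate is then guaranteed rather than merely likely, and detection is a single $\Geom\bc{2/(n(n-1))}$ waiting time for that one fixed pair to meet --- your worry about two leaders ``accidentally cooperating'' never arises. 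Second, on the time accounting: of the fixes you float, the parallel sub-lotteries would change the protocol and within-attempt amplification does not apply (one attempt is one global trial), but your final suggestion is exactly what the paper does --- \cref{lemma:LE_to_SAFE} shows a \emph{failed} attempt returns to $\CLE$ within a geometrically distributed $O(n^2)$ interactions, so the $O(\log n)$ failures sum to $O(n^2\log n)$ and only the one successful attempt costs the full $c\cdot n^2\log n$. One minor slip: $\FastLeaderElect$ uses $\Lmax=\Theta(\log n)$, not $\Theta(\log^2 n)$ (the $\Theta(\log^2 n)$ is the size of the product state space $Q_{\mathrm{SLE}}$); this is immaterial to the running time here but is precisely the point of replacing the $O(\log\log n)$-state leader election with the lottery.
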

% \begin{proof}[Proof sketch]
\noindent \textit{Proof sketch.} Here we have to show that, from an arbitrary leader election configuration, we reach a safe ranking configuration. Note that our simple leader election protocol has a constant failure probability (it can elect none or several leader). 
If the leader election protocol does not elect a leader is starts over again and we have to how that w.h.p.\ this does not happen too often.  
If several leaders are selected, all these leaders start the first phase of the ranking algorithm. 
Thus, there will be agents receiving the same rank. This will be detected within $O(n^2)$ interactions by $\Ranking+$ resulting in a reset.
\iflong Details of the proof can be found in \cref{sec:proof_leader_elect}. \fi
% \end{proof}
\begin{lemma}
\label{lemma:convergence_good_case}
Let $c_4$ be a sufficiently large constant, and assume that $\vX_t$ is an arbitrary configuration in $\CSRPlus$.
Then, w.h.p., there is a $\tau \leq c_3 \cdot n^2 \log n$ such that $\vX_{t+\tau} \in \CL$.
\end{lemma}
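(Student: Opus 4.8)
The plan is to argue that, starting from a safe ranking configuration, the sub-protocol \Ranking+ simulates the non-self-stabilizing protocol \Ranking\ analyzed in \cref{lemmacorrectness_non_ss} and never triggers a spurious reset; the running-time bound then follows from suitably slowed-down versions of \cref{lem:unsafe_waiting_subphase,lem:unsafe_ranking_subphase}. One should picture a configuration in $\CSRPlus$ as a configuration in $\CS$ from \cref{sec:unstable_ranking}, enriched with: a unique leader, a liveness counter at (close to) its maximum on every unranked agent, no agent currently propagating a reset, and a warmed-up synthetic coin (it has been toggled sufficiently often on every agent during the preceding dormant and leader-election periods, cf.\ the proof of \cref{lem:recovery}), so that the coin shows each value with probability $\tfrac12 \pm o(1)$.

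The first step is to observe that the embedded call to $\UnsafeSilentProt$ in line \ref{ln:coin-1-e} of \cref{alg:ranking_plus} executes exactly as in \cref{sec:unstable_ranking}, the only difference being that it runs solely in interactions whose phase agent has $\Coin{v}=1$. Because the coin is warmed up, a constant fraction of the relevant interactions perform ranking work, so the number of interactions needed to move from an initial waiting configuration $\CWFWait{k}$ to an initial ranking configuration $\CWFRank{k}$ and back is only a constant factor larger than the bounds in \cref{lem:unsafe_waiting_subphase,lem:unsafe_ranking_subphase}. Analysing the synthetic coin on the shrinking — and possibly sub-logarithmic — subpopulation of unranked agents is the same non-standard ingredient required in the proof of \cref{lemma:convergence_bad_case} and can be reused here (in fact more easily, as we start from a clean configuration). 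Summing over the $\ceil{\log_2 n}$ phases and their waiting subphases then gives $O(n^2\log n)$ interactions w.h.p.\ to reach a configuration in $\CL$, \emph{conditioned on} no reset being triggered along the way.

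The third and main step is to rule out spurious resets w.h.p. Two of the three error triggers of \cref{alg:ranking_plus} are immediate: as in the analysis of \cref{lemmacorrectness_non_ss}, $\UnsafeSilentProt$ keeps all ranks distinct, so line \ref{ln:same-rank} never fires, and the unique leader is at all times the only waiting agent, so line \ref{ln:two-waiting} never fires. For the liveness counter, the key observation is that throughout the run there is always a productive pair in the sense of line \ref{ln:productive} — namely the leader together with any phase agent: if the leader is waiting, its counter is present and the first disjunct of line \ref{ln:productive} holds; if it is the unaware leader in phase $k$, all phase agents have phase $k$ and the leader's rank never exceeds $\floor{n\cdot 2^{-k}}$ (exactly the condition that makes line \ref{ln:rank-s} of \cref{alg:unsafe_silent} applicable), so the second disjunct holds. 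Hence every interaction between the leader and a phase agent $v$ either makes ranking progress (if $\Coin{v}=1$) or resets $\LivenessCount{v}$ to $\ceil{\cliveness\log n}$ (if $\Coin{v}=0$), and such interactions occur at rate $\Theta(s/n^2)$ when $s$ unranked agents remain. Conversely, a fixed unranked agent's counter drops by at most one per interaction, and only when it meets one of the at most $s+1$ agents that are unranked or ranked $n-1$ or $n$. Since line \ref{ln:max-minus-one-e} propagates counter values by exactly the max-rule used for the phase and wait counters — so a value $\ceil{\cliveness\log n}$ set on one agent reaches all unranked agents within $O((n^2/s)\log n)$ interactions w.h.p. — a concentration-plus-epidemic argument shows that, for $\cliveness$ and $\Lmax$ sufficiently large constants, no liveness counter reaches $0$ before $\CL$ is reached. (The waiting leader's counter is refreshed to $\Lmax$ whenever it becomes waiting, and a single waiting subphase, which may last $\Theta(n^2\log n)$ interactions, decrements it only $O(\log n)$ times w.h.p.) A union bound over the $O(n^2\log n)$ interactions then completes the argument.

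The step I expect to be the main obstacle is the liveness-counter analysis of the third step, particularly in the late phases where only $O(1)$ unranked agents remain: there the reset rate and the natural decrement rate are of the same order, so the argument must genuinely exploit the max-rule propagation rather than a crude union bound over individual decrements, and the epidemic and coin-concentration tools have to be applied carefully to sub-logarithmic subpopulations.
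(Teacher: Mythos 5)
Your proposal follows essentially the same route as the paper: reduce to the analysis of \cref{lemmacorrectness_non_ss} conditioned on no reset (with a constant-factor slowdown from the synthetic coin), then rule out each of the three reset triggers separately, with the liveness counter on the shrinking subpopulation of unranked agents as the crux. The step you flag as the main obstacle is exactly where the paper invests its effort, formalizing the max-rule propagation via a coupling with a broadcast process whose hop count is bounded by a witness-sequence counting argument over epochs of $\Theta((n^2/k)\log n)$ interactions; your sketched concentration-plus-epidemic mechanism is the right idea and matches that formalization.
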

% \begin{proof}[Proof sketch]
\noindent \textit{Proof sketch.} 
This proof follows along the lines of the analysis of the non-self-stabilizing protocol in \cref{sec:unstable_ranking_analysis}.
The main added difficulty is to show that w.h.p.\ no agent $v$ reaches a state where $\LivenessCount{v} = 0$ which would inadvertently trigger a reset.
Similarly to the proof of \cref{lemma:convergence_bad_case},
this requires an analysis of the synthetic coin among the shrinking and potentially tiny subpopulation of phase agents,
and additionally the analysis of one-way epidemics in the same setting.
\iflong The detailed proof is given in \cref{sec:proof_good_case}. \fi
% \end{proof}
\section{Simulation Results}
\begin{figure}
    \input{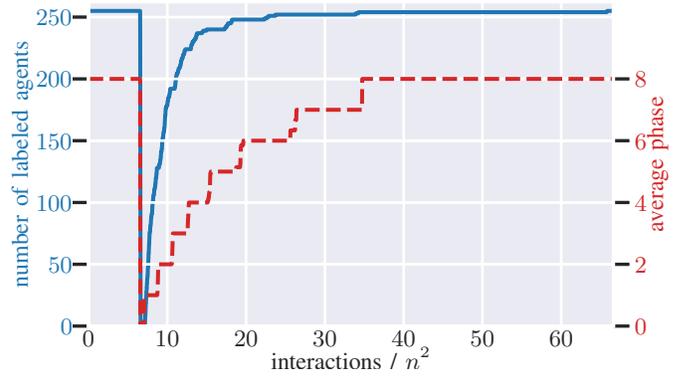}
    \caption{Number of ranked agents (blue), and average of the phase counters stored by unranked agents (red, dashed), as a function of the number of interactions. The protocol (for $n=256$) is initialized as follows: 255 agents are ranked (with ranks $2, \ldots, 256$), and one agent is a phase agent with maximum liveness counter.}
    \label{fig:run256}
\end{figure}

\begin{figure}
    \input{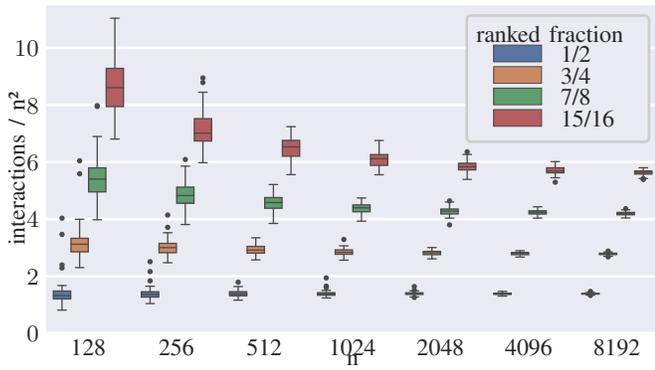}
    \caption{Number of interactions (normalized by $n^2$) needed to reach a state in which $c \cdot n$ agents are ranked, starting from the following configuration: there is one agent in rank 1 (the unaware leader), and all other agents are still in a leader election state.
    We performed 100 simulations per value of $n \in \{2^i \mid i \in \N, 7 \leq i \leq 13\}$.}
    \label{fig:ranktimes}
\end{figure}

{
We implemented a simulation of our population protocol in Rust (with $\cwait=2$ and $\cliveness=\Dmax / \log_2(n) =4$).
In \cref{fig:run256} we show how the protocol resets and then quickly resumes assigning ranks starting from an invalid initialization.
The chosen initialization can be considered to be worst-case as it needs $\Theta(n^2 \log n)$ interactions to reset (in expectation).
The figure shows that most of the runtime is taken up by ranking the final few agents, with successive phases taking increasingly longer.
This is to be expected, as the process is a coupon collection process.
Accordingly, 
it should take about as long to rank half the agents as it takes to rank the next quarter, the next eight, and so on.
\cref{fig:ranktimes} confirms this; there, we consider the number of interactions to rank constant fractions of agents for various~$n$ and fractions.
After 
$\Theta(n^2)$ interactions, constant fractions of agents are ranked, much faster than the $\Theta(n^2 \log(n))$ interactions needed to rank \emph{all} agents.

% In \cref{fig:ranktimes}, we consider the number of interactions to rank constant proportions of agents for various~$n$.
% We see that it takes about as long to rank half the agents as it takes to rank the next quarter, the next eight, and so on, $\Theta(n^2)$ interactions each.
% This is much faster than the $\Theta(n^2 \log(n))$ interactions needed to rank \emph{all} agents. %, with increasingly tight concentration of the leading constant as $n$ grows.
}

\section{Conclusion}

We present a self-stabilizing protocol for the population protocol model that solves the ranking problem.
Our protocol is silent and requires $O(n^2 \log n)$ interactions w.h.p.\ using $n+O(\log^2 n)$ states.
% Due to a lower bound by \textcite{DBLP:conf/podc/BurmanCCDNSX21}, our leader election protocol has asymptotically optimal convergence time among silent protocols.
% The number of states used is optimal up to an additive term of $O(\log^2 n)$ by the lower bound by \textcite{DBLP:journals/mst/CaiIW12} for any self-stabilizing leader election protocol.
% A similar approach leads to a \emph{safe} ranking protocol with $n + O(\sqrt{n})$ states and $O(n^2 \cdot \log n)$ stabilization time w.h.p; this result will be included in the full version of this paper.
It is an open question to solve the ranking problem either in $\Theta(n^2)$ interactions in expectation using $n + O(\log(n))$ states or in $\Theta(n^2 \log n)$ interactions w.h.p.\ using $n + o(\log n)$ states.
Another question is if it is possible to improve on the $\Theta(\log^2 n)$ overhead for self-stabilizing leader election while keeping the number of interactions at $O(n^2 \log n)$.
Finally, it is also open whether time-optimal (non-silent) protocols exist that use only subexponentially many states, improving upon the 
% state space size of the $O(n\log n)$ interaction 
protocol by \textcite{DBLP:conf/podc/BurmanCCDNSX21}.

\section*{References}
\printbibliography[heading=none]

\iflong

\appendices
\newpage
\onecolumn

\section{Tail Bounds}

\label{apx:tail-bounds}

\noindent First, we give a tail bound for the negative binomial distribution.
% For $r \geq 1$ and $0 < p < 1$, the negative binomial distribution $\NegBin(r, p)$ is the number of independent trials (each having success probability $p$) needed to reach $r$ successes. Equivalently, $\NegBin(r, p)$ is the sum of $r$ independent copies of $\Geom(p)$.

\begin{lemma}\label{lem:negbin_tailbound}
    Let $X \sim \NegBin(r, p)$ have negative binomial distribution with parameters $r \geq 1$ and $p \in (0, 1)$.
    \begin{enumerate}%[label=(\alph*)]
        \item For ${\gamma > 0}$ and ${n \geq 1}$, \(\Prob{X > \frac{1}{p} \cdot 2 \cdot \bc{r + \gamma\log n}} \leq n^{-\gamma}.\)
        \item \(\Prob{X \leq \frac{1}{2} \cdot \frac{r}{p}} \leq \exp\bc{-\,\frac{r}{6}}.\)
    \end{enumerate}
\end{lemma}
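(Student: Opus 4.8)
Throughout I would interpret $X \sim \NegBin(r,p)$ as the number of independent Bernoulli$(p)$ trials needed to collect $r$ successes, so that $X = G_1 + \dots + G_r$ with the $G_i$ i.i.d.\ $\Geom(p)$ supported on $\{1,2,\dots\}$. With this reading both parts reduce to a standard Chernoff / moment-generating-function computation, and essentially the only work is choosing the free parameter so that the stated constants ($2$ and $1/6$) come out exactly rather than merely up to a factor.

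For part~1 (upper tail) I would sidestep integer-part issues by noting $\{X > c\} \subseteq \{X \ge c\}$ for $c := \tfrac{2}{p}(r + \gamma \log n)$, so that Markov's inequality applied to $e^{sX}$ gives $\Prob{X > c} \le e^{-sc}\,\Exp{e^{sX}}$ for every $s > 0$. Using $\Exp{e^{sG_i}} = \tfrac{pe^s}{1-(1-p)e^s}$ (valid whenever $e^s(1-p) < 1$) and independence, this equals $e^{-sc}\bigl(\tfrac{pe^s}{1-(1-p)e^s}\bigr)^r$. I would then make the convenient choice $e^s = \tfrac{1}{1-p/2}$, which satisfies $e^s(1-p) < 1$ and collapses the bracketed factor to exactly $2$, leaving $\Prob{X > c} \le (1-p/2)^{c}\,2^r \le e^{-pc/2}\,2^r = e^{-(r+\gamma\log n)}\,2^r = (2/e)^r\, n^{-\gamma} \le n^{-\gamma}$, where the final step uses $2 < e$ and $r \ge 1$.

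For part~2 (lower tail) I would pass to the binomial picture: $X \le \tfrac{1}{2}\tfrac{r}{p}$ holds iff the first $m := \floor{\tfrac{r}{2p}}$ trials already contain at least $r$ successes, i.e.\ $\Prob{X \le \tfrac{r}{2p}} = \Prob{\Bin(m,p) \ge r}$. If $m = 0$ this probability is $0$ (since $r \ge 1$) and we are done; otherwise put $\mu := mp \le \tfrac{r}{2}$ and $\beta := \mu/r \in (0,\tfrac12]$. A Chernoff bound on $\Bin(m,p)$ — Markov on $e^{sY}$, the estimate $1+x \le e^x$ applied to $(1+p(e^s-1))^m$, and the optimal choice $e^s = r/\mu$ — then yields $\Prob{\Bin(m,p) \ge r} \le \exp\!\bigl(r(1-\beta+\ln\beta)\bigr)$. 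It remains to check $1 - \beta + \ln\beta \le -1/6$ on $(0,\tfrac12]$; since $\beta \mapsto \ln\beta - \beta$ is increasing on $(0,1)$ it suffices to verify this at $\beta = \tfrac12$, where $1 - \tfrac12 + \ln\tfrac12 = \tfrac12 - 0.693\ldots < -\tfrac16$, giving the claimed $\exp(-r/6)$.

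There is no deep obstacle here; the points requiring care are (i) picking the substitution parameters ($e^s = 1/(1-p/2)$ in part~1, $e^s = r/\mu$ in part~2) so the constants land exactly as stated; (ii) the numeric inequality $1 - \beta + \ln\beta \le -1/6$ at $\beta = 1/2$, which holds only by a small margin; and (iii) the degenerate cases $m = 0$, $p$ near $1$, and $n = 1$, in all of which the asserted bound is trivial. One could alternatively treat part~1 via the binomial picture as well, but the geometric-MGF route is cleaner precisely because it avoids the extra $\floor{\cdot}$ and the attendant loss of a constant factor.
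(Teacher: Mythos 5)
Your proof is correct. The paper states \cref{lem:negbin_tailbound} without proof, treating it as a standard tail bound, so there is no in-paper argument to compare against; your Chernoff/MGF derivation is exactly the expected one. The two places where the constants are tight both check out: in part~1 the substitution $e^s = (1-p/2)^{-1}$ collapses the per-trial MGF to exactly $2$ and leaves $(2/e)^r \le 1$ for $r \ge 1$, and in part~2 the inequality $1-\beta+\ln\beta \le -1/6$ holds at $\beta = 1/2$ since $1/2 - \ln 2 \approx -0.193 < -1/6$, with monotonicity handling the rest of $(0,1/2]$; the degenerate cases ($m=0$, $n=1$) are trivial as you note.
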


% \begin{proof}[Proof sketch]
%     For the first part, note first that $\Prob{\NegBin(r, p) > k} = \Prob{\Bin(k,p) < r}$,
%         since both events happen iff out of $k$ i.i.d.\ Bernoulli trials with probability $p$, there are fewer than $r$ successes.
%     For $k = \frac{2}{p} \cdot (r + \log(1/q))$
%         we have $r = \bc{1 - \bc{\frac{1}{2} + \frac{\gamma \log(n)}{\mu}}} \cdot \mu$ where $\mu = kp$.
%     Hence, by a Chernoff bound \cite[Eq.\ (7)]{DBLP:journals/ipl/HagerupR90},
%         choosing $\delta = \frac{1}{2} + \frac{\gamma \log n}{\mu} < 1$
%             (for which thus $\delta^2 \geq 2\delta - 1 = \frac{2 \gamma \log n}{\mu}$)
%         we have
%         \[\Prob{X > k}
%             \leq \Prob{\Bin(k,p) \leq (1-\delta)\mu} \leq \exp\bc{-\,\frac{\delta^2 \mu}{2}} \leq \exp\bc{-\,\frac{2 \gamma \log n \cdot \mu}{2\mu}} = n^{-\gamma}.\]

%     For the second part,
%         note analogously that $\Prob{\NegBin(r, p) \leq k} = \Prob{\Bin(k,p) \geq r}$.
%     So for $k = r/(2p)$, and hence $r = 2kp = 2\mu$ with $\mu = kp$,
%         we have by a standard Chernoff bound \cite[Eq.\ (6)]{DBLP:journals/ipl/HagerupR90} with $\delta = 1 \in [0, 1]$:
%     \[\Prob{X \leq k} = \Prob{\Bin(k,p) \geq r} \leq \Prob{\Bin(k,p) \geq (1 + \delta)\mu} \leq \exp\bc{-\,\frac{\delta^2 \mu}{3}} = \exp\bc{-\,\frac{r}{6}},\]
%     as claimed.
% \end{proof}

Next, we give a bound on the running time of coupon collection.
We let $\CouponCollector(n)$ be the distribution of the number of trials needed in the coupon collector's problem with $n$ coupons. %;
    %i.e., we let \[\CouponCollector(n) = \sum_{i \in [n]} X_i,\ \textup{with $X_i \sim \Geom\bc{\frac{i}{n}}$ being independent}.\]
The following lemma is a standard upper tail bound on this distribution.
\begin{lemma}[cf., e.g., Section 3.3.1 in \cite{mitzenmacher2017probability}]\label{lem:coupon_collector_tailbound}
    Let $1 \leq k \leq n$, and $\gamma > 0$.
    Then for $X \sim \CouponCollector(k)$ we have
        \[\Prob{X > k\bc{\log(k) + \gamma \log n}} \leq n^{-\gamma}.\]
\end{lemma}

Finally we give upper and lower bounds for the running time of one-way epidemics based on Lemma 2 from \cite{DBLP:journals/dc/AngluinAE08a}.
%
% In the following we adopt the lemma to the case where only a subset of the agents participate in the one-way epidemics.
%
Write $\OWE(n, m)$ for the distribution of the number of interactions needed
    to perform a one-way epidemic among a subset of $m$ agents in a total population of $n$ agents,
    where one of the $m$ agents is initially infected.
%I.e., we let     \[\OWE(n, m) = \sum_{i \in [m-1]} X_i,\ \textup{with $X_i \sim \Geom\bc{\frac{2i(m-i)}{n(n-1)}}$ independent.}\]

\begin{lemma}\label{lem:subset_broadcast_upper_tail}
    For $X \sim \OWE(n,m)$ for $2 \leq m \leq n$, and all $\gamma > 0$, we have
        \[\Prob{X > 3 \frac{n^2}{m} \cdot \bc{\log(m) + 2\gamma \log\bc{n}}} \leq 2n^{-\gamma}.\]
\end{lemma}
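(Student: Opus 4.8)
The plan is to reduce the one-way epidemic on the $m$-agent subset to a sequence of independent geometric random variables, one for each new infection, and then apply the negative-binomial tail bound (\cref{lem:negbin_tailbound}) after replacing the non-uniform success probabilities with a uniform lower bound. First I would set up notation: at the moment when exactly $i$ of the $m$ agents are infected (for $1 \leq i \leq m-1$), a given interaction causes a new infection exactly when it is an ordered pair $(u,v)$ with $u$ infected and $v$ an uninfected member of the subset; there are $i(m-i)$ such pairs out of $n(n-1)$, so the waiting time $Y_i$ for the $i$th-to-$(i+1)$th transition is $\Geom\bc{\frac{i(m-i)}{n(n-1)}}$, and the $Y_i$ are independent because the corresponding interaction windows are disjoint. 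Then $X = \sum_{i=1}^{m-1} Y_i$.

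The obstacle is that the factor $i(m-i)$ is smallest at the endpoints ($i=1$ and $i=m-1$, giving $m-1$) and largest in the middle (giving $\approx m^2/4$), so the crude bound $i(m-i) \geq m-1$ is far too weak to sum to the claimed $O(n^2 \log(m)/m)$ bound — it would only give $O(n^2/m)$ per step times $(m-1)$ steps $=O(n^2)$, which is off by a $\log m / m$ factor in the wrong direction, i.e.\ it is way too lossy. The standard fix, which I would carry out, is to split the sum by symmetry: pair up step $i$ with step $m-i$, or equivalently bound $i(m-i) \geq \tfrac{m}{2}\cdot\min(i, m-i) \geq \tfrac{m}{2}\cdot\tfrac{1}{2}\min(i,m-i+?)$; more cleanly, use $i(m-i) \geq \frac{m}{2}\min(i,m-i)$ for $1\le i\le m-1$ so that
\[
X \;\preceq\; \sum_{i=1}^{m-1} \Geom\!\bc{\frac{m\min(i,m-i)}{2n(n-1)}}
\;\preceq\; 2\sum_{j=1}^{\lceil m/2\rceil} \Geom\!\bc{\frac{mj}{2n(n-1)}},
\]
where the final domination accounts for each value of $\min(i,m-i)$ being hit at most twice. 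Each remaining geometric has success probability at least $\frac{mj}{2n^2}$, so its mean is at most $\frac{2n^2}{mj}$, and the means sum to $\frac{2n^2}{m}\sum_{j\le\lceil m/2\rceil}\frac1j = \frac{2n^2}{m}\cdot O(\log m)$ — the harmonic sum is exactly where the $\log m$ comes from. This sum of independent geometrics is a sum of (scaled) negative-binomial-type variables; I would bound the whole thing in one shot by observing $\Geom(p)\preceq\Geom(q)$ for $p\ge q$ and grouping, or more simply by a union bound over $O(\log m)$ dyadic blocks of the index $j$, applying \cref{lem:negbin_tailbound}(1) to each block with a failure probability of $n^{-2\gamma}$ each so the total is at most $2n^{-\gamma}$ (using $m\le n$ so there are at most $\log_2 n$ blocks and $\log_2 n \cdot n^{-2\gamma} \le n^{-\gamma}$ for $n\ge 2$, absorbing constants).

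Concretely the steps in order: (i) express $X$ as the independent sum $\sum_{i=1}^{m-1}Y_i$ with $Y_i\sim\Geom(i(m-i)/(n(n-1)))$; (ii) apply the symmetry bound $i(m-i)\ge \frac{m}{2}\min(i,m-i)$ and the $2{:}1$ stochastic domination to get $X\preceq 2\sum_{j=1}^{\lceil m/2\rceil}\Geom(mj/(2n^2))$; (iii) partition $\{1,\dots,\lceil m/2\rceil\}$ into dyadic blocks $B_\ell=\{2^{\ell-1},\dots,2^\ell-1\}$ for $\ell=1,\dots,L$ with $L=\lceil\log_2 m\rceil$, note each block's contribution is dominated by $\NegBin(|B_\ell|, m2^{\ell-1}/(2n^2))$ with mean $\le \frac{2n^2}{m}\cdot\frac{|B_\ell|}{2^{\ell-1}}\le \frac{4n^2}{m}$, and apply \cref{lem:negbin_tailbound}(1) to each with parameter $2\gamma$ to get that block exceeds $\frac{2\cdot 2n^2}{m}(|B_\ell| + 2\gamma\log n)\le \frac{4n^2}{m}(2^{\ell-1}+2\gamma\log n)$ with probability at most $n^{-2\gamma}$; (iv) union-bound over the $L\le\log_2 n +1$ blocks and sum the per-block bounds, using $\sum_\ell 2^{\ell-1}\le m$ and $\sum_\ell 2\gamma\log n \le 2\gamma\log n\log_2 m$, and finally the factor-$2$ from step (ii), to conclude $X\le \frac{8n^2}{m}(m\cdot\tfrac1m\log m\cdot c + \ldots)$; after collecting constants this is at most $3\frac{n^2}{m}(\log m + 2\gamma\log n)$ as claimed (the generous constant $3$ and the slack in $\gamma$ vs $2\gamma$ are exactly there to absorb the $\log_2 n$ and the various factors of $2$ and $4$). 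The one delicate point to get right is the bookkeeping in step (iv) so that the final constant is genuinely $\le 3$ and the final failure probability is genuinely $\le 2n^{-\gamma}$; everything else is routine.
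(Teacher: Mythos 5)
You set this up the right way — $X=\sum_{i=1}^{m-1}Y_i$ with independent $Y_i\sim\Geom\bc{\frac{i(m-i)}{n(n-1)}}$, with the $\log m$ coming from a harmonic sum — and I note the paper itself omits the proof of this lemma (it is only attributed to being ``based on'' Lemma~2 of Angluin et al.), so I am judging your argument on its own terms. The genuine gap is in your step~(ii). The exact mean is obtained from the partial-fraction identity $\frac{1}{i(m-i)}=\frac1m\bc{\frac1i+\frac1{m-i}}$, giving $\Exp{X}=\frac{2n(n-1)}{m}H_{m-1}\approx\frac{2n^2}{m}\log m$; for $m$ close to $n$ and $\gamma$ small the target threshold $\frac{3n^2}{m}(\log m+2\gamma\log n)$ is therefore only about $1.5\,\Exp{X}$. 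Your bound $i(m-i)\ge\frac m2\min(i,m-i)$ throws away exactly this factor of two: the dominating sum $\sum_i\Geom\bc{\frac{m\min(i,m-i)}{2n(n-1)}}$ has mean $\approx\frac{4n^2}{m}\log(m/2)$, which for $m=n$ and small $\gamma$ \emph{exceeds} the threshold $3n\log n$ (e.g.\ $n=m=100$: dominating mean $\approx 1549$ interactions versus threshold $\approx 1381$, with standard deviation only $O(n)$). A variable whose mean lies above the threshold and which concentrates cannot satisfy the claimed tail bound, so no concentration argument applied after step~(ii) can recover the statement. The dyadic blocking then compounds the loss: coarsening each block to its smallest index and invoking \cref{lem:negbin_tailbound}(1), whose threshold is already twice the mean, lands you near $\frac{8n^2}{m}\log_2 m+\frac{16\gamma n^2\log n}{m}$, far outside $\frac{3n^2}{m}\log m+\frac{6\gamma n^2\log n}{m}$. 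So this is not merely delicate bookkeeping, as your closing sentence hopes; the constants provably do not close along this route. (A smaller issue of the same flavor: the union bound over $\lceil\log_2 m\rceil$ blocks at level $n^{-2\gamma}$ each does not give $2n^{-\gamma}$ uniformly over all $\gamma>0$.)

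To prove the lemma with the constant $3$ you must keep the heterogeneous success probabilities $\frac{i(m-i)}{n(n-1)}$, use the partial-fraction identity so that the mean is genuinely $\frac{2n(n-1)}{m}H_{m-1}$, and then apply a tail bound for sums of independent geometrics with \emph{distinct} parameters — \cref{lem:negbin_tailbound} only covers identical parameters. Janson's inequality $\Prob{X\ge\lambda\mu}\le e^{-p_*\mu(\lambda-1-\ln\lambda)}$ with $\mu=\Exp{X}$ and $p_*=\min_i p_i=\frac{m-1}{n(n-1)}$ does the job: here $p_*\mu=\frac{2(m-1)}{m}H_{m-1}\ge H_{m-1}$, and writing the threshold as $\lambda\mu$ one checks that the exponent dominates $\gamma\log n+\log 2$ for all $\gamma>0$ and all $2\le m\le n$. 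Alternatively, your route does establish the lemma with a weaker leading constant, but that constant is hard-wired downstream into the requirement $\cwait\ge 24+48\gamma$ in \cref{lem:unsafe_waiting_subphase}, so it cannot simply be waved away.
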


\section{Omitted Proofs of \cref{sec:unstable_ranking}}
\label{apx:sec:unstable_ranking}

In this appendix we present the omitted proofs from \cref{sec:unstable_ranking}. We first state a lemma that establishes the correctness of leader election based on the protocol by \textcite{DBLP:conf/soda/GasieniecS18}.

\begin{lemma}[cf.\ Lemma 2.4 in \cite{DBLP:conf/podc/BerenbrinkKR19}, citing \cite{DBLP:conf/soda/GasieniecS18}]\label{lem:leszek_clock}
    There is a population protocol using $O(\log\log n)$ states w.h.p.\ electing a unique leader in $O(n \log^2 n)$ interactions w.h.p.: after at most $O(n \log^2 n)$ interactions, there is, w.h.p., an agent $\ell$ with $\LeaderDone{\ell} = 1$ and $\LeaderBit{\ell} = 1$, and at that time, w.h.p., all other agents $v \neq \ell$ have $\LeaderBit{u} = 0$.
\end{lemma}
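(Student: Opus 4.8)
The plan is to obtain Lemma~\ref{lem:leszek_clock} by invoking the leader election protocol of Gasieniec and Stachowiak~\cite{DBLP:conf/soda/GasieniecS18} essentially as a black box, augmenting it with a \emph{done} indicator $\LeaderDone$ in the manner of~\cite{DBLP:conf/podc/BerenbrinkKR19}. First I would recall the structure of that protocol: a small ``junta'' of $O(1)$ agents is elected first (this succeeds within $O(n\log n)$ interactions w.h.p.), and these junta agents drive a phase clock with only $O(\log\log n)$ states that partitions the execution into $\Theta(\log n)$ consecutive phases, each lasting $\Theta(n\log n)$ interactions w.h.p.\ and all phases being synchronized across the population up to an additive $O(1)$ offset w.h.p. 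Running concurrently is the leader-elimination process, which, given the phase clock, reduces the number of active leader candidates to exactly one within the allotted $\Theta(\log n)$ phases w.h.p., after which no candidate is ever revived.

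\textbf{Key steps.} (i) State the guarantees of the junta election and phase clock from~\cite{DBLP:conf/soda/GasieniecS18} (number of states, duration and synchronization of phases, all w.h.p.). (ii) State the guarantee of the leader-elimination subprotocol: conditioned on a valid phase clock, after the final phase there is a unique agent with $\LeaderBit{\cdot}=1$ and all others have $\LeaderBit{\cdot}=0$, w.h.p. (iii) Define $\LeaderDone{v}=1$ to be triggered once $v$'s local phase counter indicates the final phase has been reached; since the phase clock is synchronized to within $O(1)$ phases w.h.p., every agent sets its bit within a window of $O(n\log n)$ interactions, and by the time the \emph{first} agent does so the leader-elimination has already settled w.h.p., so a stray $\LeaderBit{\cdot}=1$ on a non-leader cannot persist once $\LeaderDone$ is globally set. (iv) Bound the state complexity: the junta election, the phase clock, the leader-elimination coordinates, and the single done bit together use $O(1)\cdot O(\log\log n)=O(\log\log n)$ states. (v) Bound the running time: $\Theta(\log n)$ phases of $\Theta(n\log n)$ interactions each gives $O(n\log^2 n)$ interactions in total, w.h.p. (vi) Combine (i)--(v) by a union bound over the $O(1)$ bad events, each of probability $n^{-\Omega(1)}$ (after choosing constants appropriately, and amplifying the success probability of the constant-probability junta step by the usual restart trick built into the protocol).

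\textbf{Main obstacle.} The delicate point is step~(iii): guaranteeing, with only $O(\log\log n)$ states per agent, that $\LeaderDone{v}$ is \emph{never} set prematurely, i.e.\ before leader-elimination has produced a unique leader, except with probability $n^{-\Omega(1)}$. This is precisely the interplay between the phase-clock accuracy bound and the convergence time of the elimination process that is analyzed in~\cite{DBLP:conf/soda/GasieniecS18} and re-packaged in~\cite[Lemma~2.4]{DBLP:conf/podc/BerenbrinkKR19}; re-deriving it here would essentially reproduce that analysis, so I would cite it and only verify that the constants in our phase-clock length (which determines $\Lmax$) are chosen large enough to absorb the $O(1)$ phase offset and to make the elimination failure probability polynomially small. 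The remaining steps are routine once this is in place.
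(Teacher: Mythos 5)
The paper gives no proof of this lemma at all: it is imported verbatim as a cited result (Lemma~2.4 of \cite{DBLP:conf/podc/BerenbrinkKR19}, which in turn packages the guarantees of \cite{DBLP:conf/soda/GasieniecS18}), and your proposal ultimately does the same, deferring the delicate phase-clock/elimination interplay to those references while sketching their internals. That matches the paper's approach; the only quibble is that the junta in \cite{DBLP:conf/soda/GasieniecS18} is not $O(1)$ agents but a larger (still small) set, which does not affect the stated bounds.
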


We now give the full technical proofs for the lemmas from \cref{sec:unstable_ranking}.

\begin{proof}[Proof of \cref{leader}]
    Recall that at time $t=0$,
        all agents $v$ start with $\qLE v = q_0$ where $q_0$ is the initial state of the leader election protocol as described by \cite{DBLP:conf/podc/BerenbrinkKR19},
        and with $\LeaderDone v = 0$.
    Whenever two agents with $\qLE v \neq \bot$ interact,
        they follow the transition function of the leader election (see lines \ref{ln:le-s}--\ref{ln:le-e} of \cref{alg:unsafe_silent_plus}).
    As long as no agent $\ell$ transitions into a state with $\LeaderBit \ell = 1$ and $\LeaderDone \ell = 1$,
        these are the only transitions taking place,
        since the one-way epidemic setting $\qLE v$ and $\LeaderDone v$ to $\bot$ (see lines \ref{ln:oe-s}--\ref{ln:oe-e}) requires an initially infected agent, which is $\ell$ (see lines \ref{ln:le-s-s}--\ref{ln:le-s-e}).
    By \cref{lem:leszek_clock},
        w.h.p.\ this takes place at a time $\tau$ with $\tau = O(n \log^2 n)$;
        the leader $\ell$ is then immediately transitioned into the state where $\WaitCount \ell = \ceil{\cwait \cdot \log n}$ in the same interaction (line \ref{ln:le-s-e}).
    And at that point, by \cref{lem:leszek_clock}, all other agents $v \neq \ell$ have $\LeaderBit v = 0$.
    So after the interaction at time $\tau$, the configuration fulfills all the requirements of a safe ranking configuration, yielding the claim.
\end{proof}

\begin{proof}[Proof of \cref{lemmacorrectness_non_ss}]
We show for any constant $\gamma > 0$ that, assuming that $\vX_t$ is an arbitrary configuration in $\CSR$,
    there is, with probability at least $1 - O(\log n \cdot n^{-\gamma})$,
    a $\tau \leq 4(\cwait + 5\gamma + 1)n^2 \ceil{\log_2 n}$
    such that $\vX_{t+\tau} \in \CL$,
    which implies the claim.

First, we show by induction that for all $1 \leq k \leq \ceil{\log_2 n} \eqqcolon k_\mathrm{max}$,
    with probability at least $p_k = 1 - 6 (k-1) n^{-\gamma}$ there is a $\tau \leq \tau_{k,\mathrm{max}} = 4(k-1)n^2 + (\cwait + 5 \gamma) n \log n \cdot \sum_{k'=1}^{k-1} 2^k$
        such that $\vX_{t+\tau} \in \CWFWait{k}$.
For $k=1$, the statement is true with $\tau_k = 0$ since $\vX_t \in \CSR = \CWFWait{1}$ by assumption and definition of $\CWFWait{1}$.

So assume the statement is true for some $k < k_\mathrm{max}$.
Then with probability at least $p_k$, there is a $\tau_k \leq \tau_{k,\mathrm{max}}$ such that $\vX_{t+\tau} \in \CWFWait{k}$.
Assuming this is the case, by \cref{lem:unsafe_waiting_subphase},
    with probability at least $1 - 5n^{-\gamma}$
    there is a $\tau' \leq (\cwait + \gamma) 2^k n \log n$ such that $\vX_{t+\tau+\tau'} \in \CWFRank{k}$.
And assuming that \emph{this} is the case,
    there is, by \cref{lem:unsafe_ranking_subphase},
        with probability at least $1 - n^{-\gamma}$,
        a $\tau'' \leq 2n^2 + 2\gamma 2^k n \log n$
        such that $\CWFWait{k+1}$ (since $k < k_\mathrm{max}$).
So overall by the union bound,
    with probability at least $p_k - 6n^{-\gamma} = p_{k+1}$,
        there is a $\tau''' = \tau + \tau' + \tau''
            \leq \tau_{k,\mathrm{max}} + 4n^2 + (\cwait + 5\gamma) 2^k n \log n
            = \tau_{k+1,\mathrm{max}}$
        such that $\vX_{t+\tau'''} \in \CWFWait{k+1}$.

Using the statement proven by induction
    and analogously applying \cref{lem:unsafe_ranking_subphase,lem:unsafe_waiting_subphase} a final time for $k = k_\mathrm{max}$,
    we see by the union bound
    that with probability at least $1 - 6 k_\mathrm{max} n^{-\gamma}$,
    there is a $\tau$ with $\vX_{t+\tau} \in \CL$,
    where
        \begin{align*}
        \tau
           &\leq \tau_{k_\mathrm{max},\mathrm{max}} + 4n^2 +  (\cwait + 5\gamma) 2^{k_\mathrm{max}} n \log n
            = 4 k_\mathrm{max} n^2 + (\cwait + 5\gamma)n \log n \cdot \sum_{k=1}^{k_\mathrm{max}} 2^k.
        \end{align*}
Now as $2^{k_\mathrm{max}} = 2^{\ceil{\log_2 n}} < 2n$,
    and the geometric series sums to $\sum_{k=1}^{k_\mathrm{max}} 2^k = 2^{k_\mathrm{max} + 1} - 1 < 2 \cdot 2^{k_\mathrm{max}} < 4n$,
    we have
    \[\tau \leq 4n^2 \ceil{\log_2 n} + 4 (\cwait + 5\gamma) n^2 \log n
        \leq 4(\cwait + 5\gamma + 1)n^2 \ceil{\log_2 n},\]
    as claimed.
\end{proof}

\section{Description of $\FastLeaderElect$}
\label{sec:apx:simple_leader_elect}

In this section, we describe $\FastLeaderElect$, a simple protocol that elects a leader with constant probability and otherwise triggers reset.
The protocol and its state space are formally defined in \cref{{alg:leader_elect}}.
Each agent $v \in V$ has a counter $\LECount{v} \in [0,\Lmax]$, a counter $\CoinCount{v} \in [\lceil\log n\rceil]$, and two flags $\LeaderDone{v} \in \{0,1\}$ and $\LeaderBit{v} \in \{0,1\}$ as variables.
Slightly abusing notation, $\QLE$ is again the state space used by the leader election.
Just as $\Rmax$ and $\Dmax$, we will bound $\Lmax = \cliveness \cdot \log n$ for some $\cliveness > 0$ in the analysis.
As the bounds for the other variables are fixed, it holds $|\QLE| \in O(\log^2 n)$.

Recall that all unranked agents, even the \emph{dormant} agents, have a variable called $\Coin{v}$ that is flipped on each activation.
In a nutshell, the protocol works as follows: an agent $v \in V$ will declare itself to be the leader ($\LeaderBit{v} = 1$) if it observes $\lceil\log n\rceil$ \emph{heads} ($\Coin{w} = 1$) in a row.
Furthermore, to avoid being stuck in a configuration without leaders, we use $\LECount{v}$ to count each agent interactions.
If it reaches $0$, the protocol triggers a reset.

More precisely, the protocol uses the following interactions.
Whenever an agent $v \in V$ interacts with another agent $w \in V$, the following happens.
If $\LeaderDone{v} = 1$, $v$ has already decided if it is a leader and will not consider the coin.
Otherwise, it observes its coin $\Coin{w}$.
If $\Coin{w} = 0$, it sets $\LeaderDone{v} = 1$.
If $\Coin{w} = 1$, it decrements $\CoinCount{v}$ by $1$.
If $\CoinCount{v}$ reaches $0$, $v$ has seen $\lceil\log n\rceil$ \emph{heads} in a row, and becomes leader.
To this end, it sets $\LeaderBit{v} = 1$ and $\LeaderDone{v} = 1$.
Finally, $v$ decrements $\LECount{v}$ by $1$.
If it reaches $\Lmax/2$ and $v$ is the leader, $v$ assumes that it is the unique leader.
Therefore, it transitions to the main protocol by turning into a waiting agent.
This will start a one-way epidemic that lets all other agents enter a state from $\QM$.
Finally, if $\LECount{v}$ reaches $0$ and $v$ is not a leader, $v$ assumes that no leader was elected.
Therefore, it triggers a reset.

We assume that every agent $v$ starts the protocol in a state $q_{0,i}$ for $i \in \{0, 1\}$ where
\[ (\LECount{v},\CoinCount{v},\LeaderDone{v},\LeaderBit{v},\Coin{v})\!=\!(\Lmax,\lceil\log n\rceil,0,0,\bot),\]
and all other fields are $\bot$.

% Recall that the protocol $\Ranking$ starts with the election of leader, which starts a broadcast to initialize the protocol.
% The first source of errors is not rooted in our protocol itself but in this initial leader election.
% To illustrate these errors, consider a leader election protocol that is not self-stabilizing, e.g., the protocol by Gasieniec and Stachowiak\cite{DBLP:conf/soda/GasieniecS18}.
% If such a protocol is started from an \emph{arbitrarily} corrupted state, it may never elect a leader, and we cannot properly start our ranking protocol.

% Whenever an agents $v \in V$ with state in $\QLE$ interacts with another agent $w$, the following happens:
% if $\LeaderDone{v}=0$, $v$ does nothing.

% This counter is incremented whenever an agent interacts with another agent that performs the leader election.

% This lets each agent count its interactions during the leader election to determine when the protocol (likely) has elected a leader.
% Using standard arguments, one can easily show that after $c\cdot n\cdot\log n$ (total) interactions, a single agent made no more than $2\cdot c\cdot \log n$ interactions.

\def\preAlgo{The protocol uses the following state space.\begin{align*}
    \QLE = \underbrace{\left\{  1, \ldots,  \Lmax \right\}}_{\texttt{LECount}} \times \underbrace{\left\{  1, \ldots,  \lceil\log n\rceil\right\})}_{\texttt{coinCount}} \times \underbrace{\{0,1\}}_{\texttt{leaderDone}} \times \underbrace{\{0,1\}}_{\texttt{isLeader}}
\end{align*}}
\begin{figure*}
\columnwidth=\textwidth
\begin{algorithm}[H]{\FastLeaderElect($u$, $v$):\label{alg:leader_elect}}
$\sepline{leader election phase}$
$\LECount{u} \gets \LECount{u} - 1$

if $\FlipBit{v} = 0$ then $\LeaderDone{u} \gets 1$ /*if random coin is $0$, $u$ will not be leader*/

if $\LeaderDone{u} = 1$ then return /*if $\LeaderDone{u}$ is $1$, do nothing*/

if $\CoinCount{u} > 0$ then
    $\CoinCount{u} \gets  \CoinCount{u} - 1$ /*$u$ counts coins with value $1$*/
else /*$u$ observed $\lceil \log n \rceil$ coins with value $1$*/
    $\LeaderBit{u} \gets 1$  /*$u$ becomes leader*/
    $\LeaderDone{u} \gets 1$  /*$u$ stops looking further*/

$\sepline{transition to main phase}$
if $\LECount{u} \geq {\Lmax}/{2}$ and $\LeaderBit{u}=1$ then /*leader was elected fast enough*/
    $\LECount{u}, \LeaderDone{u}, \LeaderBit{u}, \CoinCount{u} \gets \bot$
    $(\WaitCount{u}, \LivenessCount{u}) \gets (\ceil{\cwait \log n}, \Lmax)$ /*$u$ starts main phase as waiting leader*/
    return

if $\LECount{u} = 0$ then  /*leader was not elected fast enough*/
    $\LECount{u}, \LeaderDone{u}, \LeaderBit{u}, \CoinCount{u} \gets \bot$
    execute $\TriggerReset(u)$ /*trigger a reset*/
\end{algorithm}
\end{figure*}

\section{Omitted Proofs from \cref{sec:ss_ranking_analysis}}

\subsection{Proof of \cref{lemma:convergence_bad_case}}
\label{sec:proof_bad_case}

In this section, we show that starting from an \emph{arbitrary} configuration, the protocol will either reach a correct ranking ($\in \CL$) or trigger a reset within $O(n^2 \log n)$ interactions.
To this end, we divide the execution of the protocol into two phases, the \emph{preparation phase} and the \emph{main phase}.
With $\QM$ being the main states defined at the beginning of \cref{sec:ranking_plus_description},
    we let $\CM$ be the set of configurations where all agents have a state in $\QM$,
    calling them \emph{main configurations}.
We say that the protocol is in the main phase when it is in a main configuration.
We let $\CP$ be the complement of $\CM$, i.e., the set of configurations where at least one agent has state not in $\QM$, calling these \emph{prep configurations}.
We say that the protocol is in the prep phase when it is in a main configuration.
As $\CM$ and $\CP$ partition all configurations by definition, so the system is always in one of the two phases.
The following observation will be useful throughout.
\begin{observation}\label{obs:cmain_stays_unless_reset}
    Assuming that $\vX_t$ is an arbitrary configuration in $\CM$,
        $\vX_{t+1}$ will either contain a triggered agent
            or be in $\CM$.
\end{observation}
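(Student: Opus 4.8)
\textbf{Proof plan for \cref{obs:cmain_stays_unless_reset}.}
The plan is to argue by inspecting every branch of the transition function of \StableRanking\ (\cref{alg:stable_ranking_full}) that, starting from a configuration $\vX_t \in \CM$ where every agent has a state in $\QM$, a single interaction either produces a triggered agent or keeps all agents in $\QM$. First I would note that in a main configuration, no agent is currently executing $\PropagateReset$ or $\FastLeaderElect$: every agent is a ranked, waiting, or phase agent (possibly together with a coin bit and an \texttt{aliveCount}), since these are exactly the states comprising $\QM$. Hence, in $\StableRanking(u,v)$, the first line ($\PropagateReset$) does nothing unless a reset has already been triggered this step, the leader-election block is skipped because $\LeaderDone u = \LeaderDone v = \bot$, and the ``leader meets main agent'' block is likewise skipped. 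So the only subprotocol that can fire is $\Ranking+$, followed by the coin toggle, which clearly keeps the agent in $\QM$.

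The core of the argument is then to check $\Ranking+$ (\cref{alg:ranking_plus}): I would go through its branches one at a time. If the error-detection lines (\ref{ln:same-rank}--\ref{ln:two-waiting}) fire, $\TriggerReset(u)$ is called, so $\vX_{t+1} \in \CT$ and we are done. The liveness-checking block either (a) decrements \texttt{aliveCount} values, which stays in the allowed range $\{1,\dots,\Lmax\}$ as long as it does not hit $0$, or (b) hits $0$ and triggers a reset (lines \ref{ln:alive-count-s}--\ref{ln:alive-count-e}), again landing in $\CT$; I should double-check that the $\max$-minus-one update and the rank-$\{n-1,n\}$ decrement cannot drop a counter below $1$ without the zero-check catching it — this is a routine bookkeeping point but worth stating. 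Resetting $\LivenessCount v$ to $\ceil{\cliveness \log n}$ or $\Lmax$ (lines \ref{ln:coin-0-s}--\ref{ln:coin-0-e}) stays in $\QM$. Finally, the base-protocol branch (lines \ref{ln:coin-1-s}--\ref{ln:coin-1-e}) invokes $\UnsafeSilentProt$, whose transitions only move agents among ranked/waiting/phase states (assigning ranks in $[n]$, setting $\WaitCount$ to $\ceil{\cwait\log n}$, incrementing a bounded phase, etc.), all within $\QM$; and the subsequent ``$u$ became waiting'' adjustment sets $(\Coin u, \LivenessCount u) \gets (0, \Lmax)$, still in $\QM$.

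Putting these cases together: in every branch, either $\TriggerReset$ is executed at some point during the interaction — so $\vX_{t+1}$ contains a triggered agent — or every state update keeps both $u$ and $v$ (and trivially all other agents, who are untouched) in $\QM$, so $\vX_{t+1} \in \CM$. The main obstacle, such as it is, is purely one of completeness: making sure no branch of the (fairly intricate) composed transition function is overlooked, and in particular verifying that the counter arithmetic in the liveness block never silently escapes the declared range $\{1,\dots,\Lmax\}$ of the \texttt{aliveCount} component of $\QM$ without the explicit zero-test firing. I expect this to be a short, mechanical verification rather than a conceptual difficulty, so the observation follows directly by case analysis on the transition function.
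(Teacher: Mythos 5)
Your case analysis is correct and is exactly the justification the paper has in mind: the paper states this as an observation without proof, treating it as immediate from inspecting the transition function, and your branch-by-branch verification (every branch of $\Ranking+$ either calls $\TriggerReset$ or updates states within $\QM$, while the $\PropagateReset$ and leader-election blocks are vacuous in a main configuration) is precisely that inspection spelled out. No gap.
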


First, we will show that we quickly, i.e., within only $O(n \log^2 n)$ interactions, leave the preparation phase and start with the main phase.
\begin{lemma}\label{lem:prep_to_main}
Let $c$ be a sufficiently large constant, and assume $\vX_t \in \CP$.
Then with probability $1 - O\left(\frac{1}{n}\right)$, there is a $\tau \in [c \cdot n \log^2 n]$ such that either $\vX_{t + \tau} \in \CM$ or that the protocol resets at time $t + \tau$.
\end{lemma}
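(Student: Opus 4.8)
\textbf{Proof plan for \cref{lem:prep_to_main}.}
The goal is to show that from any prep configuration we reach a main configuration (or trigger a reset) within $O(n\log^2 n)$ interactions w.h.p. The plan is to look at which non-main states an agent can be in, and argue that each ``kind'' of non-main state either drains away quickly on its own or participates in a one-way epidemic that converts everyone to main states. Recall from the state space in \cref{alg:stable_ranking_full} that a non-main agent is in a \PropagateReset\ state (propagating or dormant), a \FastLeaderElect\ state, or has otherwise-undefined fields. The first step is to handle the \PropagateReset\ agents: by the correctness of \PropagateReset\ (\cite{DBLP:conf/podc/BurmanCCDNSX21}), within $O(n\log n)$ interactions every propagating agent has counted down and become dormant, and then every dormant agent (whose \DelayCount\ is decremented on \emph{every} interaction, so after $O(n\log n)$ of its own interactions, hence $O(n\log n)$ total w.h.p.\ by a coupon-collector / negative-binomial bound, \cref{lem:negbin_tailbound}) exhausts its \DelayCount\ and initializes a \FastLeaderElect\ state.

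The second step is to bound the time spent in \FastLeaderElect\ states. Every agent in such a state decrements \LECount\ on each of its own interactions, and \LECount\ starts at $\Lmax = \Theta(\log n)$. So after $O(\log n)$ interactions of each agent---again $O(n\log^2 n)$ total interactions w.h.p.\ by a union bound over $n$ agents each needing $\Theta(\log n)$ activations, using the negative-binomial tail bound---every agent that is still in a \FastLeaderElect\ state has either (a) triggered a reset because \LECount\ hit $0$ without it becoming leader, or (b) a leader has emerged with $\LECount \geq \Lmax/2$ and transitioned to a waiting agent in $\QM$, starting a one-way epidemic that pulls all remaining leader-election agents into $\QM$ (via the transition in \cref{alg:stable_ranking_full} where a leader-election agent meeting a main-state agent forgets its LE state and becomes a phase agent). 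This epidemic completes in $O(n\log n)$ interactions w.h.p.\ by \cref{lem:subset_broadcast_upper_tail}. In case (a), a reset is triggered, which is the other alternative permitted by the lemma statement, and we are done. The one subtlety is that, once some agents are in $\QM$, they might interact among themselves and trigger a reset via \Ranking+---but that is also an acceptable outcome (``the protocol resets at time $t+\tau$''), and in any case \cref{obs:cmain_stays_unless_reset} guarantees that main-state agents stay in main states until a reset happens, so no new non-main agents are created along the way.

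Stitching these together: the total time is $O(n\log n)$ (propagating $\to$ dormant) $+\;O(n\log n)$ (dormant $\to$ LE) $+\;O(n\log^2 n)$ (LE agents exhaust \LECount\ or a leader emerges) $+\;O(n\log n)$ (epidemic to convert remaining LE agents), which is $O(n\log^2 n)$, and each step fails with probability $O(1/n)$ (after adjusting the constant $\gamma$ in the tail bounds and taking a union bound), giving the claimed $1 - O(1/n)$ success probability. I expect the main obstacle to be the bookkeeping around the \emph{order} in which these conversions happen: a careless argument might implicitly assume all agents are in the same ``layer'' (all propagating, then all dormant, then all LE), but in a worst-case prep configuration different agents can be at different stages simultaneously, and a freshly-reset or freshly-resuscitated agent restarts the \PropagateReset/\FastLeaderElect\ pipeline. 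The clean way to handle this is to observe that \PropagateReset\ is designed precisely so that the reset epidemic, once started, dominates everything---so one should first argue that either no reset ever fires in the window (in which case we just follow the pipeline once) or a reset does fire (in which case we are immediately in the ``protocol resets'' case of the lemma)---and to quantify each phase by a worst-case counter value times a coupon-collector bound, rather than tracking the joint dynamics.
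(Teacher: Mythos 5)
Your proposal is correct and follows essentially the same route as the paper's proof: drain the pipeline propagating $\to$ dormant $\to$ leader-electing by counting each agent's own interactions against $\Rmax$, $\Dmax$, $\Lmax$, then either a reset fires or a leader enters $\QM$ and a one-way epidemic converts the rest. The paper's version is terser (it simply asserts $O(n\Rmax)$, $O(n\Dmax)$, $O(n\Lmax)$ interactions per stage without spelling out the tail bounds or the ordering subtlety you flag), but the decomposition and the key observations are the same.
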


\begin{proof}
If there are still propagating agents, by the properties of the reset protocol, these agents will eventually become dormant.
Thus, within $O(n \Rmax)$ steps, we will reach a configuration where all agents are dormant, or in a state of $\QLE$ of $\FastLeaderElect$, or in a state from $\QM$.
Note that any interaction with a dormant or electing agent with an agent from $\QM$ will change both agent's states to a state from $\QM$.
Thus, if one agent is in $\QM$, all agents will be in $\QM$ after $O(n\log n)$ steps.
Therefore, it remains to show that either one agent enters $\QM$ or triggers a reset.
Note that all dormant agents decrease their $\DelayCount{v}$ by one on every interaction.
Thus, within $O(n \Dmax)$ steps, we will reach a configuration where all agents are in a state of $\QLE$ of $\FastLeaderElect$.
All leader-electing agents decrease their $\LECount{v}$ by one on every interaction.
Thus, within $O(n \Lmax)$ steps, they will either trigger a reset or one agent becomes leader and switches to $\QM$ (which triggers an epidemic that turns all agents to $\QM$).
\end{proof}

Recall from the proof sketch of \cref{lemma:convergence_bad_case} that we call a pair of agents $u \neq v$ a \emph{productive pair} if it fulfills the condition in line \ref{ln:productive} of \cref{alg:ranking_plus} (ignoring the coin),
    i.e., if it is a pair where the protocol \emph{could} make progress if the phase agent's coin shows $1$.
There are hence two ways for a pair of agents to be productive.
    Either $u$ may assign a rank to $v$ when interacting (ignoring the valid of $\Coin v$ here), i.e., when $\RankT t u \neq \bot \neq \PhaseT t v$, and $\RankT t u \leq \floor{n \cdot 2^{-\PhaseT t v}}$, in which case we also call it a \emph{rank-assigning pair};
    or $u$ is waiting and $v$ has a phase,
        i.e., $\WaitCountT t u \neq \bot \neq \PhaseT t v$.
Recall that we define the potential $\Phi_t$ as $0$ if there is no productive pair or there is a resetting agent in $\vX_t$ and as \[\Phi_t = \sum_{v \in [n]\colon \PhaseT t v \neq \bot} 2^{-\PhaseT t v}\] otherwise.

Our main lemma is two-fold.
    The first part states that the potential will drop to $0$ within $O(n^2 \log n)$ interactions w.h.p.\ when it is currently in a main configuration. We prove this part in \cref{sec:proof:lem:all_main_state_config_dies_quickly}.
    The second part states that once the potential has hit $0$, the protocol will either reach a stable configuration or reset within a further $O(n^2 \log n)$ interactions w.h.p.
    We prove that part in \cref{sec:proof:lem:dead_configs_end_quickly}.

\begin{lemma}\label{lem:main_states_reset_quickly}
    Let $c$ be a sufficiently large constant independent of $\cwait$ or $\cliveness$.
    \begin{enumerate}%[noitemsep,label=(\alph*)]
        \item Assume $\vX_t$ is an arbitrary configuration in $\CM$. Then w.h.p.\ there is a $\tau \leq c \cdot \cwait \cdot n^2 \log n$ such that $\Phi_{t+\tau} = 0$.
        \item Assume $\vX_t$ is an arbitrary configuration where $\Phi_t = 0$.
            Then w.h.p.\ there is a $\tau \leq c \cdot \cliveness \cdot n^2 \log n$ such that either $\vX_{t + \tau}$ is stable ($\in \CL$) or contains a resetting agent.
    \end{enumerate}
\end{lemma}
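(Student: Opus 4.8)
\textbf{Plan for proving Lemma~\ref{lem:main_states_reset_quickly}.}
The plan is to split the work exactly as the statement does: a potential-decay part (part~1) and a stuck-configuration part (part~2). For part~1, by \cref{obs:cmain_stays_unless_reset} we may assume throughout that no reset has been triggered (otherwise $\Phi$ is already $0$ by definition), so the configuration stays in $\CM$. I would introduce the notion of a \emph{good time step} as sketched before the lemma: a step is good if (a) there are two equally-ranked agents or two waiting agents (a directly detectable error), or (b) an agent with the minimum saved phase can increase its phase, or (c) an agent can be ranked and a constant fraction of the phase agents have their coin showing $1$. In each case I would verify a one-step drift bound $\ExpCond{\Phi_t - \Phi_{t+1}}{\vX_t} = \Omega(\Phi_t / n^2)$ conditioned on the step being good. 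Case~(a) is immediate: detection happens with probability $\geq n^{-2}$ and sends $\Phi$ to $0$. Cases (b) and (c) are the ones where the key cancellation happens: if $s$ is the number of phase agents, the relevant interaction occurs with probability $\Omega(s/n^2)$, and the agent that changes contributes at least $\Phi/s$ to the potential, so the $s$ cancels and we get drift $\Omega(\Phi/n^2)$. Given enough good steps in a window of length $O(n^2\log n)$, a standard supermartingale / geometric-decay argument (decay factor $1-\Omega(n^{-2})$ per good step, plus the fact that $\Phi_t$ is bounded below by $2^{-\ceil{\log_2 n}} = \Omega(1/n)$ whenever it is nonzero) shows $\Phi$ reaches $0$ w.h.p.

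The remaining ingredient for part~1 is that there \emph{are} enough good time steps in the window. Whenever $\Phi_t > 0$ there is a productive pair, so one of cases (a), (b), (c) is ``available'' — the subtlety is case (c), which additionally requires a constant fraction of phase agents to currently show heads. Here I would argue: if (a) or (b) is available at time $t$ we already have a good step; otherwise all phase agents share the same phase and any one of them could be ranked, and I need the synthetic coin restricted to the phase-agent subpopulation to show heads on a constant fraction of agents for a constant fraction of steps. The difficulty, flagged in the proof sketch, is that this subpopulation shrinks as agents get ranked, can become as small as $o(\log n)$, and agents leave it in a coin-dependent way, so the usual synthetic-coin concentration bounds do not apply out of the box. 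My plan is to handle this by a direct argument on the coin: track the number of phase agents showing heads and show that over any window of $\Theta(n^2)$ interactions it is $\Omega(s)$ for a constant fraction of the time, using the fact that the coin is flipped on every activation and a martingale/coupling argument that does not need the subpopulation to be large — essentially, even a tiny subpopulation will, within $O(n^2)$ interactions, see each of its members activated many times, and a Chernoff-type bound over those activations gives the needed fraction. This coin analysis is the main obstacle and I expect it to be the bulk of the work; it is carried out in the part of the appendix referenced as \cref{sec:proof:lem:all_main_state_config_dies_quickly}.

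For part~2, I would assume $\Phi_t = 0$, i.e., there is no productive pair (if there is already a resetting agent, the claim holds trivially with $\tau=0$). I would do a case distinction on the structure of the configuration. If two agents share a rank, they trigger a reset upon direct interaction, which happens within $O(n^2)$ interactions w.h.p. If all ranks are distinct, then either every agent is ranked — in which case the configuration is in $\CL$ and we are done — or there is at least one unranked agent. With no productive pair and at least one unranked agent, no further ranks can ever be assigned and no saved phase can advance, so the liveness counter $\LivenessCount{v}$ can only be reset via lines \ref{ln:coin-0-s}--\ref{ln:coin-0-e} when line~\ref{ln:productive} is satisfied, which by assumption it never is. Hence every unranked agent's $\LivenessCount{}$ is monotonically non-increasing (decremented on meetings with unranked agents via the max-minus-one rule, and on meetings with an agent ranked $n{-}1$ or $n$), and I need to show it actually hits $0$ within $O(n^2\log n)$ interactions w.h.p. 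This is where I would invoke the argument adapted from \cite[Lemma 3.3]{DBLP:conf/podc/BurmanCCDNSX21} (itself from \cite[Lemma 1]{DBLP:conf/dna/AlistarhDKSU17}): the maximum liveness counter among unranked agents behaves like a ``countdown that spreads'', decreasing by at least one every time the current maximum-holder meets any agent, so after $\Theta(\log n)$ such meetings it is exhausted; since the relevant agent is met frequently enough, $O(n^2\log n)$ interactions suffice w.h.p. One edge case deserving care is a single unranked agent: then the max-minus-one rule never fires, but line~\ref{ln:n-1-or-n-e} ensures the counter still decreases whenever that agent meets an agent ranked $n{-}1$ or $n$, which happens within $O(n^2)$ interactions per decrement, so $O(n^2\log n)$ total still works. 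Assembling the two parts and a union bound over the $O(\log n)$-many phases gives the lemma; the detailed version is in \cref{sec:proof:lem:dead_configs_end_quickly}.
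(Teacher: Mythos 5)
Your part~2 plan matches the paper's proof in structure and substance (the same three-way case distinction, the rank-$(n{-}1)$/$n$ rule for a single unranked agent, and the exponential-potential argument from Burman et al.\ for several unranked agents), and your part~1 plan — good steps, the $\Omega(\Phi/n^2)$ one-step drift with the cancellation of the number $s$ of phase agents, geometric decay, and a dedicated coin analysis on the shrinking subpopulation — is also the paper's route. However, part~1 contains a genuine gap: you assert that whenever $\Phi_t>0$ there is a productive pair and hence ``one of cases (a), (b), (c) is available.'' This is false. A productive pair may consist of the \emph{unique} waiting agent and a phase agent (the second alternative in line~\ref{ln:productive} of \cref{alg:ranking_plus}); in the stretch after all phase agents have adopted the new phase but before the waiting leader's counter reaches zero, there is no duplicate rank, only one waiting agent, no phase-increasing interaction, and no rank-assigning pair — so no good-step condition holds, yet $\Phi_t>0$ and the potential cannot drop. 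Your argument gives no control over the total length of such stretches, so the claimed supply of $\Omega(n^2\log n)$ good steps in the window does not follow.

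The paper closes exactly this hole in \cref{lem:few_bad_steps}: it bounds the duration of each such waiting stretch by a negative binomial whose success probability is proportional to the number of phase agents (hence to $n2^{-k}$ for the current phase $k$), sums over $k$ to get a total of $O(\cwait\, n^2\log n)$ wasted steps, and then argues that if the same phase index $k$ recurs a second time, the formerly waiting agent re-assigns a rank that was already assigned, forcing a duplicate and hence a reset within a further $O(n^2\log n)$ steps. This is also precisely where the factor $\cwait$ in the bound $\tau\le c\cdot\cwait\cdot n^2\log n$ of part~1 originates — a dependence your proposal does not produce or explain. You would need to add this waiting-stretch accounting (or an equivalent mechanism) for part~1 to go through.
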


\subsubsection{Proof of \cref{lem:main_states_reset_quickly}, Part~1: Potential Drops Quickly}
\label{sec:proof:lem:all_main_state_config_dies_quickly}

For our proof, we need the following definition of \emph{good time steps}; we show below that the expected value of the potential will decay geometrically by a factor of $1 - \Omega(n^{-2})$ in good time steps.
Recall that a waiting agent $v$ ($\WaitCount v \neq \bot$) is a ``leader'' which is currently (supposed to be) waiting out a phase transition,
    and that above
        we called a pair of agents $u\neq v$ \emph{rank-assigning} when $\Rank t u \neq \bot$, $\Phase t v \neq \bot$, and $\RankT t u \leq \floor{n \cdot 2^{-\PhaseT t v}}$.

\newcommand{\GoodStep}[1]{{\mathcal{G}_{#1}}}

\begin{definition}\label{def:good_steps}
    A time step $t$ is \emph{good} if $\vX_t \in \CM$ and one of the following statements holds.
    \begin{enumerate}%[noitemsep,label=(\alph*)]
        \item There is a duplicate rank, i.e., there are agents $u \neq v$ such that $\RankT t u = \RankT t v \neq \bot$.
        \item There are two (or more) waiting agents, i.e., there are agents $u \neq v$ such that $\WaitCountT t u \neq \bot$ and $\WaitCountT t v \neq \bot$.
        \item There is a pair of agents where upon interacting, one of the agent's phases will increase. I.e., there are agents $u, v$ with $\PhaseT t v \neq \bot$, and either $\PhaseT t v < \PhaseT t u \neq \bot$, or $\RankT t u = f_k$ for a $k > \PhaseT t v$.
        \item There is a rank-assigning pair; and for all agents $u, v$ with $\PhaseT t u \neq \bot$ and $\PhaseT t v \neq \bot$, we have $\PhaseT t u = \PhaseT t v$;
            and at least a quarter of all phase agents have $\Coin v = 1$.
    \end{enumerate}
    We write $\GoodStep{t}$ for the event that $t$ is a good time step.
\end{definition}

Recall that we assume that $\vX_t \in \CM$.
Then since $\Phi$ takes non-negative integers as values,
    and by Markov's inequality, we have
\begin{align*}\Prob{\bigvee_{\tau \in [c \cdot n^2 \log{n}]} \Phi_{t+\tau} = 0} &= \Prob{\min_{\tau \in [c \cdot n^2 \log{n}]} \Phi_{t+\tau} = 0} \leq \Exp{\min_{\tau \in [c \cdot n^2 \log n]} \Phi_{t+\tau}}.\end{align*}
So it is sufficient to show that this last expected value is in $O(n^{-1})$.

First, we show that the potential $\Phi_t$ exhibits a multiplicative drop in expectation whenever a round is good.
\begin{lemma}\label{lem:potential_onestep_drop}
For any $t'$, $0 < \phi \in \N$,
    \[\ExpCond{\Phi_{t' + 1}}{\Phi_{t'} = \phi, \GoodStep{t'}} \leq \bc{1 - \frac{1}{4n^2}} \cdot \phi.\]
\end{lemma}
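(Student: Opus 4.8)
The plan is to do a case analysis over the four ways in which a time step $t'$ can be good, in each case exhibiting a single interaction (or small family of interactions) that occurs with probability $\Omega(1/n^2)$ and causes $\Phi$ to drop by at least a constant fraction of its value, while arguing that $\Phi$ never increases in a main configuration. The last point is the easiest: by \cref{obs:cmain_stays_unless_reset}, from a main configuration we stay in a main configuration (or a reset is triggered, in which case $\Phi_{t'+1}=0$ and there is nothing to prove); and inspecting \cref{alg:ranking_plus}, the only interactions that change a phase agent's phase either delete it (when it gets ranked, removing a term $2^{-\PhaseT {t'} v}$) or increase its phase (decreasing the term), and no interaction ever decreases a phase or creates a new phase agent with a smaller phase without a reset. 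Hence $\Phi_{t'+1} \le \Phi_{t'}$ always (conditioned on no reset), so it suffices to find one event of probability $\ge 1/(4n^2)\cdot$(something) that halves the contributing term.

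First I would handle cases (1) and (2): if there is a duplicate rank or two waiting agents, then the specific ordered pair $(u,v)$ witnessing this is picked with probability $1/(n(n-1)) \ge 1/n^2$, and by lines \ref{ln:same-rank}--\ref{ln:two-waiting} of \cref{alg:ranking_plus} this triggers $\TriggerReset$, so $\Phi_{t'+1}=0$; thus $\ExpCond{\Phi_{t'+1}}{\cdots} \le (1-1/n^2)\phi \le (1-1/(4n^2))\phi$. For case (3), let $v$ be a phase agent with the \emph{smallest} phase $p$ among all phase agents (if several are tied, pick any); $v$ contributes $2^{-p}$ to $\Phi$, and since all other phase agents have phase $\ge p$, the term $2^{-p}$ is at least $\Phi/s$ where $s$ is the number of phase agents, and in particular $2^{-p} \ge \Phi/n$. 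An interaction that increases $v$'s phase (either $v$ meeting an agent with a larger saved phase, or $v$ meeting the agent ranked $f_k$ for $k>p$) exists by assumption; one such ordered pair involving $v$ is picked with probability $\ge 1/n^2$, and it replaces $2^{-p}$ by at most $2^{-(p+1)}$, a drop of at least $2^{-p}/2 \ge \Phi/(2n) \ge \Phi/(2n^2)$ — already more than the claimed $\Phi/(4n^2)$.

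The main obstacle is case (4), the ``productive rank-assigning'' case, since here the drop depends on the coins: the relevant interaction (unaware leader $\ell$ meeting a phase agent $v$) only ranks $v$ — removing its contribution $2^{-p}$ entirely — when $\Coin v = 1$, and by the case hypothesis all phase agents share the same phase $p$, so each of the $s$ phase agents contributes exactly $2^{-p} = \Phi/s$. Since at least $s/4$ phase agents have $\Coin v = 1$, the number of ordered pairs $(\ell, v)$ with $v$ rankable and $\Coin v = 1$ is at least $s/4$ (there is at least one rank-assigning pair, hence a unique unaware leader $\ell$ with $\RankT {t'} \ell \le \floor{n 2^{-p}}$, and it can rank \emph{any} phase agent since they all have phase $p$). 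So the probability that such a pair is picked is $\ge (s/4)/(n(n-1)) \ge s/(4n^2)$, and in that event $\Phi$ drops by $2^{-p} = \Phi/s$; multiplying, the expected drop is $\ge (s/(4n^2)) \cdot (\Phi/s) = \Phi/(4n^2)$, which gives exactly the claimed bound. Assembling the cases, in every good step $\ExpCond{\Phi_{t'+1}}{\Phi_{t'}=\phi, \GoodStep{t'}} \le (1 - 1/(4n^2))\phi$, proving \cref{lem:potential_onestep_drop}. I would be careful in case (4) to use that the $s$ cancels cleanly between the probability and the per-event drop, which is precisely why choosing the potential as a sum of $2^{-\PhaseT t v}$ (rather than, say, a count of phase agents) is the right normalization.
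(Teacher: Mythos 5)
Your overall structure (monotonicity of $\Phi$ in main configurations, four-case analysis, reset probability in cases~1--2, and the $s$-cancellation in case~4) matches the paper's proof, and cases~1, 2 and~4 are handled correctly. However, your case~3 has a genuine quantitative gap. You exhibit a \emph{single} ordered pair that increases the minimum-phase agent's phase, which is selected with probability only $1/(n(n-1))$, and you bound the resulting \emph{conditional} drop by $2^{-p}/2 \geq \Phi/(2s) \geq \Phi/(2n)$. But the lemma bounds the \emph{expected} drop, which is the product of these two quantities: $\frac{1}{n^2}\cdot\frac{\Phi}{2s}$ can be as small as $\Theta(\Phi/n^3)$ when $s = \Theta(n)$ (e.g., one agent at phase $p$ and $n-2$ agents at phase $p+1$, with no ranked agent triggering a phase transition). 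Your sentence ``a drop of at least $\dots \geq \Phi/(2n^2)$ --- already more than the claimed $\Phi/(4n^2)$'' conflates the conditional drop with the expected drop; as written, case~3 only yields $\ExpCond{\Phi_{t'+1}}{\cdot} \leq (1 - \Omega(n^{-3}))\phi$, which would cost a factor of $n$ in the final runtime.

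The missing idea --- which you do correctly supply in case~4 but not here --- is that there are $\Omega(s)$ ordered pairs whose interaction increases the phase of some minimum-phase agent, so that the probability of the event is $\Omega(s/n^2)$ and the $s$ cancels against the drop of $\Phi/(2s)$. The paper gets this via a sub-case distinction on the number $\ell$ of agents at the minimum phase: if $\ell \geq s/2$, each of them can increase its phase with the given witness $u$, giving $\geq s/2$ pairs; if $\ell < s/2$, then more than $s/2$ phase agents carry a strictly larger phase, and each of those paired with any minimum-phase agent increases the latter's phase via the one-way epidemic rule, again giving $\geq s/2$ pairs. With that count your computation goes through exactly as in your case~4.
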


\begin{proof}
    First, note that since we assume that the time step is good,
        we are in a main configuration,
        so that no phase agent can decrease its phase,
        and no non-phase agent can become a phase agent.
    And as a consequence,
        we have $\Phi_{t' + 1} \leq \Phi_{t'}$.

    We proceed by case distinction over the four alternatives by which a round can be good,
        and show that in each case, the potential decreases by a factor $1 - 1/(4n^2)$ in expectation.

    \emph{Cases 1 and 2.}
    When there is a duplicate rank at time $t'$,
        there is at least a $\frac{1}{n(n-1)} \geq n^{-2}$ chance of two agents with the same rank interacting, in which case they will initiate a reset
        and the potential drops to $0$.
    The same holds when there are two (or more) waiting agents.
    So conditioning on time $t'$ being good for either of those reasons,
        the expected value of $\Phi_{t'}$ is at most
        \[\frac{1}{n^2} \cdot 0 + \bc{1 - \frac{1}{n^2}} \cdot \phi \leq \bc{1 - \frac{1}{4 n^2}} \cdot \phi.\]

    \emph{Case 3.}
    In this case, there is at least one pair of agents where upon interacting,
        one of the agents will increase its phase; w.l.o.g., we may assume that this is an agent having the minimum currently saved phase.
    Let $s$ be the number of phase agents,
        and let $\ell \geq 1$ be the number of agents having the minimum currently saved phase.
    Now if $\ell \geq s/2$,
        the probability of one of the $\ell$ agents increasing its phase is at least $\ell / (n(n-1)) \geq s / (2n^2)$ (since if one of those $\ell$ agents can increase its phase in an interaction, all of them can).
    Otherwise, if $\ell < s/2$, there are at least $s/2$ agents saving a phase $\geq \ell$,
        so there is also at least an $\ell \cdot s/2 / (n(n-1)) \geq s/(2n^2)$ chance of an interactions where on of the $\ell$ agents will increase its phase.
    Since the potential contribution of an agent is decreasing in its phase,
        the $\ell$ agents each have an above-average contribution to the potential,
        and their potential contribution will drop by at least one half when their phase increases.
    Hence, the expected value of $\Phi_{t'}$ in this case is at most
        \begin{align*}\frac{s}{2n^2} \cdot \phi \cdot \bc{1 - \frac{1}{2s}} + \bc{1 - \frac{s}{2n^2}} \cdot \phi
            &= \phi - \phi \cdot \frac{s}{4sn^2} = \bc{1 - \frac{1}{4n^2}} \cdot \phi.\end{align*}

    \emph{Case 4.}
    The final case is that in which $t'$ is good because there is a rank-assigning pair, no agent can increment its phase (and hence all saved phases are equal),
        and at least a quarter of phase agents has $\CoinT {t'} v = 1$.
    Since all $s$ phase agents save the same phase and there is a rank-assigning pair $u, v$,
        there must in fact be at least $s$ such pairs.
    Assume w.l.o.g.\ that $\RankT {t'} u \neq \bot$ and $\PhaseT {t'} v \neq \bot$;
        since all saved phases are equal,
            it holds for all $v'$ with $\PhaseT {t'} {v'} \neq \bot$ that $\PhaseT {t'} {v'} = \PhaseT {t'} v$, and hence $u, v'$ is also a rank-assigning pair.
    Since at least a quarter of phase agents has $\CoinT {t'} v = 1$,
        we know that there are at least $s/4$ pairs of agents which, when interacting,
        would lead to a phase agent to become ranked (and hence no longer a phase agent).
    So conditioning on this case,
        and using the fact that as all saved phases are equal, each phase agent has equal contribution to the potential,
        the expected value of $\Phi_{t'}$ is at most
        \begin{align*}\MoveEqLeft\frac{s}{4n(n-1)} \cdot \phi \cdot \bc{1 - \frac{1}{s}} + \bc{1 - \frac{s}{4n(n-1)}} \cdot \phi
            = \phi - \phi \cdot \frac{s}{4sn(n-1)} \leq \bc{1 - \frac{1}{4n^2}} \phi.\end{align*}
    Since we have seen that this holds in all four cases,
        we are done.
\end{proof}

To see that the expected minimum value of the potential over $O(n^2 \log n)$ rounds is small,
    we need a lower bound on the number of good time steps in such an interval.

To that end,
    the following lemma considers the number $B_t$ of time steps which are \emph{not} good in the time interval starting with $t$ and ending when $\Phi = 0$.
To be precise,
    letting $T = \min\{t' \geq t \mid \Phi_{t'} = 0\}$ be the next time after $t$ where the potential is $0$,
    we let \[B_t = \sum_{t' = t}^{T - 1} \boldone\{\textup{$t'$ is good}\}.\]
We show $B_t = O(n^2 \log n)$ w.h.p.;
    hence, to ensure that there are $\Omega(n^2 \log n)$ \emph{good} time steps in expectation, a time interval of $O(n^2 \log n)$ will indeed suffice.

\begin{lemma}\label{lem:few_bad_steps}
There is a sufficiently large $c > 0$ and a sufficiently small $0 < c' < 1$ such that such that for any $t$,
    \(\Prob{B_t \leq c \cdot (1 + \cwait) \cdot n^2 \log n + (1-c') (T-t)} \geq 1 - O(n^{-2}).\)
\end{lemma}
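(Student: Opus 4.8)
\textbf{Proof proposal for \cref{lem:few_bad_steps}.}
The plan is to bound the number of \emph{bad} time steps in the interval $[t, T)$ — where $T$ is the next time the potential hits $0$ — by splitting them according to why they are \emph{not} good. Recall that a step $t'$ with $\vX_{t'} \in \CM$ and $\Phi_{t'} > 0$ fails to be good precisely when: (a) there is no duplicate rank and no two waiting agents (otherwise case 1 or 2 of \cref{def:good_steps} applies); (b) no pair of agents can increase a saved phase upon interacting (otherwise case 3 applies); and (c) there \emph{is} a rank-assigning pair (since $\Phi_{t'} > 0$ and we are past the checks for cases 1--3, some productive pair must exist, and given (b) it must be rank-assigning, which by (b) forces all phase agents to share a common phase), but \emph{fewer than a quarter} of the phase agents have their coin showing $1$ (otherwise case 4 applies). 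So every bad step in $[t,T)$ is a step where all phase agents share one phase $k$, a rank could be assigned, yet the coin subpopulation among phase agents is biased towards $0$. The bad steps are therefore entirely a matter of the synthetic coin among the phase agents behaving badly, and the task reduces to: \emph{in a window where all phase agents have a common phase and the only productive interaction is rank-assigning, the coin rarely spends many steps with fewer than a quarter of phase agents at heads.}

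First I would fix the window structure. Between consecutive rank assignments (and between consecutive phase advances, but by (b) there are none within a maximal bad run), the set of phase agents and their phases are static; the only thing that changes is the coin bits, which get toggled on each activation. So I would decompose $[t,T)$ into \emph{epochs}, each epoch being a maximal run in which the phase-agent population is unchanged. An epoch ends when a phase agent gets ranked (it leaves the population), when a reset triggers, when the leader's state changes (waiting $\leftrightarrow$ unaware leader), or when a phase advances; but a step that ends an epoch in the first three of those ways is \emph{good} (it is case 4 or a reset), and by assumption (b) no phase advance happens inside a bad run. Within a single epoch of length $L_i$ with $s_i$ phase agents, I claim at most $O(\cwait \log n + \log n)$ of the steps are bad, \emph{plus} a $(1-c')$-fraction of the epoch. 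The $\cwait \log n$ term is the slack needed because, between the moment a phase begins (when the initiating ranked agent with $\Rank{} = f_k$ exists, or the waiting agent appears) and the moment the epoch is ``warmed up'', the coin may be out of equilibrium; this is exactly the regime handled by the synthetic-coin analysis sketched for \cref{lemma:convergence_good_case} and \cref{lemma:convergence_bad_case}. The key quantitative input is that the synthetic coin, once warmed up, keeps a constant fraction (well above $1/4$) of \emph{any} fixed subpopulation at heads except with probability $n^{-\omega(1)}$ per step; applying this to the (shrinking, possibly tiny) phase-agent subpopulation is the delicate part, since standard concentration for the coin assumes a large population.

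Concretely, I would proceed as follows. (1) Show that after $O(n \log\log n)$ activations of the relevant subpopulation the coin is warmed up, contributing the additive $O(n^2 \log n)$ term once we translate activations to global steps via a Chernoff bound on how many of $L_i$ global steps touch the phase agents. (2) In the warmed-up regime, show that the probability that a given step is bad — i.e., that fewer than a quarter of the $s_i$ phase agents show heads — is at most some constant $\rho < 1$, uniformly in $s_i$; this is where I expect the main obstacle, because for $s_i = o(\log n)$ one cannot get a high-probability statement per step, only a constant-probability one, so the argument must be a careful second-moment or martingale bound on the \emph{count} of bad steps rather than a union bound. (3) Conclude that the number of bad steps in the warmed-up portion of all epochs is stochastically dominated by a sum of Bernoulli$(\rho)$ variables over at most $T - t$ trials, which by a Chernoff bound is at most $(1-c')(T-t)$ w.h.p.\ for suitable $c' = c'(\rho) \in (0,1)$, once $T - t \geq c n^2 \log n$; for shorter intervals the claimed bound $c(1+\cwait)n^2\log n + (1-c')(T-t)$ holds trivially since the first term already dominates $T-t$. (4) Sum the $O((1+\cwait)\log n)$ warm-up slack over the $O(n)$ epochs (there are at most $n$ rank assignments before $\Phi$ hits $0$, and each contributes one epoch boundary), giving the $c(1+\cwait)n^2\log n$ term, and take a union bound over all the $n^{-\omega(1)}$ and $n^{-2}$ failure events. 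The main obstacle, as noted, is step (2)/(3): making the coin analysis robust to a phase-agent subpopulation that both shrinks over time and can become much smaller than $\log n$, so that the usual w.h.p.\ coin-bias guarantees degrade to mere constant-probability guarantees and must be aggregated via a concentration argument on the number of bad steps.
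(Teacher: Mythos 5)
There is a genuine gap in your classification of the bad steps, and it causes you to miss the part of the argument that actually produces the $\cwait$-dependent additive term. You argue that in any bad step with $\Phi_{t'}>0$ there must be a \emph{rank-assigning} productive pair, ``given (b)''. That inference is wrong: a productive pair is either rank-assigning \emph{or} consists of a waiting agent and a phase agent, and condition (b) (no phase-increasing interaction) does not exclude the latter. Concretely, whenever the unique leader is in its waiting state there is typically \emph{no} rank-assigning pair at all, so case~4 of \cref{def:good_steps} can never fire no matter what the coins do, and every such step is bad. These waiting stretches last $\Theta(\cwait\, n\, 2^k \log n)$ interactions in phase $k$ w.h.p.\ (a negative-binomial count of $\ceil{\cwait\log n}$ meetings with the roughly $n2^{-k}$ phase agents), and summed over $k$ they contribute $\Theta(\cwait\, n^2\log n)$ bad steps. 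This is exactly the $c(1+\cwait)n^2\log n$ term in the statement; your proposal instead attributes that term to synthetic-coin warm-up, which is a misdiagnosis, and your step-(3) domination of all remaining bad steps by Bernoulli$(\rho)$ trials cannot absorb the waiting stretches because during them the per-step probability of being good is $0$, not $1-\rho$.

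The paper's proof handles precisely this missing case: it bounds the length of each maximal ``single waiting agent, no rank-assigning pair'' stretch by the negative-binomial tail bound, and then argues that for each phase value $k$ such a stretch can effectively occur only once --- if it recurs for the same $k$, the re-awakened leader reassigns rank $f_{k+1}+1$, creating a duplicate rank that triggers a reset within $\Geom(1/(n(n-1)))$ further interactions. Your proposal contains no analogue of either ingredient. Your treatment of the coin-bias case (epochs over which the phase-agent population is static, constant-probability-per-window guarantees aggregated by concentration because the subpopulation may be $o(\log n)$) is in the right spirit and close to what the paper does via \cref{lem:coin_heads} over windows of length $2n$, but on its own it only bounds the coin-bias bad steps and cannot yield the stated inequality.
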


\begin{proof}[Proof sketch]
    Assume that a time step $\tau \in [t, T)$ is \emph{not} good.
        Since $\tau \in [t, T)$, we know that the protocol has not reset in the interval $[t, \tau)$.
        Since furthermore $\tau$ is not good, there is at most one waiting agent, there are no duplicate ranks, no interaction increasing a phase agent's saved phase,
            and one of the following is true:\begin{itemize}
            \item either less than a quarter of phase agents' coins show heads,
            \item or there is \emph{no} rank-assigning pair;
            but since there is still a productive pair (as $\Phi_\tau > 0$ since $\tau < T$), there must be a single waiting agent (since there cannot be two).
        \end{itemize}
        For both cases,
            we need to bound the expected number of time steps where the two cases hold separately.

    Let us consider the first case.
    We divide the time steps after time $t$ into non-overlapping epochs of $2n$ time steps each.
    By \cref{lem:coin_heads}, for each of these time steps,
        if the protocol is in $\CM$ at the beginning of the epoch (which it will be unless a reset is triggered by \cref{obs:cmain_stays_unless_reset}),
        there is a constant probability $p$ of either there being a reset in the epoch or there being at least a quarter of phase agents' coins showing heads
        for at least a constant fraction $c$ of the epoch's time steps.
    Hence, when considering at least $c' n$ epochs for some large constant $c'$ depending on $p$,
        a Chernoff bound guarantees that
        either there is a reset in these epochs
        or for at least a constant fraction of time steps during the epochs,
        at least a quarter of phase agents' coins showing heads.
    Since we do not make any guarantees about there possibly being less than $c n^2 \log n$ non-good time steps for a sufficiently large $c$,
        the assumption on the number of epochs is safe.

    Now for the second case.
    Let $t$ be a time where this case occurs, but where it hasn't occurred in the previous time step.
    We bound the number of steps until the unique waiting agent transitions out of a waiting state (or a reset is triggered).
    Until this happens, besides a reset triggering,
        the only change that can occur (besides liveness checker values or coins changing) is that the waiting agent decreases its counter by one in an interaction with a phase agent.
    Since there is no phase-increasing interaction,
        all phase agents must be saving the same phase, let this be $k$.
    Since there is no productive pair,
        there is no agent $u$ with $\bot \neq \RankT t u \leq \floor{n \cdot 2^{-k}}$;
        and since there are no duplicate ranks, there can be at most $n - \floor{n \cdot 2^{-k}}$ ranked agents.
    So there must be at least $\floor{n \cdot 2^{-k}} - 1$ phase agents (because there is one waiting agent as well);
    However, there is at least one phase agent, since otherwise, there would be no productive pair and $\Phi_t = 0$.
    So the time until the waiting agent transitions out of a waiting state (or there is a reset)
        is stochastically dominated by a negative binomial random variable
        $\NegBin(r, p)$ with $r = \ceil{\cwait \cdot \log n}$ and $p = \max\{1, \floor{n \cdot 2^{-k}} - 1\} / n^2)$.
    By \cref{lem:negbin_tailbound},
        with probability at least $1 - 2n^{-\gamma}$ this is at most
        \[\frac{2}{p} \cdot \bc{r + 2\log n} \leq \frac{n^2}{\max\{1, n \cdot 2^{-k} - 1\}} \cdot \bc{\ceil{\cwait \log n} + 2 \log n}
            \leq c (\cwait + 1) n 2^k \log n\]
            for sufficiently large $n$ and some constant $c$.
    Now if this case occurs at most once for each possible $k \in [\ceil{\log_2 n}]$,
        this would give a total time of $c(\cwait + 1) n \log n \sum_{k=1}^{\ceil{\log_2 n}}2^k$,
        and with the sum being at most $4n$ (see the proof of \cref{lemmacorrectness_non_ss} in \cref{apx:sec:unstable_ranking}),
        we have the claimed time.
    Otherwise,
        consider the first time that the case occurs a second time for some value of $k$.
    Then w.h.p.\ after the time bound above,
        the waiting agent reaches rank $1$, and at that time all phase agents will save phase $k$, meaning that formerly waiting agent is an unaware leader,
            and will assign rank $f_{k+1} + 1$ on the next interaction with a phase agent.
    But this rank will already have been assigned after the first time this case was encountered for this value of $k$.
    And since all saved phases are at least $k$ and cannot decrease without going through a reset, this rank persisted, and there are now two agents having the same rank.
    So within at most $\Geom(1/(n(n-1)))$ rounds, these agents will interact and trigger a reset (if a reset doesn't occur before then).
    As this is in $O(n^2 \log n)$ w.h.p., we are done.

    Since these two cases cover all relevant bad time steps, this proves the claim.
\end{proof}

We are now ready to show that $\Exp{\min_{\tau \in [c n^2 \log n]} \Phi_{t+\tau}} = O(n^{-1})$, as required.

Let $T_{G,i}$ be the $i$th time step after $t$ which is good.
By the law of total expectation,
Write $\Phi_{t, t'} \coloneq \min_{t \leq \tau \leq t'} \Phi_{\tau}$ for the minimum of $\Phi_\tau$ over the time interval $[t, t']$.
Clearly, $\Phi_{t, t'}$ is monotonically non-increasing in $t'$, and non-negative.
Furthermore, note that $\Phi_{t'+1} > \Phi_{t'}$ iff $\Phi_{t,t'} = 0$, because the protocol was in a main configuration at time $t$,
    and for the potential to increase, a non-phase agent needs to become a phase agent
    or a phase agent must decrease its phase, which can only happen after a reset.
Hence, also, $\Phi_{t'} > 0$ if and only if $\Phi_{t,t'} > 0$, and in that case, $\Phi_{t'} = \Phi_{t, t'}$.
So
\begin{align*}
\Exp{\Phi_{t, T_{G,i}+1}}
   &\leq \sum_{\phi > 0} \Prob{\Phi_{T_{G,i}}=\phi}  \cdot \left(\ProbCond{\GoodStep{T_{G,i}}}{\Phi_{T_{G,i}} = \phi} \cdot \bc{1 - \frac{1}{4n^2}} \phi + \ProbCond{\neg \GoodStep{T_{G,i}}}{\Phi_{T_{G,i}} = \phi} \cdot \phi\right)
\\ &= \sum_{\phi > 0} \Prob{\Phi_{T_{G,i}}=\phi} \bc{1 - \frac{1}{4n^2}} \phi
= \bc{1 - \frac{1}{4n^2}} \cdot \sum_{\phi \geq 0} \Prob{\Phi_{t,T_{G,i}}=\phi} \phi
= \bc{1 - \frac{1}{4n^2}} \Exp{\Phi_{t,T_{G,i}}}.
\end{align*}
Applying this repeatedly and using the monotonicity of $\Phi_{t,t'}$ in $t'$, we obtain
\[\Exp{\Phi_{t,T_{G,i}+1}} \leq \bc{1 - \frac{1}{4n^2}}^i \Exp{\Phi_{t,t}}
    \leq \exp\bc{-\,\frac{i}{4n^2}} \cdot n,\]
    so that
\[\Exp{\Phi_{t,T_{G,8n^2 \log n}}} \leq \exp\bc{-2 \log n} \cdot n = n^{-1}.\]
Finally, \cref{lem:few_bad_steps} implies that
    \[\Prob{T_{G,6n^2 \log n} \leq t + (8 + c \cdot \cwait)n^2 \log n} \geq 1 - n^{-2}.\]
So that indeed
    \[\Prob{\min_{\tau \in [(8 + c \cdot \cwait)n^2 \log n]} \Phi_{t+\tau} > 0}
        \leq n^{-1} + n^{-2}.\]

\subsubsection{Proof of \cref{lem:main_states_reset_quickly}, Part~2}
\label{sec:proof:lem:dead_configs_end_quickly}

Recall that we assume that $\Phi_{t} = 0$.
So by definition of $\Phi$,
    either the protocol is resetting at time~$t$,
    or there are no productive pairs at time~$t$.
If the protocol is resetting at time $t$, the proof of this part of  \cref{lem:main_states_reset_quickly} is already done with $\tau = 0$,
    so only the latter case remains.
In the following we call a non-legal configuration with no productive pairs a \emph{dead configuration}.

\begin{observation}
\label{observation_dead}
    Assume the protocol is in a dead configuration at time $t$.
    Then from time $t$ until the protocol resets (if ever), all of the following hold:\begin{enumerate}%[noitemsep,label=(\alph*)]
        \item All waiting agents remain waiting,
        \item no ranked agent changes its rank, and
        \item there are no productive pairs, i.e., the configuration remains dead.
    \end{enumerate}
\end{observation}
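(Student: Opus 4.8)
The plan is to prove all three items simultaneously by induction on the time step, establishing the stronger invariant that, as long as no reset has been triggered since time $t$, the current configuration has exactly the ranked agents of $\vX_t$ with the same ranks, has no waiting agent at all, and is again dead (no productive pair, and non-legal). Items~1--3 of the observation then follow immediately.

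I would begin by recording two structural facts about an arbitrary dead configuration, which also serve as the base case. Since $\vX_t$ has no productive pair and $n\ge 2$, it has no waiting agent: if $\WaitCount{u}\neq\bot$ for some $u$, then for every other agent $v$ the ordered pair $(u,v)$ satisfies the first disjunct of line~\ref{ln:productive} and would therefore be productive. Hence every agent of $\vX_t$ is either ranked or a phase agent (so item~1 holds vacuously at time $t$), and, since there is also no rank-assigning productive pair, for every phase agent $v$ and every ranked agent $u$ we have $\RankT{t}{u} > \floor{n\cdot 2^{-\PhaseT{t}{v}}}$; that is, no ranked agent currently acts as an unaware leader for any live phase.

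For the inductive step, assume the invariant holds at time $t'\ge t$ and that the protocol does not reset at step $t'$; let $(u,v)$ be the ordered interacting pair at step $t'$ and trace $\Ranking+$ (\cref{alg:ranking_plus}). The error-detection block and the $\LivenessCount{v}=0$ check either fire --- which is a reset, excluded by assumption --- or change nothing structural; the remaining liveness lines only decrement or re-arm the $\LivenessCount{\cdot}$ values, and the $\Coin v=0$ branch only re-arms $\LivenessCount{v}$. So the only case that can touch a rank, a phase, or a waiting flag is $\Coin v=1$, in which $\Ranking(u,v)$ (\cref{alg:unsafe_silent}) is executed, and one proceeds through it line by line: the top guard returns unless $v$ is a phase agent (if $v$ is ranked nothing happens, and $v$ is not waiting by the invariant); writing $\Phase v=k$, the agent $u$ is ranked or a phase agent by the invariant, and in each subcase the rank-assignment block (lines~\ref{ln:rank-s}--\ref{ln:rank-e}), the phase increment (lines~\ref{ln:transition-s}--\ref{ln:transition-e}), and the max-phase line~\ref{ln:phase-oe-e} are seen either to be disabled or to preserve the invariant --- the key point being that each of these is triggered by exactly the rank/phase relationship that would make the interacting pair productive, which the invariant at time $t'$ rules out --- while the waiting-counter lines~\ref{ln:dec-waitcount-s}--\ref{ln:waitcount-0-e} cannot fire since $u$ is not waiting. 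Finally one observes that $\vX_{t'+1}$ is again non-legal, since becoming legal would require a phase agent to have just been assigned a rank, which the previous step excluded; this closes the induction.

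The one genuinely delicate point is this last line-by-line check, and in particular keeping the thresholds straight: one must compare the base protocol's guard $1\le\Rank u\le f_k-f_{k+1}$ and its phase-increment trigger $\Rank u=f_k$ against the productive-pair condition $\Rank u\le\floor{n\cdot 2^{-\Phase v}}$ from line~\ref{ln:productive} (recalling $f_k=\ceil{n/2^{k-1}}$ and $f_k-f_{k+1}=\floor{f_k/2}$). Essentially all of the content of the observation sits in this comparison; everything else is routine bookkeeping.
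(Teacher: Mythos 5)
The paper states this result as an observation and supplies no proof at all, so there is nothing to compare your argument against; judged on its own merits, your proof has two genuine gaps, and both sit exactly where you placed the weight of the argument. The first is your base case: the claim that a dead configuration contains no waiting agent contradicts the definition of a productive pair that the paper actually uses. The appendix restates the condition of line~\ref{ln:productive} as ``$u$ is waiting \emph{and} $v$ has a phase,'' i.e.\ $\WaitCount{u} \neq \bot \neq \Phase{v}$, and \cref{lem:reset_quickly_when_single_rankless} invokes the observation precisely for a configuration whose unique unranked agent may be a waiting agent. Under that definition, a configuration with one waiting agent, no phase agents and $n-1$ ranked agents is dead, and item~1 of the observation is exactly about this case rather than being vacuous. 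The case is easy to handle---with no phase agents present and none creatable, $\UnsafeSilentProt$ returns at its first line, so the wait counter is never decremented---but your proof does not contain that argument because you declared the case impossible.

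The second gap is the ``genuinely delicate point'' you identify and then assert without carrying out: the rank-assignment guard is \emph{not} triggered by exactly the relationship that makes a pair productive. The guard fires for $1 \leq \Rank{u} \leq f_k - f_{k+1} = \floor{f_k/2}$, whereas productivity requires $\Rank{u} \leq \floor{n \cdot 2^{-k}}$, and $\floor{f_k/2} = \floor{n \cdot 2^{-k}} + 1$ whenever the fractional part of $n/2^{k}$ exceeds $1/2$. Concretely, for $n = 7$ and $k = 2$ one has $f_2 - f_3 = 2$ but $\floor{7/4} = 1$: a configuration in which the unaware leader holds rank $2$, rank $1$ is unassigned, and all unranked agents are phase agents in phase $2$ has no productive pair and is not legal, hence is dead, yet the next leader/phase-agent interaction with $\Coin{v} = 1$ assigns rank $4$ and turns the leader into a waiting agent, violating items~2 and~3. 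This mismatch is an off-by-one between line~\ref{ln:productive} and the base protocol rather than a defect of your overall strategy, but a correct proof must either repair the productive-pair threshold or show that such configurations never need to be considered; simply asserting that the two thresholds coincide does not close the argument, and here it is the step on which the entire observation rests.
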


\begin{lemma}\label{lem:reset_quickly_when_dead}
    Assume that at time $t$, there are no productive pairs
        and that the configuration is not stable,
        i.e., we are in a dead configuration.
    Then w.h.p.\ within $O(n^2\log n)$ interactions, the protocol will reset.
\end{lemma}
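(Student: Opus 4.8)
\textbf{Proof plan for \cref{lem:reset_quickly_when_dead}.}
The plan is to proceed by a case distinction on the shape of a dead configuration, mirroring the argument of \cite[Lemma~3.3]{DBLP:conf/podc/BurmanCCDNSX21} and \cite[Lemma~1]{DBLP:conf/dna/AlistarhDKSU17}. By \cref{observation_dead}, a dead configuration is essentially frozen until a reset happens: waiting agents stay waiting, ranked agents keep their ranks, and no productive pair ever appears. So the only events that can occur are the decrements of the liveness counters $\LivenessCount{v}$ and the flipping of coins. The goal is to show that within $O(n^2 \log n)$ interactions some agent reaches $\LivenessCount{v} = 0$ (which triggers a reset via lines~\ref{ln:alive-count-s}--\ref{ln:alive-count-e} of \cref{alg:ranking_plus}), unless a reset is triggered even earlier for a different reason (e.g.\ two equal ranks or two waiting agents meeting, which are both detected in $O(n^2)$ interactions w.h.p.).

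First I would dispose of the "easy" dead configurations: if there exist two agents with the same rank, or two waiting agents, then within $\Geom(1/(n(n-1)))$ interactions they interact and a reset is triggered, which is $O(n^2 \log n)$ w.h.p. So I may assume all ranks are distinct and there is at most one waiting agent. Next I distinguish on how many unranked agents there are. If there is a single unranked agent $v$: since all ranks are distinct and there is no productive pair, the configuration cannot be legal (it is dead, not in $\CL$), so the missing rank is one of $\{n-1, n\}$ is not guaranteed — instead I argue that there is no interaction that resets $\LivenessCount{v}$: a reset of the counter needs either $v$ waiting together with a phase agent, or a rank-assigning interaction, and the absence of productive pairs rules both out. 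Crucially, the clause in lines~\ref{ln:n-1-or-n-s}--\ref{ln:n-1-or-n-e} ensures $\LivenessCount{v}$ is \emph{decremented} whenever $v$ meets an agent ranked $n-1$ or $n$; since $v$ is the only unranked agent and at least one of the ranks $n-1, n$ is present (as $n \geq 2$ and $v$ is unranked, at most one of them is missing), $v$ meets such an agent with probability $\Omega(1/n)$ per step, so after $O(n \log n)$ interactions w.h.p.\ the counter (which is $O(\log n)$) hits $0$, triggering a reset. If there are $\geq 2$ unranked agents: they decrement each other's counters via the $\max$-minus-one rule of lines~\ref{ln:max-minus-one-s}--\ref{ln:max-minus-one-e}, and since no interaction can reset any counter (again by the absence of productive pairs, where here one must also check the waiting-agent clause: if the single waiting agent meets a phase agent it \emph{would} reset the latter's counter, but a waiting agent together with a phase agent is exactly a productive pair, contradiction — so if there is a waiting agent there are no phase agents, and if there are no waiting agents the counters are only ever decremented), the maximum of the liveness counters over all unranked agents is non-increasing and strictly decreases whenever two unranked agents meet. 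Two of the $\geq 2$ unranked agents meet with probability $\Omega(1/n^2)$ per step, so after $O(n^2 \log n)$ interactions w.h.p.\ the maximum counter reaches $0$, forcing a reset.

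The main obstacle I anticipate is the careful bookkeeping in the multiple-unranked-agents case: one must verify that in a dead configuration \emph{no} rule ever increases or resets a liveness counter, which requires checking every branch of \cref{alg:ranking_plus} against the definition of "no productive pair" and the structural facts from \cref{observation_dead}. In particular, the subtle point is ruling out that a single waiting agent coexists with a phase agent (which would allow the counter reset in line~\ref{ln:coin-0-e} via the $\WaitCount u \neq \bot$ branch of line~\ref{ln:productive}); this is exactly why such a pair is declared productive, so its existence contradicts deadness. Once that invariant is established, the probabilistic part is a routine argument: the relevant counter is a supermartingale that decreases by one with probability $\Omega(n^{-2})$ (or $\Omega(n^{-1})$ in the single-unranked case) and never increases, so a union bound over the $O(\log n)$ possible counter values together with the negative-binomial tail bound (\cref{lem:negbin_tailbound}) gives the $O(n^2 \log n)$ bound w.h.p. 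Finally, combining this with \cref{lem:prep_to_main} (reaching a main configuration), Part~1 of \cref{lem:main_states_reset_quickly} (the potential drops to $0$), and Part~2 which is exactly the statement proven here, yields \cref{lemma:convergence_bad_case}.
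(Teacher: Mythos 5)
Your case decomposition (duplicate ranks or two waiting agents; otherwise a single unranked agent; otherwise two or more unranked agents) is exactly the paper's, and the structural bookkeeping — that in a dead configuration no branch of \cref{alg:ranking_plus} can reset a liveness counter, because both the $\WaitCount{u}\neq\bot$ clause and the rank-assigning clause of line~\ref{ln:productive} would constitute a productive pair — is correct and is the right invariant to isolate. The first two cases go through (though note a slip in the single-unranked case: the unranked agent meets an agent ranked $n-1$ or $n$ with probability $\Theta(1/n^2)$ per interaction, not $\Omega(1/n)$, so that subcase already costs $\Theta(n^2\log n)$ interactions, not $O(n\log n)$; the overall bound is unaffected).

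The genuine gap is in the case of two or more unranked agents. Your claim that ``the maximum of the liveness counters over all unranked agents \ldots strictly decreases whenever two unranked agents meet'' is false as soon as there are three or more unranked agents: if the maximum is attained by some agent $w$ and two \emph{other} unranked agents $u,v$ interact, both of them move to $\max\{\LivenessCount{u},\LivenessCount{v}\}-1$, but $w$ is untouched and the global maximum is unchanged (and even when an agent attaining the maximum does participate, the maximum is not guaranteed to drop if it is attained by several agents). Consequently the conclusion ``$O(\log n)$ meetings at rate $\Omega(n^{-2})$ each, hence $O(n^2\log n)$ w.h.p.'' does not follow. The statement can be repaired — the set of agents attaining the current maximum only shrinks, so each level of the maximum is cleared by a coupon-collector argument among the (at most $k$) maximal agents at per-agent rate $\Theta(k/n^2)$, and a \emph{global} negative-binomial bound over all $O(k\log n)$ required participations (rather than a per-level union bound, which would lose a $\log n$ factor for constant $k\geq 3$) recovers $O(n^2\log n)$ — but this is precisely the delicacy the paper sidesteps by instead tracking the exponential potential $\Gamma_\tau=\sum_v 3^{\LivenessCount{v}}$, which drops by a constant factor on the interacting pair's contribution in \emph{every} meeting of two unranked agents, yielding clean geometric decay of $\Exp{\Gamma_\tau}$ at rate $1-\Omega((k-1)/n^2)$ and a Markov-inequality finish. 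As written, your multiple-unranked-agents step fails and needs one of these two fixes.
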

The proof of this lemma follows from the following lemmas.

\begin{lemma}\label{lem:reset_quickly_when_dead_and_dupes_exist}
    Let $c$ be a sufficiently large constant.
    Assume the protocol is in a main configuration without productive pairs, but with duplicate ranks, at time $t$.
    Then w.h.p.\ there is a time $\tau \in [t, t + c \cdot n^2 \log n]$ such that the protocol resets at time $\tau$.
\end{lemma}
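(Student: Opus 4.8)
\textbf{Proof proposal for \cref{lem:reset_quickly_when_dead_and_dupes_exist}.}

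The plan is to argue that once we are in a main configuration that is dead (no productive pairs) but still contains two agents with the same rank, the configuration stays frozen except for liveness-counter and coin updates, so that the duplicate-rank pair eventually meets and triggers a reset via line \ref{ln:same-rank} of \cref{alg:ranking_plus}. By \cref{observation_dead} (items 1 and 2), as long as the protocol does not reset, no ranked agent changes its rank and no waiting agent stops waiting, so the duplicate ranks persist. Moreover by \cref{obs:cmain_stays_unless_reset} the protocol remains in a main configuration until a reset occurs. Hence the only obstacle to a reset is that the two duplicate-ranked agents simply have not been scheduled to interact with each other yet, or that some other agent's liveness counter hits $0$ first (which is also a reset, so that case is fine).

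Concretely, let $u \neq v$ be two agents with $\RankT t u = \RankT t v \neq \bot$. First I would note that, by \cref{observation_dead}, for every $\tau \geq t$ before the first reset we still have $\RankT \tau u = \RankT \tau v \neq \bot$. The number of interactions until the ordered pair $\{u,v\}$ is chosen is geometrically distributed with success probability $\tfrac{2}{n(n-1)} \geq n^{-2}$, independent of everything else. By the standard geometric tail bound, within $c \cdot n^2 \log n$ interactions this pair is chosen with probability at least $1 - (1 - n^{-2})^{c n^2 \log n} \geq 1 - n^{-\gamma}$ for $c$ large enough in terms of $\gamma$. At that interaction, either a reset has already been triggered earlier (by a liveness counter reaching $0$, or by some other duplicate pair, or two waiting agents meeting) — in which case we are done even sooner — or the interaction $(u,v)$ itself satisfies the condition in line \ref{ln:same-rank} and executes $\TriggerReset(u)$. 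In all cases the protocol resets within $c \cdot n^2 \log n$ interactions w.h.p.

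The main subtlety to be careful about is ensuring that the duplicate ranks really do persist: I must rule out that $u$ or $v$ ceases to be a ranked agent before they meet. This is exactly what \cref{observation_dead}(2) gives us — no ranked agent changes its rank before a reset — but one should double-check that in a dead configuration a ranked agent can never be overwritten: the only place a ranked agent's state changes in \cref{alg:ranking_plus} outside of a reset is within the base protocol \UnsafeSilentProt, and there a ranked agent $u$ only acts on an \emph{unranked} partner $v$, never changing $\Rank u$ unless it is the unaware leader assigning ranks; but "unaware leader assigns a rank" is precisely a productive interaction, which is excluded in a dead configuration. So $\Rank u$ and $\Rank v$ are immutable until a reset, which closes the argument. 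I would also remark that the $c n^2 \log n$ bound absorbs the possibly earlier reset case trivially, since any reset before the pair meets only helps.
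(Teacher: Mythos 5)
Your proof is correct and follows essentially the same route as the paper's: both invoke \cref{observation_dead} to argue that the duplicate-ranked pair persists until a reset occurs, then bound the time until that pair is scheduled via the geometric tail estimate with per-step success probability $2/(n(n-1))$, at which point line \ref{ln:same-rank} of \cref{alg:ranking_plus} triggers the reset. Your extra verification that a ranked agent's rank is immutable in a dead configuration is a reasonable sanity check but adds nothing beyond what \cref{observation_dead} already guarantees.
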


\begin{proof}
    According to Observation \ref{observation_dead}, none of the ranked agents will change its state unless the protocol resets. Let $u$ and $v$ be two agents with the same rank. According to Protocol \ref{alg:stable_unsafe_silent} if $u$ and $v$ are selected for interaction, then the protocol resets (unless there was a reset between step $t$ and the time step in which $u$ and $v$ meet). We know that in each time step $u$ and $v$ are selected with probability  $2/(n(n-1))$, independently of the agents selected for interaction in any other step. Clearly, $u$ and $v$ do not interact in $c \cdot n^2 \log n$ time steps with probability
    \[
     \left(1-\frac{2}{n(n-1)} \right)^{c \cdot n^2 \log n} < \frac{1}{n^2}
    \]
    whenever $c$ is large enough. This implies that w.h.p.\ the protocol resets in some time step $\tau \in [t, t + c \cdot n^2 \log n]$.
    %Thus, applying Chernoff bounds \cite{DBLP:journals/ipl/HagerupR90} we obtain that $u$ and $v$ will interact at least once within $c \cdot n^2 \log n$ steps with probability at least $1-n^{-2}$, if $c$ is large enough. This implies that a reset is triggered in some time step $\tau \in [t, t + c \cdot n^2 \log n]$.
\end{proof}

\begin{lemma}\label{lem:reset_quickly_when_single_rankless}
    Let $c$ be a sufficiently large constant.
    Assume the protocol is in a main configuration without productive pairs, and with a single agent without a rank, at time $t$.
    Then w.h.p.\ there is a time $\tau \in [t, t + c \cdot \cliveness \cdot n^2 \log n]$ such that the protocol resets at time $\tau$.
\end{lemma}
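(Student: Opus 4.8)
\textbf{Proof plan for \cref{lem:reset_quickly_when_single_rankless}.}
The plan is to reason about the single unranked agent $u$ and show that it will force a reset quickly, either directly via its $\LivenessCount u$ reaching zero, or indirectly via creating a duplicate rank. First I would use \cref{observation_dead}: since we are in a dead configuration, no ranked agent changes its rank, no waiting agent stops waiting, and no productive pair ever appears, unless the protocol resets in the meantime. So I may assume throughout the argument that no reset has occurred yet and derive a contradiction with this assumption within $O(\cliveness n^2 \log n)$ interactions. The single agent $u$ without a rank is either a phase agent or a waiting agent.

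Consider first the case where $u$ is a phase agent with $\PhaseT t u = k$. Since the configuration is dead, there is no rank-assigning pair, so in particular there is no agent with rank in $[1, \floor{n \cdot 2^{-k}}]$ — otherwise that agent together with $u$ would be a rank-assigning, hence productive, pair. Also there is no waiting agent (since $u$ is the only rankless agent and it is a phase agent), so in particular there is no waiting-agent/phase-agent productive pair. The key observation is that $u$ decrements $\LivenessCount u$ on every interaction where it meets an agent of rank $n-1$ or $n$ (lines \ref{ln:n-1-or-n-s}--\ref{ln:n-1-or-n-e}), and the counter is \emph{only} reset (line \ref{ln:coin-0-e}) when $u$'s coin is $0$ \emph{and} $u$ forms a productive pair with its partner. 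Since the configuration is dead, $u$ is never in a productive pair, so $\LivenessCount u$ is never reset; it is monotonically non-increasing. Moreover, since every rank in $\{1, \dots, n\}$ except the one missing from $u$ is present (there are $n-1$ ranked agents and they have distinct ranks by absence of duplicates; if there were duplicates we would be in the setting of \cref{lem:reset_quickly_when_dead_and_dupes_exist} instead), both rank $n-1$ and rank $n$ are held by some agent unless the missing rank is among $\{n-1, n\}$; in either case at least one agent of rank in $\{n-1,n\}$ exists. Hence $u$ meets an agent of rank $n-1$ or $n$ with probability $\Omega(1/n^2)$ per step (or $\Omega(1/n)$ if the missing rank lies outside $\{n-1,n\}$, but $\Omega(1/n^2)$ suffices). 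After $O(\cliveness n^2 \log n)$ steps, $u$ has had at least $\ceil{\cliveness \log n} = \Lmax$ such meetings w.h.p. (a negative-binomial tail bound, \cref{lem:negbin_tailbound}), so $\LivenessCount u$ hits $0$ and the protocol resets via lines \ref{ln:alive-count-s}--\ref{ln:alive-count-e}.

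The remaining case is that $u$ is a waiting agent. Then all $n-1$ other agents are ranked with distinct ranks (again, duplicates would put us in the setting of \cref{lem:reset_quickly_when_dead_and_dupes_exist}), so there are no phase agents. But then no agent ever resets $u$'s wait counter (that only happens in $\UnsafeSilentProt$ via line \ref{ln:waitcount-0-e}'s surroundings, or not at all in a dead configuration), and $u$ decrements $\WaitCount u$ whenever it meets a phase agent — but there are none. So $\WaitCount u$ never changes and $u$ stays waiting forever without resetting: this would contradict neither deadness nor non-reset, so I need a different mechanism. The right observation here is that this sub-case cannot actually persist as a dead configuration: with $u$ waiting and all others ranked with distinct ranks covering $n-1$ of the $n$ values, consider the interaction line \ref{ln:two-waiting}. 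Actually the cleanest route is to handle $u$ waiting by noting that a waiting agent interacting with any ranked agent does nothing (by inspection of \cref{alg:unsafe_silent} and \cref{alg:ranking_plus}, a waiting agent only reacts to phase agents), so the configuration is genuinely frozen — but a frozen non-legal configuration with a waiting agent is stable-looking yet \emph{not} in $\CL$, contradicting \cref{lem:main_states_reset_quickly}'s hypothesis chain only if we can exclude it. I expect the paper excludes it by definition: a configuration with a waiting agent is not stable (a sequence of interactions — none — does not change it, but the statement of the lemma is about dead \emph{non-legal} configurations, and one should check whether $u$ waiting with everyone else ranked is reachable). The safe resolution, which I would adopt, is to argue that in a dead configuration the single rankless agent is \emph{without loss of generality} a phase agent: a lone waiting agent among $n-1$ distinctly-ranked agents is impossible because such a configuration is not reachable from $\CSRPlus$ with the liveness invariants maintained — alternatively, fold this into the duplicate-rank case by observing the ranks cannot all be distinct if one agent is waiting (pigeonhole gives a duplicate among the other $n-1$ agents over $n$ ranks only if one is missing, which is consistent, so pigeonhole alone does not close it).

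\textbf{Main obstacle.} The genuine difficulty is exactly the waiting-agent sub-case and, in the phase-agent case, the negative-binomial concentration when the relevant probability can be as small as $\Theta(1/n^2)$ — this is why the time bound carries the factor $\cliveness$ and an extra $\log n$: we need $\Lmax = \ceil{\cliveness \log n}$ decrements, each occurring with probability $\Omega(1/n^2)$, so $O(\cliveness n^2 \log n)$ steps give $\Lmax$ decrements w.h.p.\ by \cref{lem:negbin_tailbound}(1) with $r = \Lmax$ and $p = \Omega(1/n^2)$. I would carefully state the liveness invariant that guarantees $\LivenessCount u \neq \bot$ at time $t$ (every phase agent and every waiting agent carries an $\texttt{aliveCount}$, by the definition of $\QM$), so that the counter-zero trigger is actually available. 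The waiting sub-case I expect is dispatched by a short reachability/invariant remark — or, in the worst case, by observing that if $u$ is waiting then either a reset was already triggered by line \ref{ln:two-waiting} at some earlier point (impossible, as $u$ is the unique rankless agent) or the configuration, while frozen, still permits $\LivenessCount u$ to reach $0$ because $u$ decrements it on meeting agents of rank $n-1$ or $n$ just as in the phase case (lines \ref{ln:n-1-or-n-s}--\ref{ln:n-1-or-n-e} apply to any agent with a non-$\bot$ liveness counter, not just phase agents), which again triggers a reset within $O(\cliveness n^2 \log n)$ steps. With this unification, both sub-cases reduce to the same negative-binomial argument and the lemma follows.
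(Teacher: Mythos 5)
Your proposal is correct and matches the paper's proof: the paper likewise splits off the duplicate-rank case into \cref{lem:reset_quickly_when_dead_and_dupes_exist}, notes that with $n-1$ distinctly ranked agents at least one of ranks $n-1,n$ is assigned, and then argues that $u$'s liveness counter is decremented with probability $\Omega(1/n^2)$ per step and never reset (no productive pairs), reaching $0$ within $O(\cliveness n^2\log n)$ interactions by a concentration bound. The long detour on the waiting-agent sub-case was unnecessary — your closing observation that lines \ref{ln:n-1-or-n-s}--\ref{ln:n-1-or-n-e} apply to any agent with a non-$\bot$ liveness counter is exactly how the paper treats both sub-cases uniformly.
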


\begin{proof}
    Let $u$ be the agent without a rank. According to Observation \ref{observation_dead}, $\WaitCount u \neq \bot$ or $\Phase u \neq \bot$ in all steps $\tau \geq t$ until the protocol resets.

    As we only have one agent without a rank, there must be two agents with the same rank, or at least one of the ranks $n-1$ or $n$ are assigned at time $t$. If there are two agents with the same rank, \cref{lem:reset_quickly_when_dead_and_dupes_exist} implies that w.h.p.\ there is a time $\tau \in [t, t + c \cdot n^2 \log n]$ such that the protocol resets at time $\tau$.
    Otherwise, agent $u$ decrements $\LivenessCount u$ every time step in which it interacts with an agent with rank $n-1$ or $n$.
    We know that $u$ interacts with such an agent in a time step with probability at least $2/(n(n-1))$, independently of any other time step. Thus, applying Chernoff bounds \cite{DBLP:journals/ipl/HagerupR90}, we obtain that with probability at least $1-n^{-2}$, $\cliveness \cdot \log n+1$ such interactions will occur within $c \cdot \cliveness \cdot n^2 \log n$ steps, if $c$ is large enough ($\creset$ is the constant specified in \cref{alg:stable_unsafe_silent}). This implies that within $c \cdot \cliveness \cdot n^2 \log n$ time steps, $\LivenessCount u$ reaches $0$ w.h.p., leading to a reset according to \cref{alg:stable_unsafe_silent}.
\end{proof}

\begin{lemma}\label{lem:reset_quickly_when_multiple_rankless}
    Let $c$ be a sufficiently large constant.
    Assume the protocol is in a main configuration without productive pairs, and with two or more agents without a rank, at time $t$.
    Then w.h.p.\ there is a time $\tau \in [t, t + c \cdot \cliveness \cdot n^2 \log n]$ such that the protocol resets at time $\tau$.
\end{lemma}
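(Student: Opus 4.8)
The plan is to handle the case of two or more unranked agents by reducing it, within $O(n^2\log n)$ interactions, to one of the cases already covered by \cref{lem:reset_quickly_when_dead_and_dupes_exist} or \cref{lem:reset_quickly_when_single_rankless}, and otherwise to directly drive some agent's liveness counter to $0$. First I would invoke \cref{observation_dead}: since we start in a dead configuration (no productive pairs, not stable), the configuration stays dead and no ranked agent changes its rank until a reset occurs, so throughout the analysis the set of ranks currently assigned is fixed, as is the set of unranked agents (each of which is either waiting or a phase agent). If at time $t$ there are already two agents sharing a rank, \cref{lem:reset_quickly_when_dead_and_dupes_exist} gives the claim immediately. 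So assume all assigned ranks are distinct; then with at least two unranked agents, at least two ranks in $[n]$ are unassigned, and in particular the pigeonhole gives that either rank $n-1$ or rank $n$ is unassigned — actually we want the stronger statement that \emph{both} $n-1$ and $n$ cannot be assigned while two agents are unranked, which already fails the hypothesis of having a single unranked agent but is exactly what we need for the liveness argument below.

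The core case is therefore: a dead main configuration, distinct assigned ranks, at least two unranked agents. Here I would argue that some phase agent's liveness counter reaches $0$. The key observation is that in a dead configuration no interaction ever resets a liveness counter: line~\ref{ln:productive} of \cref{alg:ranking_plus} only resets $\LivenessCount v$ on a productive pair (either a waiting agent meets a phase agent, or a rank-assigning pair interacts), and by deadness no such pair exists and none can be created. Meanwhile $\LivenessCount v$ is decremented whenever two unranked agents meet (lines~\ref{ln:max-minus-one-s}--\ref{ln:max-minus-one-e}) and whenever an unranked agent meets an agent ranked $n-1$ or $n$ (lines~\ref{ln:n-1-or-n-s}--\ref{ln:n-1-or-n-e}). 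Since there are at least two unranked agents, the probability that two of them interact in a given step is at least $2/(n(n-1))$, independently across steps. Hence, by a Chernoff bound (as in the proof of \cref{lem:reset_quickly_when_single_rankless}, citing \cite{DBLP:journals/ipl/HagerupR90}), within $c\cdot\cliveness\cdot n^2\log n$ steps — for $c$ large enough — at least $\cliveness\log n + 1$ such unranked-unranked interactions occur w.h.p.; each such interaction strictly decreases the relevant counter (the agent updated takes the maximum of the two counters minus one, and since the maximum over \emph{all} liveness counters can only decrease over time while never being reset, repeated such interactions force the running maximum down), so after enough of them some liveness counter hits $0$, triggering a reset via lines~\ref{ln:alive-count-s}--\ref{ln:alive-count-e}. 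A small subtlety: I should track the maximum liveness-counter value among unranked agents and show it is non-increasing and drops by $1$ every time two unranked agents whose counters realize the maximum interact; combining this with the counting bound gives the result. Either way, the reset happens within $O(\cliveness n^2\log n)$ interactions w.h.p.

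Putting the cases together: either there is a duplicate rank (reset in $O(n^2\log n)$ by \cref{lem:reset_quickly_when_dead_and_dupes_exist}), or all ranks are distinct and the liveness argument above forces a reset in $O(\cliveness n^2\log n)$; a union bound over the $O(1)$ failure events gives the w.h.p. conclusion with $\tau \in [t, t + c\cdot\cliveness\cdot n^2\log n]$. I expect the main obstacle to be making the liveness-counter decrement argument fully rigorous: one must be careful that the ``maximum among unranked agents'' potential is genuinely monotone (no interaction can raise a liveness counter, since becoming ranked is impossible in a dead configuration and the only writes to $\LivenessCount{\cdot}$ are the max-minus-one rule and the reset rule, the latter being unreachable here), and that the Chernoff counting of unranked-unranked collisions can be coupled to actual decrements of the running maximum rather than merely to decrements of some counter — handling the bookkeeping of \emph{which} agent's counter decreases when the max-minus-one rule fires is the fiddly part. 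This is routine but needs a clean invariant.
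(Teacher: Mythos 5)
There is a genuine gap in the core quantitative step. You correctly identify that in a dead configuration no interaction ever resets a liveness counter, and that the only monotone quantity worth tracking is the \emph{maximum} of $\LivenessCount{\cdot}$ over unranked agents (the minimum and the sum are both non-monotone under the max-minus-one rule: if an agent at value $2$ meets one at value $10$, both jump to $9$). But your counting argument bounds the number of \emph{all} unranked--unranked collisions, and a collision between two agents neither of which realizes the current maximum does not decrease the maximum at all. So ``$\cliveness\log n + 1$ unranked--unranked interactions occur w.h.p., hence some counter hits $0$'' does not follow: with $k$ unranked agents the maximum may need on the order of $k$ collisions per unit of decrease, and in general no fixed number of collisions forces any particular counter to $0$ unless you control \emph{which} agents collide. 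You flag exactly this coupling problem yourself as ``the fiddly part,'' but the proposal does not close it. It is closable along your lines --- e.g., a level-by-level coupon-collector argument showing that the set of agents realizing the current maximum is non-increasing and is emptied after a negative-binomial number of collisions involving a max-realizing agent, summed over the $\Lmax$ levels --- but that analysis is the entire content of the lemma and is absent.

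The paper avoids this bookkeeping with an exponential potential, adapted from Burman et al.: it sets $\Gamma_\tau = \sum_v 3^{C_\tau(v)}$ where $C_\tau(v)$ is $v$'s liveness counter (and $-\infty$ after a reset). Whenever any two unranked agents $u,v$ interact, their joint contribution satisfies $2\cdot 3^{\max\{C(u),C(v)\}-1} \le \tfrac{2}{3}\bigl(3^{C(u)}+3^{C(v)}\bigr)$, so the potential contracts by a constant factor of its pair-contribution \emph{regardless of which pair is chosen}. This yields $\Exp{\Gamma_{\tau+1}} \le \bigl(1-\tfrac{2k-2}{3n^2}\bigr)\Exp{\Gamma_\tau}$, and since $\Gamma$ starts at most $n\cdot n^{2\cliveness}$ and stays at least $2$ until a reset, geometric decay over $O(\cliveness\, n^2\log n)$ steps plus Markov's inequality forces a reset w.h.p. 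If you want to keep your more elementary maximum-tracking route, you must replace the global Chernoff count by the per-level argument sketched above; otherwise, switching to the exponential potential is the cleaner fix.
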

\begin{proof}
We adapt a proof from \cite[Lemma 3.3]{DBLP:conf/podc/BurmanCCDNSX21}, which, in turn, is adapted from \cite[Lemma 1]{DBLP:conf/dna/AlistarhDKSU17}.

%\begin{proof}[Proof sketch]
    For an agent $v$ define $C_\tau(v)$ to be $-\infty$ if $\LivenessCount{v} = \bot$ at time $\tau$ or if the protocol resets at a time between $t$ and $\tau$;
        otherwise let $C_\tau(v)$ be the value of $\LivenessCount{v}$ at time $\tau$.
    Furthermore, define $\Gamma_\tau(v) = 3^{C_{\tau}(v)}$ (thus $\Gamma_\tau(v) = 0$ if $C_\tau(v) = -\infty$),
        and let $\Gamma_t = \sum_{v \in V} \Gamma_t(v)$.
    As long as there are no productive pairs,
        the only way the values $C_\tau(v)$ (and thus $\Gamma_\tau$) can change is
            if there is a reset in step $\tau$ (in which case $\Gamma_{\tau+1}$ becomes $0$)
            or if two agents with $\LivenessCount \cdot \neq \bot$ (i.e., two unranked agents) interact in step $\tau$.

    Now let $k \geq 2$ be the number of agents without a rank at time $t$. Due to \cref{observation_dead} all these agents remain unranked until the protocol resets.
    The probability that two agents without a rank, say $v$ and $u$, interact at a given time (before a reset)
        is $\frac{k(k-1)}{n(n-1)}$.
    In such an interaction,
        they both reset their $\texttt{aliveCount}$s to the maximum of the  $\texttt{aliveCount}$s of $u$ and $v$, minus one.
    Then,
        \[\Gamma_{\tau+1}(u) + \Gamma_{\tau+1}(v) = 2 \cdot 3^{\max\{C_\tau(u), C_\tau(v)\}-1} \leq \frac{2}{3} \max\{3^{C_\tau(u)}, 3^{C_\tau(v)}\} \leq \frac{2}{3} \cdot (\Gamma_\tau(u) + \Gamma_\tau(v)).\]
    Conditioned on the event that in some step $\tau$ two unranked agents are chosen for interaction, for two arbitrary but fixed unranked agents $u,v$ we have
        %with probability $2/(k(k-1))$ that
        $\Gamma_{\tau+1} \leq \Gamma_\tau -
        (\Gamma_\tau(u) + \Gamma_\tau(v))/3$
        with probability $2/(k(k-1))$. Then,
        %(since $u$ and $v$ are chosen independently and u.a.r., and there are at most $k$ non-zero $\Gamma_\tau(v)$)
        \[\Exp{\Gamma_{\tau+1}} \leq \bc{1 - \frac{(k-1)k}{n(n-1)}} \cdot \Exp{\Gamma_\tau} + \frac{(k-1)k}{n(n-1)} \cdot \bc{\Exp{\Gamma_\tau} - \frac{2}{3k} \cdot \Exp{\Gamma_\tau}}
           \leq  \bc{1 - \frac{2k-2}{3n^2}} \cdot \Exp{\Gamma_\tau}.\]
    %To obtain the first inequality, we used that in each step two agents are chosen independently and u.a.r., and that there are at most $k$ agents with $\Gamma_\tau(v) \neq 0$.
    The value of $\Gamma$ is always at most $n \cdot n^{2\cliveness}$,
        so in $c \cdot \cliveness n^2 / (k-1) \cdot \log n \leq c \cdot \cliveness n^2 \log n$ time steps,
        the expected value of $\Gamma$ will decrease below $n^{-c'}$ for any predefined constant $c'$ if $c$ is large enough. Applying now Markov's inequality, we obtain the lemma.
\end{proof}

\subsection{Proof of \cref{lem:recovery}}
\label{sec:proof_recovery}

Recall that we need to show that when $\vX_t \in \CT$ is a triggered configuration,
    the protocol will enter a leader-electing configuration ($\in \CLE$, see \cref{sec:proof_leader_elect} below) within $O(n \log n)$ interactions.

The following lemma, which describes the behavior of $\ResetProt$,
    uses the notion of an \emph{awakening configuration},
    which is the first partially computing configuration reachable from a fully dormant configuration.

\begin{lemma}[Corollary of Theorem 3.4 in \cite{DBLP:journals/corr/abs-1907-06068}]
    Let $\Rmax = 60 \ln n$ and $\Dmax = \Omega(\log n + \Rmax)$.
    Starting from a triggered configuration\footnote{Note that this is called a ``partially-triggered configuration in \cite{DBLP:journals/corr/abs-1907-06068}.},
        we reach an awakening configuration in $\Theta(\Dmax n)$ interactions\footnote{In \cite[Theorem 3.4]{DBLP:journals/corr/abs-1907-06068}, this is just an upper bound. However, as the $\texttt{delayCount}$ of some agent has to decrease from $\Dmax$ to $0$, and $\Dmax = \Omega(\log n)$ with a sufficiently large constant, it indeed also takes at least $\Omega(n \Dmax)$ interactions for this to occur.} with probability at least $1 - O(1/n)$.
\end{lemma}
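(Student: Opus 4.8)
The plan is to establish the two directions of $\Theta(\Dmax n)$ separately. The upper bound, $O(\Dmax n)$ interactions with probability at least $1-O(1/n)$, is exactly the statement of Theorem~3.4 in \cite{DBLP:journals/corr/abs-1907-06068} applied to the given starting configuration (a triggered configuration is precisely a partially-triggered configuration in the terminology of that paper) under the parameter choices $\Rmax = 60\ln n$ and $\Dmax = \Omega(\log n + \Rmax)$ with a sufficiently large hidden constant. Hence the only thing that requires a genuine argument here is the matching lower bound $\Omega(\Dmax n)$, which is not provided by the cited theorem.

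For the lower bound I would argue from the mechanics of $\ResetProt$ as follows. By the specification of $\ResetProt$, every agent acquires the value $\DelayCount{\cdot} = \Dmax$ at the moment it first becomes dormant: $\DelayCount{\cdot}$ is set to $\Dmax$ when an agent turns propagating, and since it is decremented only while the agent is dormant, it still equals $\Dmax$ at the first step at which that agent's $\ResetCount{\cdot}$ reaches $0$. Thereafter $\DelayCount{\cdot}$ drops by at most one per interaction the agent participates in, and an agent re-enters leader election only once its $\DelayCount{\cdot}$ hits $0$. Since an awakening configuration is, by definition, reachable only via a fully dormant configuration, reaching one from a triggered configuration forces at least one agent $w$ to accumulate $\Dmax$ decrements of $\DelayCount{w}$, i.e., to be selected in at least $\Dmax$ interactions over the run. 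For a fixed agent $w$ and a window of $T = c\,n\,\Dmax$ consecutive steps with $c<1/4$ a small constant, the number of interactions involving $w$ is $\Bin(T, 2/n)$ with mean $2c\,\Dmax < \Dmax/2$, so a Chernoff bound gives $\Prob{w \text{ selected} \geq \Dmax \text{ times}} \leq \exp(-\Omega(\Dmax))$, which is at most $n^{-2}$ once $\Dmax \geq C\log n$ for a large enough constant $C$ — available since $\Dmax = \Omega(\log n + \Rmax)$ with a hidden constant we control. A union bound over the $n$ agents then shows that with probability $1 - O(1/n)$ no agent reaches $\DelayCount{\cdot}=0$, hence no awakening configuration is reached, within the first $c\,n\,\Dmax$ interactions. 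Combining with the cited upper bound yields $\Theta(\Dmax n)$ w.h.p.

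I expect the main obstacle to be purely definitional bookkeeping rather than anything quantitative: I need to be careful that a newly-dormant agent genuinely carries $\DelayCount{\cdot} = \Dmax$ (including the originally triggered agent, once its $\ResetCount{\cdot}$ has run down), and that an awakening configuration truly sits behind a fully dormant one — both of which we take from the precise definitions in \cite{DBLP:journals/corr/abs-1907-06068}. One small point worth checking in the lower-bound argument is that a union bound over agents alone suffices, with no need to also union over the random time at which each agent becomes dormant; this is fine because each of $w$'s $\Dmax$ required decrements must lie inside the single window $[0, T]$ regardless of when $w$ becomes dormant, so the event analyzed above already dominates every possible dormancy-start time for $w$.
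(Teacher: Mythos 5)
Your proposal is correct and matches the paper's treatment: the paper proves this lemma only by citing Theorem~3.4 of \cite{DBLP:journals/corr/abs-1907-06068} for the upper bound and justifying the matching lower bound in a footnote by observing that some agent's $\texttt{delayCount}$ must fall from $\Dmax$ to $0$, which requires $\Omega(n\Dmax)$ interactions since $\Dmax = \Omega(\log n)$ with a large enough constant. Your Chernoff-plus-union-bound argument over a fixed window of $c\,n\,\Dmax$ interactions is exactly the standard formalization of that footnote, and your handling of the dormancy-start time is sound.
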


Now recall that even while $\PropagateReset$ is running,
    responding agents flip their coin on every interaction,
    and that our definition of $\CLE$
    requires that the difference in numbers between coins showing $1$ and $0$ is at most $\frac{n}{4\log n}$.
As $\PropagateReset$ takes at least $\Omega(n \Dmax)$ interactions,
    and $\Dmax = \Omega(\log n)$ with the leading constant being our choice,
    the following lemma shows that this coin property indeed holds with sufficient probability.
As the proof of the Lemma is entirely analogous to that found in \cite{DBLP:conf/soda/BerenbrinkKKO18} (replacing occurrences of $\log \log \log n$ with $\log \log n$), we omit it here.

\begin{lemma}[cf.\ Lemma 3 in \cite{DBLP:conf/soda/BerenbrinkKKO18}]
    Let $\gamma > 0$ and consider an interaction $t$ with $n \log(4 \log n) / 2 \leq t \leq n^\gamma$.
    Then the number of coins that equal zero at the beginning of interaction $t$ lies with probability at least $1 - n^{-\gamma}$ in $(1 \pm 1 / (4 \log n)) \cdot n / 2$.
\end{lemma}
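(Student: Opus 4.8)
The plan is to treat the synthetic coin of each agent as a parity counter. Fix an agent $v$ that currently carries a coin. The coin $\Coin v$ flips precisely on those interactions in which $v$ participates, so at the beginning of interaction $t$ (i.e.\ after $t-1$ completed interactions) its value equals its initial value $\oplus\,(a_v \bmod 2)$, where $a_v$ is the number of the first $t-1$ interactions in which $v$ takes part. Since in each interaction the unordered pair is drawn uniformly and independently of everything else, $v$ is chosen with probability $2/n$ in each interaction independently of the other interactions and of the initial configuration; hence $a_v$ is a sum of $t-1$ i.i.d.\ $\mathrm{Bernoulli}(2/n)$ variables. Throughout the windows in which this lemma is applied — during $\PropagateReset$ and during $\FastLeaderElect$ — every one of the $n$ agents carries such a coin, so the quantity of interest, namely $Z_t$, the number of zero-coins at the beginning of interaction $t$, really is a sum over all $n$ agents.

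First I would compute the expectation. Because $t$ may be as large as $n^\gamma$, far past the mixing time of the parity counter, I do not expose all $t-1$ interactions; instead I pick $m = \min\{t-1,\ \lceil C n\log n\rceil\}$ for a fixed constant $C \ge 1$ and expose the whole configuration $\vX_{t-m}$. The warm-up hypothesis $t \ge n\log(4\log n)/2$ together with $\log(4\log n) = o(\log n)$ guarantees $4m/n = \Omega(\log(4\log n))$, and the coin values at interaction $t$ are a deterministic function of $\vX_{t-m}$ and the $m$ i.i.d.\ pair choices of interactions $t-m,\dots,t-1$ (which remain i.i.d.\ uniform after conditioning on $\vX_{t-m}$). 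Conditioned on $\vX_{t-m}$, the number of those $m$ interactions containing $v$ is $\Bin(m,2/n)$, so by the elementary identity $\Prob{\Bin(m,p)\text{ even}} = \tfrac12\bigl(1+(1-2p)^m\bigr)$, whatever the value of $\Coin v$ at time $t-m$,
\[
\bigl|\Prob{\Coin v=0\text{ at interaction }t \mid \vX_{t-m}} - \tfrac12\bigr| \;\le\; \tfrac12(1-4/n)^m \;\le\; \tfrac12\, e^{-4m/n},
\]
which by the warm-up hypothesis is $o(1/\log n)$, in particular at most $1/(16\log n)$ for $n$ large. Summing over the $n$ agents gives $\bigl|\ExpCond{Z_t}{\vX_{t-m}} - n/2\bigr| \le n/(16\log n)$.

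For the concentration I would apply the method of bounded differences (McDiarmid's inequality) to $Z_t$, regarded for fixed $\vX_{t-m}$ as a function of the $m$ independent pair choices of interactions $t-m,\dots,t-1$. Replacing any single such pair changes the activation parities of at most the four agents lying in the symmetric difference of the old and the new pair, hence changes $Z_t$ by at most $4$; therefore
\[
\Prob{\,\bigl|Z_t - \ExpCond{Z_t}{\vX_{t-m}}\bigr| > \tfrac{n}{16\log n}\ \Bigm|\ \vX_{t-m}\,} \;\le\; 2\exp\!\left(-\frac{(n/(16\log n))^2}{8m}\right) \;=\; 2\exp\!\bigl(-\Omega(n/\log^3 n)\bigr),
\]
using $m = O(n\log n)$; for $n$ large (depending on $\gamma$) this is below $n^{-\gamma}$. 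Taking the expectation over $\vX_{t-m}$ removes the conditioning, and a triangle inequality combining the last two displays yields $|Z_t - n/2| \le n/(8\log n)$, i.e.\ $Z_t \in (1 \pm 1/(4\log n))\cdot n/2$, with probability at least $1 - n^{-\gamma}$.

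The only genuinely delicate point is the large-$t$ regime: a naive bounded-differences argument over all $t-1$ pair choices gives a useless bound once $t \gtrsim n^2$, which is why the \emph{restart} step — exposing only the last $\Theta(n\log n)$ interactions and exploiting that the parity counter has already equidistributed by then — is needed. In the one place the lemma is used, inside the proof of \cref{lem:recovery}, one always has $t = O(n\log n)$, so there the truncation is unnecessary and the argument could instead expose the initial configuration directly; the truncation is only present to cover the cleaner stated range $t \le n^\gamma$. Everything else — the Binomial-parity identity and the constant bookkeeping in the $o(1/\log n)$ estimate — is routine.
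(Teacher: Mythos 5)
Your proof is correct. Note that the paper does not actually prove this lemma: it states it ``cf.\ Lemma~3 in [BKKO18]'' and explicitly omits the proof as ``entirely analogous'' to that reference. Your argument reconstructs exactly the standard skeleton of that proof: (i) the coin of an agent at interaction $t$ is its earlier value XORed with the parity of its activation count, which is binomial, so the identity $\Prob{\Bin(m,p)\text{ even}}=\tfrac12(1+(1-2p)^m)$ pins the per-agent bias down to $O(1/\log^2 n)$ once $t\geq n\log(4\log n)/2$; (ii) concentration of the zero-coin count around its conditional mean; (iii) a truncation to the last $\Theta(n\log n)$ interactions so that the exposure argument does not degrade for $t$ up to $n^{\gamma}$. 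Your use of McDiarmid with difference bound $4$ is a clean way to do step~(ii), and the bookkeeping ($n/(16\log n)$ for the mean plus $n/(16\log n)$ for the fluctuation, summing to the target $n/(8\log n)=\frac{1}{4\log n}\cdot\frac n2$) is right.

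Two small caveats, neither fatal. First, the bounded-difference constant $4$ (and the validity of the parity identity over the whole window) relies on every agent carrying a coin that is only ever \emph{flipped} -- never created, destroyed, or re-initialized -- throughout the exposed window $[t-m,t)$; you state this assumption, and it does hold in the one place the lemma is invoked (during the dormant/reset window preceding leader election), but it is an assumption about the application context rather than a consequence of the lemma's hypotheses. Second, your calculation takes the activation probability of an agent per interaction to be $2/n$ (both participants flip), which is the convention under which the stated warm-up constant $n\log(4\log n)/2$ yields bias $O(1/\log^2 n)$ with room to spare; the paper's pseudocode for \StableRanking\ toggles only the responder's coin, under which the per-agent probability is $1/n$ and the same warm-up bound would make the expectation term alone exhaust the entire $n/(8\log n)$ budget. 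This is an inconsistency in the paper's presentation rather than in your proof -- the lemma's constants are calibrated for the convention you adopted -- but it is worth being aware that the argument is tight in the other reading.
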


\subsection{Proof of \cref{lemma:convergence_leader_elect}}
\label{sec:proof_leader_elect}

We start from a configuration that results from executing $\ResetProt$ protocol.
Intuitively, these are all configurations the population is ready to execute \FastLeaderElect.
To be precise,  all agents are either dormant (and wait to start the protocol) \emph{or} are in the initial state of the leader election.
Furthermore, the values of the flip bits have converged to a distribution such that roughly half of all agents have either bit.
In other words, all agents have been dormant long enough for the bit to settle, and first, the agent has just woken up from being dormant.
Formally, we define these configurations as follows.

\begin{definition}[\CLE]
In a leader electing configuration $\vX \in \CLE$, all agents are either dormant, i.e., it holds  $\DelayCount{v} \geq 0$, or are in an initial state $q_{0,i}$ ($i \in \{0,1\}$) for $\FastLeaderElect$, i.e., their variables have the following values \begin{align*}(\LeaderBit{v},\LeaderDone{v},\CoinCount{v}, \LECount{v}, \Coin{v}) = (0,0,\lceil\log n\rceil,\Lmax, i).\end{align*}
Furthermore, the following (global) property holds for the agents' coins:
    \begin{align*}
        \big||\{ v \in V \mid \Coin{v} = 1 \}| - |\{ v \in V \mid \Coin{v} = 0 \}| \big|  \leq \frac{n}{4\log n}.
    \end{align*}
\end{definition}

Starting from such a configuration, the population will reach configuration with a \emph{unique} leader with constant probability.
\begin{lemma}[Prob. for Unique Leader]
\label{lemma:leader_prob}
Suppose that the following (global) property holds for the agents' coins:
\begin{align*}
        \big||\{ v \in V \mid \Coin{v} = 1 \}| - |\{ v \in V \mid \Coin{v} = 0 \}| \big| \leq \frac{n}{4\log n}.
\end{align*}
Then, with probability greater than $\nicefrac{1}{8e}$, there is exactly one agent that sets $\LeaderBit{v} = 1$.
\end{lemma}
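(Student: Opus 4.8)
\textbf{Proof plan for \cref{lemma:leader_prob}.}
The plan is to analyze the ``lottery'' process: each agent tries to collect $\ceil{\log n}$ heads in a row before seeing its first tail, where each coin observed is the (pre-toggled) coin of its interaction partner. The core combinatorial fact I would establish is that the random bits an agent observes on its partners are close to i.i.d.\ fair coin flips. First I would argue that, after the warm-up, the coins among the population behave as independent unbiased bits; combined with the near-balanced hypothesis on $\{v : \Coin v = 1\}$ vs.\ $\{v : \Coin v = 0\}$, the probability that a fixed agent $v$ observes $\ceil{\log n}$ heads (that is, partners with $\Coin{w}=1$) before a tail is within a constant factor of $2^{-\ceil{\log n}} = \Theta(1/n)$. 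Using the balance assumption, each relevant partner shows heads with probability in $[\tfrac12 - \tfrac{1}{8\log n}, \tfrac12 + \tfrac{1}{8\log n}]$, and raising this to the power $\ceil{\log n}$ only changes the geometric-series estimate by a bounded multiplicative factor (since $(1 \pm \tfrac{1}{8\log n})^{\log n} = \Theta(1)$). So the expected number of agents that become leader is $\Theta(1)$, say between $c_1$ and $c_2$ for explicit constants.

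The second ingredient is a lower bound on the probability of \emph{exactly one} agent winning. Here I would use a second-moment / inclusion–exclusion argument. Let $X = \sum_{v} \boldone\{v \text{ becomes leader}\}$. I would show $\Exp{X} \geq c_1$ and $\Exp{X(X-1)} \leq c_3$ by bounding the probability that two specified agents both win: the events for two distinct agents are ``almost independent'' because they depend on disjoint (or nearly disjoint) sets of coin observations, so $\Prob{v,w \text{ both win}} = O(1/n^2)$, and summing over the $\binom n2$ pairs gives $\Exp{X(X-1)} = O(1)$. Then $\Prob{X = 1} \geq \Exp{X} - \Exp{X(X-1)}$ is not directly what I want; instead I would bound $\Prob{X \geq 2} \leq \tfrac12\Exp{X(X-1)}$ and $\Prob{X = 0}$ via a Paley–Zygmund-type inequality $\Prob{X \geq 1} \geq \Exp{X}^2/\Exp{X^2}$, and combine: $\Prob{X = 1} = \Prob{X \geq 1} - \Prob{X \geq 2} \geq \Exp{X}^2/\Exp{X^2} - \tfrac12\Exp{X(X-1)}$. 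Tuning the constants (and, if necessary, slightly restricting which agents we count, e.g.\ only those whose lottery concludes within the first $O(n\log n)$ interactions so that timing does not interfere) yields a bound of at least $1/(8e)$.

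The main obstacle I expect is making the independence claims rigorous in the population-protocol scheduler. The subtlety is that an agent observes its \emph{partners'} coins, and a single agent can be a partner of many others, so the winning events are not literally independent; moreover coins are toggled after each interaction, so successive observations of the same partner are correlated. I would handle this by conditioning on the sequence of scheduled pairs and on the initial coin assignment, reducing everything to a deterministic-combinatorics statement about bit strings, and then invoking the warm-up lemma (\cref{lem:leszek_clock}-style, or the coin lemma cited from \cite{DBLP:conf/soda/BerenbrinkKKO18}) to control the distribution of those bit strings. An alternative, cleaner route is to note that it suffices for the first $\ceil{\log n}$ interactions of $v$ after it last saw a tail to all have partners with $\Coin{\cdot}=1$; by a coupling with fresh fair coins (valid because toggled coins of agents not yet re-encountered are unbiased and mutually independent given the balance condition), these events decouple across agents up to an $O(1/n)$ error, which is absorbed into the constants. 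Once the first and second moments are pinned down, the final inequality is a short calculation, and picking $1/(8e)$ as the stated constant leaves comfortable slack.
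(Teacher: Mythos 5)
Your first-moment analysis coincides with the paper's: by the balance hypothesis each observed partner shows heads with probability in $\left[\frac{1}{2}\left(1-\frac{1}{4\log n}\right),\,\frac{1}{2}\left(1+\frac{1}{4\log n}\right)\right]$, so a fixed agent becomes a leader with probability between $\frac{1}{4n}$ and $\frac{2}{n}$. Where you diverge is the passage from ``expected $\Theta(1)$ leaders'' to ``exactly one leader with constant probability''. The paper computes the probability that $v$ is the \emph{unique} leader directly as $\Prob{L_v=1}\prod_{w\neq v}\left(1-\Prob{L_w=1}\right)\geq\frac{1}{4n}\left(1-\frac{2}{n}\right)^{n-1}\geq\frac{1}{8en}$ and sums these disjoint events over $v$; this Poisson-style product is insensitive to whether the expected number of leaders lies above or below $1$. (It does tacitly assume independence of the leader events across agents --- a point you are right to flag and which the paper does not justify.)

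The gap is in your second step. The hypotheses only pin the per-agent win probability to the window $\left[\frac{1}{4n},\frac{2}{n}\right]$, so $\Exp{X}$ may exceed $1$: e.g.\ with the coins maximally biased towards heads, $\Exp{X}\approx e^{1/4}\approx 1.28$. In that regime, even under exact independence, $\Exp{X(X-1)}\approx\Exp{X}^2>\Exp{X}$, so the Bonferroni bound $\Prob{X=1}\geq\Exp{X}-\Exp{X(X-1)}$ is negative, and your proposed combination $\Prob{X=1}\geq\Exp{X}^2/\Exp{X^2}-\frac{1}{2}\Exp{X(X-1)}$ is negative as well (numerically, roughly $0.56-0.82<0$). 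Truncating inclusion--exclusion at the second moment provably cannot certify a positive constant once $\Exp{X}>1$, so this is not a matter of tuning constants. To close the argument you must either sharpen the per-agent probability enough to force $\Exp{X}<1$ (which the stated coin-balance bound does not give you), or replace the second-moment step by the full product $\sum_v p_v\prod_{w\neq v}(1-p_w)$ --- which is exactly the paper's route. Your coupling/conditioning plan for establishing near-independence is a genuine and worthwhile addition (indeed more careful than the paper), but it feeds into a final inequality that does not yield the claimed bound.
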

\begin{proof}
Fix an agent $v \in V$ and let $L_v \in \{0,1\}$ be the indicator that $v$ is a leader and $U_v \in \{0,1\}$ be the indicator
that $v$ is the unique leader.
Note that $\log n \leq \lceil\log n \rceil \leq \log 2n \leq 2\log n$.
Then, the probability that $v$ becomes a leader is lower bounded by
\begin{align*}
    \Prob{L_v=1} &\geq \left(\left(1-\frac{1}{4\log n}\right)\frac{1}{2}\right)^{\lceil\log n \rceil}
     \geq \left(1-\frac{1}{4\log n}\right)^{\lceil\log n \rceil} \frac{1}{2^{\lceil\log n \rceil}} \geq  \left(1-\frac{1}{4\log n}\right)^{\lceil\log n \rceil}\frac{1}{2n} \geq  \left(1-\frac{2\log n}{4\log n}\right)\frac{1}{2n} \\& \geq \frac{1}{4n}.
\end{align*}
Furthermore, using an analogous calculation, it is upper bounded by
\begin{align*}
    \Prob{L_v=1} &\leq \left(\left(1+\frac{1}{4\log n}\right)\frac{1}{2}\right)^{\lceil\log n \rceil}
     \leq \left(1+\frac{1}{4\log n}\right)^{\lceil\log n \rceil} \frac{1}{2^{\lceil\log n \rceil}}
    \leq  \left(1+\frac{1}{4\log n}\right)^{2\log n}\frac{1}{n}
    \\ &\leq  \left(1+\frac{1}{4\log n}\right)^{4\log n \cdot \frac{1}{2}}\frac{1}{n}
    \leq \frac{\sqrt{e}}{n} \leq \frac{2}{n}.
\end{align*}
The first inequality in the last line follows from the fact that $(1+\frac{x}{n})^n \leq e^x$.
Note that these bounds are the same for all agents.
Thus, the probability that $v$ is the \emph{unique} leader is lower can be bounded as follows
\begin{align*}
    \Prob{U_v=1} &= \Prob{L_v = 1} \prod_{w \in V\setminus\{v\}}\left(1-\Prob{L_w = 1}\right)m \geq  \frac{1}{4n} \left(1 - \frac{2}{n}\right)^{n-1}
    \geq  \frac{1}{4n} \left(1 - \frac{2}{n}\right)^{n}  \geq  \frac{1}{4n} \left(e^{-1} \left(1-\frac{4}{n}\right)\right) \geq \frac{1}{8en}.
\end{align*}
Here, we used that $(1-\frac{x}{n})^{n} \geq e^{-x} (1-\frac{x^2}{n})$ and $1 - \frac{4}{n} \geq \frac{1}{2}$ for $n \geq 8$.
As all event $U_v$ are disjoint, we can sum them up and get
\begin{align*}
   \Prob{\sum_{v \in V}{U_v} = 1} = \sum_{v \in V} \Prob{U_v = 1} \geq \sum_{v \in V} \frac{1}{8e}\cdot\frac{1}{n} =  \frac{1}{8e}.
\end{align*}
Thus, the lemma follows.
\end{proof}

Recall that, we trigger a reset if any agent is activated $\Lmax$ times.
If we choose $\Lmax \in \Theta(\log n)$ large enough, the leader will be elected before any agent interacts $({1}/{8})\cdot\Lmax$ times.
Thus, no agent \emph{accidentally} starts with executing the main protocol before there is a unique leader.

Formally, we define these configurations as follows.
\begin{definition}\label{def:csrplus}
$\CSRPlus$ is the set of configurations where all agents' variables have the following values.
\begin{align*}
\LECount{v} &\geq ({7}/{8}) \cdot \Lmax& \text{Counter was not decreased too much.}\\
\LeaderDone{v} &= 1 &\text{No (additional) agent will be leader.}
\end{align*}
Furthermore, there is \emph{exactly one} agent $l \in V$ with $\LeaderBit{l} = 1$, and for \emph{all other} agents $w \in V \setminus \{l\}$, it holds $\LeaderBit{w} = 0$.
\end{definition}
Using standard arguments, we can show the following lemma.
\begin{lemma}
\label{lemma:LE_to_SAFE}
Let $c$ be a sufficiently large constant, and assume $\vX_t \in \CLE$. Then, with constant probability, there is a $\tau \in [c \cdot n^2 \cdot \log n]$ such that  $\vX_{t+\tau} \in \CSRPlus$.
Otherwise, with probability ${1}/{2^d}$, there is a $\tau' \in [d \cdot c \cdot \cdot n^2 ]$ such that  $\vX_{t+\tau'} \in \CLE$.
\end{lemma}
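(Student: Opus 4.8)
The final statement to prove is Lemma~\ref{lemma:LE_to_SAFE}: starting from a leader-electing configuration $\vX_t \in \CLE$, with constant probability the protocol reaches a safe ranking configuration $\vX_{t+\tau} \in \CSRPlus$ within $O(n^2 \log n)$ interactions, and otherwise (with probability $1/2^d$) it returns to a leader-electing configuration within $O(d \cdot n^2)$ interactions.

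The plan is to combine the constant success probability of the lottery-style leader election (\cref{lemma:leader_prob}) with a ``fast enough'' timing argument. First I would track the interactions of each individual agent. By a standard argument, within $O(n^2 \log n)$ interactions every agent has been activated $\Theta(\log n)$ times w.h.p., and conversely, within $O(n \log n)$ interactions no agent has yet been activated $\Omega(\log n)$ times w.h.p.; choosing $\Lmax = \cliveness \log n$ with a large enough constant gives a comfortable separation. Concretely, I would show that within $O(n \log n)$ interactions, every agent has been activated at least $\lceil \log n \rceil$ times (so each agent has observed enough coin flips to decide whether it becomes a leader and to set $\LeaderDone{\cdot} = 1$), while no agent has been activated more than $\Lmax / 8$ times. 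During this window, each agent's coin-observation sequence is (close to) i.i.d.\ fair coin flips because $\CLE$ guarantees the global near-balance of coins and the coin is flipped on every activation — this is exactly the setup of \cref{lemma:leader_prob}. Hence with probability at least $1/(8e)$ there is a unique agent $\ell$ with $\LeaderBit{\ell} = 1$, and all variables are as required in \cref{def:csrplus} (in particular $\LECount{v} \geq (7/8)\Lmax$ since no agent was activated more than $\Lmax/8$ times). This establishes the $\CSRPlus$ branch.

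For the failure branch, I would argue that if leader election does \emph{not} produce a unique leader in this first window — either zero leaders or two or more leaders — then a reset is triggered within $O(n^2 \log n)$ further interactions, leading back to $\CLE$ via \cref{lem:recovery}. If no leader is elected, every non-leader agent keeps decrementing $\LECount{v}$ and, since no agent transitions to the main protocol, some agent reaches $\LECount{v} = 0$ and calls $\TriggerReset$ within $O(n \Lmax) = O(n \log n)$ interactions. If two or more leaders are elected, they all start the main protocol as waiting leaders and begin phase~$1$ of the ranking; this produces duplicate ranks (as two unaware leaders independently assign rank $f_2 + 1$, etc.), and by \cref{lem:reset_quickly_when_dead_and_dupes_exist} (or directly by the duplicate-rank detection in line~\ref{ln:same-rank} of \cref{alg:ranking_plus}) a reset is triggered within $O(n^2 \log n)$ interactions w.h.p. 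After the reset, \cref{lem:recovery} brings the protocol back to $\CLE$. Since each attempt independently succeeds with probability $\geq 1/(8e)$, after $d$ attempts the probability of still not having reached $\CSRPlus$ is at most $(1 - 1/(8e))^d \leq 2^{-\Omega(d)}$ (absorbing the constant into $d$), and each attempt together with its reset takes $O(n^2 \log n)$ interactions, giving the stated $O(d \cdot n^2 \log n)$ bound (the statement writes $O(d \cdot n^2)$, which I would reconcile by folding the $\log n$ into the constant or restating the bound precisely).

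The main obstacle I expect is making the coin-flip analysis rigorous in the failure branch and the timing window: I need to argue that the coins behave like fair i.i.d.\ bits \emph{throughout} the $O(n \log n)$-interaction window during which leaders are being elected, not just at its start. The coin near-balance in $\CLE$ is a guarantee only about the initial configuration; I would need to verify that over $O(n \log n)$ interactions the imbalance does not drift enough to spoil the constant lower bound of $1/(8e)$ in \cref{lemma:leader_prob}, which plausibly follows because each agent flips its own coin deterministically on activation and the synthetic-coin bias stays within $(1 \pm 1/(4\log n))$ over polynomially many steps (cf.\ the coin lemma cited in \cref{sec:proof_recovery}). A secondary subtlety is handling the case where the first window produces a unique leader ``too slowly'' — i.e.\ $\LECount{\ell}$ has already dropped below $\Lmax/2$ when $\ell$ would transition — in which case $\ell$ never transitions to the main protocol and instead the counter hits $0$ and triggers a reset; I would lump this into the failure branch, noting it happens with probability $o(1)$ given the chosen window sizes, so it does not affect the constant success probability.
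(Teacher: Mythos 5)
Your proposal follows essentially the same route as the paper's proof: establish that within the first $O(n\log n)$ interactions every agent has decided ($\LeaderDone{\cdot}=1$) while no agent has burned more than $\Lmax/8$ activations, invoke \cref{lemma:leader_prob} for the constant-probability unique-leader event giving $\CSRPlus$, and handle the failure branch by a reset triggered either through the $\LECount{}$ timer (zero leaders) or through duplicate-rank detection (multiple leaders), with geometric amplification over repeated attempts. The one step you gloss over that the paper works out in detail is the multi-leader case: asserting that several leaders ``produce duplicate ranks'' requires showing that two of them actually assign the \emph{same} rank, which the paper does by proving that the first two leaders both finish waiting while the population is still in phase $1$ (so both assign $\lceil n/2\rceil+1$), after which the $O(n^2)$ geometric interaction bound yields the $1/2^d$ failure branch; without that timing argument one must also rule out a straggler leader being overwritten by the epidemic or stuck waiting (which the two-waiting-agents rule and the liveness counter handle). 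Your observation that the stated $\tau'\in[d\cdot c\cdot n^2]$ bound should really carry the $\log n$ from the detection/reset phases matches the looseness in the paper's own statement.
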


\begin{proof}
We begin with the first two properties and show that there is time step where all agents have $\LECount{v} \geq ({7}/{8}) \cdot \Lmax$ and $\LeaderDone{v}=1$.
Recall that in a configuration from $\CLE$, there is precisely one agent in a state $q_{0,i} \in \QLE$, and all others are dormant.
In each interaction between a dormant agent and an agent with a state from $\QLE$, the dormant agent will always switch to a state from $\QLE$.
Furthermore, whenever two dormant agents interact, they either remain dormant (if it holds $\DelayCount{v} > 0$ for both of them) or they switch to an initial state $q_{0,i} \in \QLE$.
Thus, the time in which all agents switch to a state from $\QLE$ is upper bounded by the time of one-way epidemic, namely $4\gamma\cdot n\cdot \log n$ interactions with probability $1-O(\nicefrac{1}{n^\gamma})$.
Furthermore, note that each agent $v \in V$ switches from being dormant sets $\LECount{v} = \Lmax$ upon initialization.

Let $t_1$ be the time step where the last agent $v \in V$ switches to $q_{0,i} \in \QLE$.
From this point on, each interaction between $v$ and $w$, will decrease $\CoinCount{v}$ unless $\LeaderDone{v}=1$.
Once, it holds $\CoinCount{v} = 0$, it also sets $\LeaderDone{v}=1$.
After $(1+\gamma)\cdot n\log n$ steps each agent has been activated $\lceil\log n\rceil$ times with probability $1-n^{-\gamma}$ and it holds $\LeaderDone{v}=1$ for all $v \in V$ .
Thus, after $t_1 \leq 4\gamma\cdot n\cdot \log n$ global steps for waking up and $t_2 \leq (1+\gamma)\cdot n\log n$ global steps for flipping coins, all agents have $\LeaderDone{v} = 1$.

During all this time, each agent was activated $t_1 + t_2 \leq (5\gamma+1)\cdot n \cdot \log n$ times on expectation.
For a large enough choice of $\Lmax > 100\gamma\log n$, no agent was activated less than $(\nicefrac{1}{8})\cdot\Lmax$ times with probability $1-O(n^{-\gamma})$.
This follows from a straightforward application of the Chernoff bound \cite{DBLP:journals/ipl/HagerupR90}.

Now, we get to the second property, the number of leaders.
By \cref{lemma:leader_prob} we know that exactly one leader is elected with constant probability.
In this case, we are done.
Therefore, it remains to show that we reset if have no or more than one leader.

\emph{Case 1: No leader.} \quad If no leader is elected after $\Theta(\Lmax \cdot n)$ interactions, one agent $v \in V$ must have been activated $\Lmax$ times and it holds $\LECount{v} = 0$.
As this agent is not a leader, it will trigger a reset.

\emph{Case 2: Two or more leaders.} \quad
Suppose we have two or more leaders.
In the following we assume w.l.o.g.\ that at least two leaders start waiting. Otherwise, we are in Case $1$ or the protocol \emph{overwrites} the other leaders.
We will show that in this case, either two leaders assign rank $r_1 = \lceil\frac{n}{2}\rceil + 1$ or trigger a reset in $O(n^2)$ steps.

First, we argue that two leaders enter the waiting state at most $O(n\log n)$ steps apart, w.h.p.
Suppose that the first leader starts to wait in some step $t_1$.
Then, the epidemic that turns any other leader into a phase agent reaches every other agent $O(n \log n)$ steps later.
Thus, if a second leader also starts waiting in some step $t_2 \geq t_1$, we can assume $|t_2 - t_1| \in O(n\log n)$.

Let now $\ell_1$ and $\ell_2$ be the first two leaders that stop waiting and start ranking.
We claim that both agents stop waiting while all other unranked agents are still in the first phase.
This is trivially true for $\ell_1$.
To prove this for $\ell_2$, note that $\ell_1$ must interact with $n/2$ agents to finish the first ranking phase.
This requires at least $N = \Omega(\frac{n^2}{\log n})$ global time, w.h.p.
If $t_1$ is the time where any leader started waiting, at time $t_1 + N$, all agents are still in the first phase (unless, of course, we triggered a reset).
We argue that $\ell_2$ stops waiting before this time.
During this time, at least half the agents remain unranked (if no other leader wakes up).
Thus, if the second leader has not stopped waiting up until time $t_1 + N$, it would have either interacted with $\Theta(\frac{N}{n}) \in \omega(\Wmax)$ unranked agents or with another waiting leader, w.h.p.
Therefore, it either stopped waiting within $O(n^2)$ steps or triggered a reset.

Thus, if no reset is triggered, the first two leaders both wake up in the first phase and assign $r_1$.
In every round after this, these two agents interact with probability $\frac{1}{n(n-1)}$.
Therefore, the probability that they have not interacted after $\tau$ steps is dominated by the geometric distribution with parameter $\nicefrac{1}{n^2}$.

\end{proof}

Note that this lemma implies that after $O(\log n)$ failed attempts to enter $\CSRPlus$, we succeed.
The time spent in these failed attempts is $O(n^2\log n)$.

\subsection{Proof of \cref{lemma:convergence_good_case}}
\label{sec:proof_good_case}

Recall that $\CL$ is the set of all configurations where all agents have unique rank.
\begin{lemma}
Let $c$ be a sufficiently large constant, and assume $\vX_t \in \CSRPlus$.
Then, \emph{unless we trigger a reset}, there is a $\tau \in [c \cdot n^2 \log n]$ such that  $\vX_{t+\tau} \in \CL$.
\end{lemma}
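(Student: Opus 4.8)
The plan is to mirror the structure of the non-self-stabilizing analysis in \cref{sec:unstable_ranking_analysis}, but carried out conditionally on the event that no reset is triggered, and with the added burden of controlling the liveness counters. First I would unfold what membership in $\CSRPlus$ gives us: there is a unique agent $\ell$ with $\LeaderBit{\ell}=1$, every agent has $\LeaderDone{\cdot}=1$ and $\LECount{\cdot} \geq (7/8)\Lmax$, so the leader-election phase is effectively over and the main protocol $\Ranking+$ is about to run with a genuine single leader. I would then argue that within $O(n\log n)$ interactions, w.h.p., we reach a configuration that (as long as no reset occurs) lies in $\CWFWait{1}=\CSR$ of the non-self-stabilizing protocol, augmented with all the liveness counters at their maximum $\Lmax$: the leader $\ell$ transitions to a waiting agent with $\WaitCount{\ell}=\ceil{\cwait\log n}$ and $\LivenessCount{\ell}=\Lmax$, the one-way broadcast turns every other agent into a phase agent with phase $1$ and $\LivenessCount{\cdot}=\Lmax$, and no agent ran out of its $\LECount$ budget in the meantime (a Chernoff bound on the at most $O(n\log n)$ interactions versus the $\Theta(\log n)$-sized budget).

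From such a configuration, conditioned on no reset, the evolution of the $\WaitCount$, $\Phase$, and $\Rank$ variables is exactly the evolution of $\Ranking$: the only difference in $\Ranking+$ is the liveness machinery and the coin gating in lines \ref{ln:coin-0-s}--\ref{ln:coin-1-e}, but whenever $\Coin{v}=1$ the base protocol $\UnsafeSilentProt$ runs unchanged, and the coin is $1$ with probability close to $1/2$ after warm-up. So I would reuse \cref{lem:unsafe_waiting_subphase,lem:unsafe_ranking_subphase} almost verbatim, paying only a constant-factor slowdown for the coin gating (each productive interaction ``succeeds'' with probability $\approx 1/2$, multiplying the geometric/negative-binomial rates by a constant), to conclude that in $O(n^2\log n)$ interactions the protocol reaches $\CL$, w.h.p., \emph{provided} no reset has been triggered along the way. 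The induction over the $\ceil{\log_2 n}$ phases goes through exactly as in the proof of \cref{lemmacorrectness_non_ss}.

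The main obstacle, and the only genuinely new work, is showing that w.h.p.\ no liveness counter $\LivenessCount{v}$ ever reaches $0$ during this run (which would inadvertently fire $\TriggerReset$ in line \ref{ln:alive-count-s}); the two direct-detection triggers in lines \ref{ln:same-rank}--\ref{ln:two-waiting} cannot fire because, by the structural invariants of $\CWFWait{k}$ and $\CWFRank{k}$, ranks stay distinct and there is always exactly one waiting agent. For the liveness counters I would split into the two regimes. While the leader is \emph{waiting} in phase $k$, there are $f_k-1$ phase agents, and each time $\ell$ meets a phase agent $v$ it resets $\LivenessCount{v}=\ceil{\cliveness\log n}$ (via line \ref{ln:coin-0-e}, since $\WaitCount{\ell}\neq\bot$ satisfies the condition in line \ref{ln:productive}); so no phase agent can have its counter decremented $\ceil{\cliveness\log n}$ times between consecutive ``refreshes'' from $\ell$ unless an unusually long stretch of interactions avoids $\ell$, which a negative-binomial tail bound rules out for $\cliveness$ large enough relative to $\cwait$. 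While the leader is an \emph{unaware leader} in phase $k$ (with $f_k-i$ remaining phase agents), each such agent $v$ gets $\LivenessCount{v}=\Lmax$ refreshed whenever $\ell$ meets it with $\Coin{v}=0$ (line \ref{ln:coin-0-e}) and gets ranked (and hence leaves the population of counters) when $\ell$ meets it with $\Coin{v}=1$; in both cases $\ell$ touching $v$ is good for $v$, so the dangerous event is again a long interaction stretch in which $v$ is never picked together with $\ell$, nor with another unranked agent carrying a high counter, nor with the top-ranked agents $n-1,n$. The subtlety flagged in the proof sketch of \cref{lemma:convergence_bad_case} is that the phase-agent subpopulation shrinks and can become $o(\log n)$, so the synthetic coin among it is not ``warmed up'' in the usual sense and standard coin-concentration lemmas do not apply directly; I would handle this with the same tailored coin analysis (a \cref{lem:coin_heads}-type statement) used there, combined with a one-way-epidemic tail bound (\cref{lem:subset_broadcast_upper_tail}) to ensure phase-advance broadcasts complete before any straggler's counter expires, and then take a union bound over the $\ceil{\log_2 n}$ phases and the $O(n)$ agents. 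Putting the pieces together with a union bound over all the w.h.p.\ events — reaching $\CSR$, the per-phase waiting and ranking bounds, and ``no counter hits zero'' — yields the claimed $\tau\in[c\cdot n^2\log n]$ with $\vX_{t+\tau}\in\CL$ unless a reset is triggered, and reading the total interaction count exactly as in \cref{lemmacorrectness_non_ss} (the geometric series $\sum 2^k < 4n$ absorbs the phase-dependent waiting times) gives the $O(n^2\log n)$ bound.
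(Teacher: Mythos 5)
Your core argument matches the paper's: from $\CSRPlus$ the unique leader transitions to waiting, the one-way epidemic converts all agents to phase agents before any agent exhausts its $\LECount$ budget of $(7/8)\Lmax$, and then \cref{lemmacorrectness_non_ss} applies with a constant-factor slowdown for the coin gating. Note, however, that the lemma is stated conditionally on no reset being triggered, so the lengthy liveness-counter analysis you describe as ``the main obstacle'' is not needed for this statement — the paper defers exactly that work to the separate ``Rule 1--3'' arguments that follow this sub-lemma in establishing \cref{lemma:convergence_good_case}.
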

\begin{proof}
As we have exactly one leader and no more leaders will be elected, this leader will eventually start waiting.
This starts an epidemic that turns all agents into phase-counting agents.
Recall that these agents \emph{forget} their value of $\LECount{v}$.
Since $\LECount{v} \geq (\nicefrac{7}{8})\cdot\Lmax$ for all agents, the epidemic will reach them before they perform $(\nicefrac{7}{8})\cdot\Lmax$ interactions, w.h.p., for a large enough choice of $\Lmax$.
Thus, eventually, one agent is waiting, and all agents are phase-counting.
Given that there is no reset in the next $O(n^2 \log n)$ interactions, the lemma follows from \cref{lemmacorrectness_non_ss}.
\end{proof}

Note that this lemma only holds under the premise that during execution, we never trigger a reset.
So, in the remainder of the proof, we show that we do not reset w.h.p.
%We are in a good case if we start from a configuration which we just triggered a reset.
%For the good case, we can show that protocol behaves in a predictable manner that leads to valid ranking, w.h.p.
We must consider all rules that could potentially trigger a reset when starting in an arbitrary configuration in $\CSRPlus$ and argue why these rules are \emph{not} triggered with sufficient probability.
Recall that a reset can \emph{either} is triggered through two agents with the same label, an emergency reset issued by a waiting agent \emph{or} by a phase counting agent whose liveness counter expires.
We look at these three rules separately.

\paragraph{Rule 1: Reset Through Duplicate Labels}
First, we note that the protocol, if started from $\CSRPlus$ will never assign a label twice, w.h.p.
Therefore, w.h.p., no agent will trigger a reset because of an interaction with the same label.

\paragraph{Rule 2: Reset Through Waiting Agent}
A waiting or phase counting agent triggers a reset if it interacts with an agent of label $n$ or $n-1$ more than $\Lmax$ times before being labeled itself.
This case can be ruled out through an appropriate choice of the tunable variable $\Lmax$.
By \cref{lemmacorrectness_non_ss} (and waiting for the right coins slows it down by a constant factor) the protocol stabilizes in $O(\Wmax\cdot n^2)$ interactions, w.h.p.
Thus, after $O(\Wmax\cdot n^2)$ interactions, every agent has a label unless a reset is triggered.
We first show the following claim.
\begin{claim}
There is a constant $c_1 > 0$ that does not depend on $\Lmax$, s.t., every pair of agents interacts at most $c_1 \cdot \Wmax$ times before all agents are labeled, w.h.p.
\end{claim}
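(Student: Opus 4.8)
The plan is to reduce this claim to a simple concentration argument layered on top of the stabilization-time bound we just invoked. Recall that, unless a reset is triggered, starting from a configuration in $\CSRPlus$ the protocol labels all agents within $O(\Wmax \cdot n^2)$ interactions w.h.p.; write $c_0 \cdot \Wmax \cdot n^2$ for this bound and set $M = c_0 \cdot \Wmax \cdot n^2$. I would then show that, w.h.p., among the first $M$ interactions no unordered pair of agents is selected more than $c_1 \cdot \Wmax$ times, for a constant $c_1$ fixed below. Combining this with the high-probability event that either a reset occurs within the first $M$ interactions or all agents become labeled within them, we obtain that w.h.p. every pair interacts at most $c_1 \cdot \Wmax$ times before all agents are labeled, which is the claim.

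First I would fix a pair $\{u,v\}$ and let $X_{u,v}$ be the number of the first $M$ interactions that involve exactly $u$ and $v$. Since each interaction independently chooses an ordered pair uniformly among the $n(n-1)$ possibilities, $X_{u,v} \sim \Bin\bc{M,\tfrac{2}{n(n-1)}}$, and hence $\Exp{X_{u,v}} = \tfrac{2 c_0 \Wmax n^2}{n(n-1)} \le 4 c_0 \Wmax$ for $n \ge 2$. Because $\Wmax = \Theta(\log n)$, this mean is $\Theta(\log n)$, so a Chernoff bound lets me choose a constant $c_1$ — depending only on $c_0$ and $\cwait$, and in particular \emph{not} on $\Lmax$ — so that $\Prob{X_{u,v} > c_1 \Wmax} \le n^{-3}$. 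A union bound over the at most $\binom{n}{2} < n^2$ pairs then bounds the probability that some pair is selected more than $c_1 \Wmax$ times within the first $M$ interactions by $1/n$, and together with the stabilization bound this yields the claim with failure probability $O(1/n)$.

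I do not expect a genuine obstacle here; the one point that needs care is the order of quantifiers: $c_1$ must be fixed after $c_0$ (and $\cwait$) but before $\Lmax$, so that the surrounding analysis of Rule~2 may subsequently take $\Lmax$ (equivalently $\cliveness$) large enough that $\Lmax > 2 c_1 \Wmax$. With such a choice, every still-unlabeled agent can interact with the at most two agents labeled $n-1$ or $n$ fewer than $\Lmax$ times before being labeled itself, so the liveness counter of \cref{alg:ranking_plus} never reaches $0$ and Rule~2 is not triggered.
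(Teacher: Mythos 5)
Your proposal is correct and follows essentially the same route as the paper: condition on the w.h.p.\ event that all agents are labeled within $O(\Wmax \cdot n^2)$ interactions, apply a Chernoff bound to the binomial count of interactions of each fixed pair over that window (whose mean is $\Theta(\Wmax) = \Theta(\log n)$), and union-bound over the $O(n^2)$ pairs. Your explicit remark on the quantifier order — fixing $c_1$ independently of $\Lmax$ so that $\Lmax > 2 c_1 \Wmax$ can be chosen afterwards — matches how the claim is used in the paper's Rule~2 argument.
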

\begin{proof}
Condition on the event that after $O(\Wmax\cdot n^2)$ interactions, every agent has a label. This happens w.h.p.
Recall that the probability of two agents interacting in a given step is $\nicefrac{1}{n(n-1)}$ as we use the uniform scheduler.
Thus, for a given pair $(v,w)$ of agents, the expected number of interactions between $v$ and $w$ within $O(\Wmax\cdot n^2)$ steps is $O(\Wmax)$.
As interactions are independent and can be modeled as binary random variables, we can apply the Chernoff bound \cite{DBLP:journals/ipl/HagerupR90}.
For each pair $v,w$, we can use this well-known bound to show that the probability that $v$ and $w$ interact more than $c_1\cdot\Wmax$ times (where $c_1$ is large constant that depends on the constants hidden in $O(\Wmax)$) during this time is at most $\nicefrac{1}{n^{c_2}}$ where $c_2$ depends on $c_1$.
A union bound over all $n(n-1)$ pairs of agents yields that the probability of \emph{any} pair interacting more than $c_1\cdot\Wmax$ times is less than $\nicefrac{1}{n^{c_2-2}}$.
Thus, an appropriate choice of $c_1$ yields the lemma.
\end{proof}
From this, we can conclude that any agent $v \in V$ interacts with any set of two agents $w_1,w_2 \in V$ at most $2\cdot c_1\cdot\Wmax$ times w.h.p.
By choosing $\Lmax > 2\cdot c_1\cdot\Wmax$, we conclude that, w.h.p., no waiting agent interacts $\Lmax$ times with any two specific agents.
This includes, in particular, the agents with labels $n$ and $n-1$ at any point during the execution.
So, no reset is triggered, w.h.p.

\paragraph{Rule 3: Reset Through Liveness Checker}
Finally, we argue why a phase counting agent will not trigger a reset, w.h.p.
This (arguably) requires the most intricate proof.
To this end, consider a configuration $\vX_t$ and assume w.l.o.g.\ that we have $k$ phase counting agents.
Recall that these agents do \emph{not} have a label and count the phase.
In the following, we denote the set of phase counting agents in configuration $\vX_{t'}$ as $P_{t'} \subset V$.
Note that $P_{t} \subseteq P_{t+1}$ (unless we trigger a reset) and
the number of phase counting agents is monotonically decreasing.
Furthermore, we use $\ell$ to denote the agent that assigns the next label when interacting with an agent from $P$ or is waiting.
Recall from the analysis that there is always \emph{exactly} one such agent in every configuration, w.h.p.
Finally, we will divide time into \emph{phases} of $\tau = 4 \cdot \frac{n^2}{k}$ interactions, s.t., phase $i$ includes all configurations from $\vX_{t+i\tau}$ to $\vX_{t+(i+1)\tau-1}$.
We call $T = c\cdot4 \log n$ continuous phases an \emph{epoch}.
Here $c > 1$ is a tunable constant we will fix in the analysis.
We will show the following lemma.
\begin{lemma}
\label{lemma:epoch_invariant}
Consider an epoch $\vX_t, \ldots, \vX_{t+T\cdot\tau}$ and assume for all agents it holds $ \geq (\nicefrac{7}{8})\cdot\Lmax$ in $\vX_t$.
Then, w.h.p., it holds that
\begin{enumerate}
    \item For all $t' \in [0,T\cdot\tau]$ and all $v \in P_{t + t'}$, it holds $\LiveCount{v} > 0$.
    \item For all $v \in P_{t + T\cdot\tau}$, it holds $\LiveCount{v} \geq (\nicefrac{7}{8})\cdot\Lmax$.
\end{enumerate}
\end{lemma}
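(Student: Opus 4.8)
The plan is to get both properties out of a single one-sided estimate for how a phase agent's liveness counter evolves, plus — for Property~2 only — a synthetic-coin bound and a one-way-epidemic argument. Write $\tau=4n^2/k$ and $T=4c\log n$ for the epoch parameters, recall $\Lmax=\cliveness\log n$, and recall that in the regime of \cref{def:csrplus} there is a unique agent $\ell$ (either waiting or the unaware leader) with $\ell\notin P_{t'}$ for all $t'$, that $P_{t'}$ only shrinks, and that for \cref{lemma:convergence_good_case} it is enough to rule out a reset triggered by some $\LivenessCount{v}$ reaching~$0$. I would fix the epoch constant $c$ first and then take $\cliveness$ large relative to~$c$.

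Step~1 (the structural bound, which already gives Property~1). While $v$ is a phase agent, $\LivenessCount{v}$ can only be reset to $\Lmax$ (when $v$ meets $\ell$ with $\Coin{v}=0$) or decrease by at most~$1$ in an interaction of $v$ with another phase agent (max-minus-one) or with an agent of rank $n-1$ or $n$; every other interaction leaves it unchanged. Writing $a_{t'}(v)$ for its value at time $t'$ and $D_v(t_1,t_2)$ for the number of such decrementing interactions of $v$ in $[t_1,t_2]$, this gives $a_{t_2}(v)\ge a_{t_1}(v)-D_v(t_1,t_2)$. Since at each step a fixed phase agent decrements with probability at most $(2(k-1)+4)/(n(n-1))$, over the $T\tau=16cn^2\log n/k$ steps of the epoch $D_v(t,t+T\tau)$ is stochastically dominated by a binomial of mean $O(c\log n)$; a Chernoff bound and a union bound over the $\le n$ phase agents then show that, w.h.p., $D_v(t,t')\le\tfrac1{16}\Lmax$ for all $v$ and all $t'\le t+T\tau$ once $\cliveness$ is large enough. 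With the hypothesis $a_t(v)\ge\tfrac78\Lmax$ this yields Property~1 immediately, and more: $a_{t+T\tau}(v)\ge a_{r}(v)-\tfrac1{16}\Lmax$ for any earlier time~$r$, so for Property~2 it suffices to find, for each surviving phase agent $v$, one time $r_v$ in the epoch with $a_{r_v}(v)\ge\tfrac{15}{16}\Lmax$.

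Step~2 (producing such times via resets and an epidemic). Using \cref{lem:coin_heads} in its ``tails'' form together with the fact that $\ell$ meets a phase agent $\Theta(1)$ times per phase, I would show that $\ell$ performs at least one reset ($\LivenessCount{v}\gets\Lmax$) per phase with constant probability, hence $\Omega(c\log n)$ resets over the epoch, w.h.p. Starting from such a reset one grows a one-way epidemic along phase--phase interactions; by \cref{lem:subset_broadcast_upper_tail} it reaches every currently-present phase agent within $O(n^2\log n/m)$ interactions, where $m$ is the current number of phase agents. Telescoping $a_{s_i}(v_i)\ge a_{s_i-1}(v_{i-1})-1$ along the infection chain $v_0,\dots,v_m=v$ at increasing infection times $s_0\le\dots\le s_m$, and bounding the decay contributions on the disjoint sub-windows $[s_i,s_{i+1})$ (whose total is again dominated by a binomial of mean $O(c\log n)$, since the per-step decrement probability of the current chain agent is $O(k/n^2)$), I get $a_{s_m}(v)\ge\Lmax-\tfrac1{16}\Lmax=\tfrac{15}{16}\Lmax$ with $r_v:=s_m$, w.h.p.\ and for all $v$ simultaneously; this closes Property~2 provided the epidemic really reaches every survivor inside the epoch.

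The main obstacle is exactly this last proviso, and it is where the shrinking subpopulation bites. When $k$ is large the number of phase agents stays $\ge k/2$ throughout the epoch (only $O(c\log n)$ of them get ranked per epoch), so the epidemic completes well within $\tfrac14T\tau$ once $c$ is a large enough absolute constant — this is where $c$ is fixed. When $k$ is a constant, $\ell$ meets and resets each phase agent $\Omega(\log n)$ times during the (then very long) epoch, so it refreshes them directly and no epidemic is needed. The genuinely delicate case is intermediate $k$, where neither argument is clean on its own; there one must exploit that during each waiting period of $\ell$ (right after a block of ranks has been exhausted) the set of phase agents is frozen and $\ell$ heals them while its wait counter — of length $\Theta(\cwait\log n)$ coin-$1$ meetings, the same scale as the epidemic time on a subpopulation of that size — counts down, so that every survivor is refreshed before the next block of ranks is handed out. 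Running the synthetic-coin estimate and the one-way-epidemic estimate simultaneously in this tiny, shrinking subpopulation, and carrying the decay bookkeeping along the infection chains so that the accumulated loss never exceeds $\tfrac18\Lmax$, is the technical heart of the proof; the remaining bookkeeping (reset agents that are ranked shortly afterwards, the off-by-constants between $\tfrac78$ and $\tfrac{15}{16}$, etc.) is routine.
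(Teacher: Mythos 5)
Your overall architecture matches the paper's: Property~2 cannot follow from a one-sided drift bound alone, so you refresh counters through the leader's resets and propagate the refresh through the shrinking subpopulation of phase agents, charging the accumulated loss to the length of the infection chain. Your Step~1 is a nice self-contained shortcut for Property~1 (the paper folds both properties into one argument), and your treatment of the coverage issue for the epidemic (population stays $\geq k/2$, else the leader freezes it by waiting) is the same device the paper uses.

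There is, however, one genuine gap, and it sits exactly where you place the ``technical heart.'' Your claim that the total loss along the infection chain $v_0,\dots,v_m=v$ is ``dominated by a binomial of mean $O(c\log n)$, since the per-step decrement probability of the current chain agent is $O(k/n^2)$'' is not justified as stated: which agent is ``the current chain agent'' at time $t'$ is determined by interactions \emph{after} $t'$ (the chain to $v$ is only identified retrospectively, once $v$ is infected), so you cannot condition on the past and read off a per-step success probability; no supermartingale or binomial domination follows from this reasoning. Note also that the chain length $m$ can a priori be as large as $k$, so bounding the number of hops is itself part of what must be proved. The paper resolves this with an explicit union bound over all \emph{witness sequences}: a chain of length $l$ is a specific sequence of $l$ (agent, direction, time) triples, each realized with probability at most $1/(n(n-1))$ per step independently of the rest, and the number of such sequences of length $l$ is at most $\binom{k}{l}\binom{T'\tau}{l}2^l \leq (n(n-1)/3)^l$; summing over $l \geq 200T'$ kills all long chains w.h.p., giving $l_b \leq 200T'$ and hence $\LiveCount{v} \geq \Lmax - 200T'$ for every agent reached by a broadcast. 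You would need to supply this counting argument (or an equivalent non-adaptive bound) to close your Step~2; the rest of your proposal, including the constant bookkeeping between $\tfrac{7}{8}$ and $\tfrac{15}{16}$ and the choice of $\cliveness$ relative to the epoch constant, goes through.
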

Informally, this lemma states that during an epoch, no reset is triggered, and given that all agents have a high counter value in the beginning, they have a high counter value at the end.
Let $\mathcal{E}_i$ be the event that no reset is triggered in the $i$th epoch \emph{and} for all $v \in P$ at the end of the epoch, it holds $\LiveCount{v} \geq (\nicefrac{7}{8})\cdot\Lmax$.
Note that each epoch is of length at least $O(n \log n)$.
Therefore, if the event $\mathcal{E}_i$ holds for $\eta \in O(n)$ consecutive epochs, we do not trigger a reset within $O(n^2 \log n)$ interactions.
Thus, we will show that for some $c > 1$ that
\begin{align}
\label{eqn:epoch_equation}
    \Prob{\bigcap_{i=1}^\eta \mathcal{E}_i} \geq 1 - n^{-c}.
\end{align}
Let $\mathcal{P}_t$ be the event that for all $v \in P_t$, it holds $\LiveCount{v} \geq (\nicefrac{7}{8})\cdot\Lmax$
Furthermore, $S_i = (t,k)$ is the event that the $i$th starts in step $t$ with $k$ phase counting agents.
By \cref{lemma:epoch_invariant}, for \emph{all} possible choices of $t$ and $k$, it holds for some universal $c' > 1$ that
\begin{align*}
     \Prob{\mathcal{E}_i \mid S_i = (t,k) \cap \mathcal{P}_t } \geq 1 - {n^{-c'}}.
\end{align*}
In particular, the lemma holds conditioned on everything else that happened before step $t$ as only $\mathcal{P}_t$ and $S_i$ are relevant for the lemma.
Let now $\mathcal{S}_i$ be the set of all possible realizations of $S_i$.
Then, using the chain rule of conditional probability and the law of total probability, we get
\begin{align*}
\Prob{\bigcap_{i=1}^\eta \mathcal{E}_i} &= \prod_{i=1}^\eta \Prob{\mathcal{E}_i \mid \bigcap_{j=1}^{i-1} \mathcal{E}_j}\\
&= \prod_{i=1}^\eta \sum_{(t,k) \in \mathcal{S}_i} \Prob{S_i = (t,k) \mid \bigcap_{j=1}^{i-1} \mathcal{E}_j} \Prob{\mathcal{E}_i \mid S_i \bigcap_{j=1}^{i-1} \mathcal{E}_j} \\
&= \prod_{i=1}^\eta \sum_{(t,k) \in \mathcal{S}_i} \Prob{S_i = (t,k) \mid \bigcap_{j=1}^{i-1} \mathcal{E}_j} \Prob{\mathcal{E}_i \mid S_i = (t,k) \cap \mathcal{P}_t} \\
&\geq \prod_{i=1}^\eta \sum_{(t,k) \in \mathcal{S}_i} \Prob{S_i = (t,k) \mid \bigcap_{j=1}^{i-1} \mathcal{E}_j} (1-n^{-c'}) \\
&\geq \prod_{i=1}^\eta (1-n^{-c'}) \prod_{i=1}^\eta  \sum_{(t,k) \in \mathcal{S}_i} \Prob{S_i = (t,k) \mid \bigcap_{j=1}^{i-1} \mathcal{E}_j}  \\
&\geq (1-n^{-c'})^\eta \prod_{i=1}^\eta  \sum_{(t,k) \in \mathcal{S}_i} \Prob{S_i = (t,k) \mid \bigcap_{j=1}^{i-1} \mathcal{E}_j}  \\
&\geq (1-n^{-c'})^\eta \prod_{i=1}^\eta  1 \geq (1-n^{-{c'-2}}).
\end{align*}
Thus, inequality \eqref{eqn:epoch_equation} holds for $c = c' + 2$ and the lemma follows.

\begin{proof}[Proof of \cref{lemma:epoch_invariant}]
Before we start with our main argument, let us quickly recall the behavior of these agents.
If they interact with the labeling agent $\ell$, they either get labeled (if their coin is $0$ ) or reset their liveness counter to $\Lmax$ (if their coin is $1$) .
If $\ell$ is waiting, they always reset their liveness counter to $\Lmax$.
Furthermore, whenever two phase-counting agents interact, they agree on the maximum of their respective liveness counters and decrease them by one.
We suppose that the agents perform a slightly different protocol for this proof.
For the sake of argument, assume that agents have infinite memory and act like agents in a message-passing system.
The adapted protocol works as follows:
when agent $\ell$ interacts with an agent $v \in P$ whose coin is $1$, it starts a broadcast $b$.
A broadcast $b$ is a message that contains $l_b$, a counter that stores how often it has been forwarded since its creation.
Initially, the counter is set to $l_b = 0$.
Whenever two agents $v,w \in P$ interact, they forward all broadcasts they know and increase their counters by $1$.
Assume we run both protocols simultaneously, letting the same agents interact in each step.
Call the resulting protocol, the \emph{coupled} protocol.
Then, the connection between broadcast protocol and our protocol is as follows.
\begin{claim}
    Consider a broadcast $b$ that has been forwarded $l_b$ times.
    In the coupled protocol, the liveness counter of an agent $v \in V$ that has received $b$ has a value of at least $\Lmax - l_b$ in the original protocol.
\end{claim}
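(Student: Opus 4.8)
The plan is to prove this claim as the single‑broadcast special case of an invariant that the coupled protocol maintains throughout an epoch, established by induction on the number of interactions. The invariant I would use is: at every time step before the protocol resets, whenever a phase‑counting agent $v$ holds a copy of a broadcast $b$ whose forward‑counter currently reads $l_b$, we have $\LiveCount{v} \geq \Lmax - l_b$ in the original protocol; the statement carries no content once $v$ leaves the set of phase agents (its liveness counter is then $\bot$).

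First I would pin down the coupled‑protocol bookkeeping so that it mirrors, step by step, the evolution of the liveness counters: a broadcast is born at an agent exactly when that agent's liveness counter is (re)set to $\Lmax$ in the original protocol, and the fresh copy carries $l_b = 0$; when two phase agents $v$ and $w$ interact, each of them adopts, for every broadcast at least one of them currently holds, the minimum of the two forward‑counters plus one, which mirrors the rule $\LiveCount{v},\LiveCount{w} \gets \max\{\LiveCount{v},\LiveCount{w}\} - 1$; and any original‑protocol event that decrements $\LiveCount{v}$ without such an interaction (notably $v$ meeting an agent ranked $n-1$ or $n$) increments by one every forward‑counter that $v$ holds. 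With this setup the base case is immediate: no broadcast exists at the start of an epoch, and at the instant a broadcast is created its holder's liveness counter equals $\Lmax = \Lmax - 0$.

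For the inductive step I would case over the interaction at the current step. If two phase agents $v, w$ interact: for a broadcast held by both, the new liveness $\max\{\LiveCount{v},\LiveCount{w}\} - 1 \geq \Lmax - \min\{l_b^{(v)},l_b^{(w)}\} - 1$ equals $\Lmax$ minus the new forward‑counter; for a broadcast held by only one of them, the recipient's new liveness is at least the holder's old liveness minus one, i.e.\ at least $\Lmax - (l_b + 1)$, again matching the incremented counter; broadcasts held by neither are untouched. If $v$'s liveness is reset to $\Lmax$ in this step, the invariant holds trivially for every broadcast $v$ carries. If $v$ meets an agent ranked $n-1$ or $n$, the matched increment of $v$'s forward‑counters preserves the inequality. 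Every remaining interaction leaves both the liveness counters and the relevant forward‑counters unchanged. Hence the invariant is preserved, and since it holds initially it holds throughout the epoch; reading it off for a single broadcast $b$ is precisely the claim.

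The step I expect to be the real obstacle is making the coupling airtight around the designated agent $\ell$: in the original protocol $\ell$ resets $\LiveCount{v}$ only on a \emph{productive} interaction in which $v$'s coin shows $0$, makes actual progress (assigning a rank, so that $v$ drops out of the phase agents, or decrementing a wait‑counter) when $v$'s coin shows $1$, so the broadcast‑creation convention of the coupled protocol has to be synchronized with exactly the liveness‑reset events — while also ensuring that a broadcast whose current holder is simultaneously labeled is handed off (or simply dropped) in a way that cannot falsify the invariant, which is legitimate precisely because the invariant is vacuous for agents that are no longer phase‑counting. Once these $\ell$‑related bookkeeping details and the minimum‑on‑receipt convention for duplicated broadcasts are nailed down, the inequalities in the inductive step are the one‑line computations above.
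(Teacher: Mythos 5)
Your proof takes essentially the same route as the paper's: both establish, by induction over the interaction sequence, the invariant that any phase agent holding a copy of a broadcast $b$ whose forward-counter reads $l_b$ has liveness counter at least $\Lmax - l_b$, with broadcast creation synchronized to exactly those interactions with $\ell$ that reset the liveness counter to $\Lmax$. If anything you are more careful than the paper's two-line induction, which does not explicitly address the decrement of $\LivenessCount{v}$ triggered by meeting an agent ranked $n-1$ or $n$, nor what happens to a broadcast whose holder becomes ranked — both of which you fold into the coupling's bookkeeping, where they belong.
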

\begin{proof}
This can be shown through a simple induction on the lifespan of a broadcast $b$.
\begin{enumerate}
    \item For the base case, recall broadcast $b$ is created by interacting with $\ell$.
    Suppose an agent $v$ interacts with agent $\ell$ and creates $b$.
 Its initial count is $l_b = 0$.
 This interaction also sets $v$'s counter to $\Lmax = \Lmax - l_b$.
  Thus, initially, our claim holds.
 \item For the step, suppose that agents $v$ and $w$ interact. W.l.o.g., let $v$ have the higher liveness counter. Then, $v$ must have a broadcast message $b_v$ with counter $l_{b_v}$ such that $\LiveCount{v} \geq \Lmax - l_{b_v}$.
 After the interaction, $v$ has decreased its counter by one and increased $l_{b_v}$ by $1$ and $w$ has the same value as $v$ and also knows $b_v$.
 Thus, the claim follows.
\end{enumerate}
This proves the claim.
\end{proof}
Therefore, we can use the broadcast time to bound the drift of the liveness counters.
We show the following claim.
\begin{claim}
Suppose a broadcast $b$ was received by an agent $v \in V$ within $T'$ phases.
For any choice $T' \geq c\log n$, it holds that
\begin{align}
    \Prob{l_b \leq 200T'} \geq 1-n^{-c}.
\end{align}
\end{claim}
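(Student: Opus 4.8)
The plan is to bound $l_b$ by the amount of \emph{aging} that the copy of $b$ reaching $v$ can accumulate plus the length of the forwarding chain that carried it, and to control both quantities by Chernoff bounds. The crucial structural fact is that the phase length $\tau = 4n^2/k$ is calibrated so that a fixed phase-counting agent has, in expectation, only a \emph{constant} number of interactions with other phase-counting agents per phase: the expected number of interactions between a fixed agent and the at most $k-1$ other phase agents during one phase is $\tau \cdot \frac{2(k-1)}{n(n-1)} \leq 8$. Hence, over $T' \geq c\log n$ phases this is a sum of independent indicators with mean at most $8T'$, so by a Chernoff bound (see \cite{DBLP:journals/ipl/HagerupR90}) it exceeds $16T'$ with probability at most $n^{-c-2}$ once the absolute constant is large enough; a union bound over all $\leq n$ agents shows that, with probability $1 - n^{-c}$, every phase-counting agent has at most $16T'$ phase-counting interactions within the given window of $T'$ phases.

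Given this, I would trace the copy of $b$ held by $v$ backwards along the temporal chain of forwardings $x_0 \to x_1 \to \dots \to x_L = v$ that produced it, where $x_0$ is the phase agent at which $\ell$ created $b$ (at time $t_0$) and $x_{i+1}$ receives $b$ from $x_i$ at a time $t_{i+1}$, with $t_0 < t_1 < \dots < t_L$. By the $\max$/$\min$ update rule of the coupled broadcast protocol, a holder's counter for $b$ rises by at most one in each interaction with another phase agent and never rises otherwise, so unrolling along this chain gives $l_b \leq L + \sum_{i=0}^{L-1} Y_i$, where $Y_i$ is the number of phase-counting interactions of $x_i$ in the interval $(t_i, t_{i+1}]$, and these intervals are pairwise disjoint. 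To bound $L$, observe that the set of phase agents holding $b$ spreads as a one-way epidemic among the $k$ phase agents, so the number of holders grows by a constant factor per phase; moreover a new holder attaches in the forwarding forest to a uniformly random current holder, so this forest is dominated by a random recursive tree, and its depth $L$ is $O(\log n)$ with probability $1 - n^{-c}$ by standard height bounds. Because the epidemic is this fast, for all but $O(\log n)$ of the path indices $i$ the interval $(t_i, t_{i+1}]$ falls while a constant fraction of phase agents is still unreached, so $x_i$ forwards within $O(\tau)$ interactions and $Y_i = O(1)$ with high probability; the remaining (late-stage) indices are only $O(\log n) \leq O(T')$ many, each contributing at most $16T'$ by the first paragraph. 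Combining, $l_b \leq O(\log n) + O(T') \leq 200T'$ for a suitable constant, using $T' \geq c\log n$, with the claimed failure probability after a final union bound.

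The main obstacle will be the interaction between aging of a counter (it rises by one per phase-counting interaction of its holder) and \emph{refreshing} (it can drop when its holder meets a fresher holder of $b$): because of refreshing, $l_b$ is not the static depth of $v$ in a fixed forwarding tree, and the argument must make precise that, along the temporal path actually realizing $v$'s counter, the aging contributions of the path's agents are charged to disjoint time intervals and that, thanks to the exponential speed of the epidemic, these contributions sum to $O(T')$ rather than $O(L \cdot T')$. A secondary technicality is that phase agents leave $P$ once they are assigned a rank, which shrinks the population the epidemic runs in; one must check that this only accelerates the spread and does not invalidate the domination by a random recursive tree. Everything else reduces to the Chernoff bound and the elementary phase-length computation above.
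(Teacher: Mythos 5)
Your overall intuition is right --- the calibration $k\tau = 4n(n-1)$ means the chain's current holder accumulates only $O(1)$ phase-agent interactions per phase in expectation, so the counter ``should'' drift by only $O(T')$ --- but the route you propose for turning this into a w.h.p.\ bound has a genuine gap, in two places. First, the temporal chain $x_0 \to \dots \to x_L = v$ that realizes the counter value of the copy $v$ ends up holding is \emph{not} the first-reception forest: under the max-minus-one merge rule an agent's relevant copy can be overwritten many times, and which merge ``wins'' at each step is determined adaptively by the random interaction sequence. So the domination by a random recursive tree (and hence $L = O(\log n)$) does not apply to the chain you actually need to control; you flag this as ``the main obstacle'' but offer no mechanism to resolve it. Second, even granting the decomposition $l_b \leq L + \sum_i Y_i$ and $L = O(\log n)$, the bound you sketch for the sum is $O(T') + O(\log n)\cdot 16T' = O(T'\log n)$, not $O(T')$: the $O(\log n)$ ``late-stage'' indices each get charged the full per-agent budget $16T'$. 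Since $T' = \Theta(\log n)$ and $\Lmax = \Theta(\log n)$ in the application, a drift of order $T'\log n = \Theta(\log^2 n)$ exceeds the liveness counter's entire range, so the conclusion $l_b \leq 200T'$ does not follow from what you wrote.

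The paper avoids both problems with a first-moment union bound over \emph{witness sequences}: any copy with counter $l$ certifies the occurrence of a specific chain of $l$ interactions (consecutive ones sharing an agent) inside the window of $T'\tau$ steps; each such chain occurs with probability at most $(n(n-1))^{-l}$, and there are at most $\binom{k}{l}\binom{T'\tau}{l}2^l \le (n(n-1)/3)^l$ of them once $l \ge 200T'$ --- precisely because $k\tau = 4n(n-1)$, the same calibration you computed. Summing over $l$ rules out all long chains simultaneously, which handles the adaptivity of ``which chain realizes $v$'s counter'' for free and requires no epidemic-speed analysis, no tree-depth bound, and no per-agent Chernoff bound. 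If you want to salvage your decomposition instead, you would need concentration of $\sum_i Y_i$ around its $O(T')$ mean for an \emph{adaptively chosen} holder sequence, which essentially forces you back to a union bound over all candidate sequences anyway.
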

\begin{proof}
Fix a broadcast $b$ started in configuration $\vX_{t_0}$ by some agent $v_1 \in V$ interacting with $\ell$ and is known by agent $v$ in step $t_0 + T' \cdot \tau$.
If $v$ receives the broadcast with value $l_b = l$, we can create a \emph{witness sequence} that \emph{proves} that $v$ has the broadcast $b$ and the current counter is $l$.
Seeking formalization, there must be the sequence $W = \left((v_1, w_1, t_1), \ldots, (v_l, w_l, t_l) \right)$ of time steps $t_1, \ldots, t_{l} \in [T'\cdot\tau]^l$ and pairs $(v_1,w_1), \ldots, (v_l,w_l) \in P^l$.
If $v_i = v_{i+1}$, agent $v_i$ has sent $b$ to $w_{i}$ in step $t_i$ (and thereby increases its counter).
Otherwise, If $ v_{i+1} = w_i$, agent $v_i$ has initially received $b$ from $w_{i+1}$ in step $t_i$ (and thereby increases its counter).
This distinction is necessary because the broadcast counter is increased on every interaction and not only in the step that it is received.
Note that if the broadcast counter is $l$, we can construct a witness sequence of length $l$.
We now bound the probability of such a sequence in general.
First, we are interested in the probability of a fixed witness sequence $W$.
Let $I(v_i,w_i,t_i)$ be the event that $v_i$ interacts with $w_i$ in step $t_i$.
Note that for any pair $v_i,w_i \in V$ and any step $t_i$, it holds regardless of everything that happened before step $t$ that
\begin{align*}
    \Prob{I(v_i, w_i, t_i)} \leq \frac{1}{n(n-1)}.
\end{align*}
Therefore, by the chain of conditional probability
\begin{align*}
\Prob{W} \Prob{\bigcap_{i = 1}^l I(v_i,w_i,t_{i}} = \prod_{i=1}^l \ProbCond{I(v_i,w_{i},t_{i}}{\bigcap_{j < i} I(v_j,v_{i},t_{j}}
\leq \prod_{i=1}^l \frac{1}{n(n-1)} = \left(\frac{1}{n(n-1)}\right)^l.
\end{align*}
Let $\mathcal{W}_l$ be set of all witness sequences of length $l$.
We want to count how many of these sequences can exist.
For a better approximation, recall that two consecutive members $(v_i,w_i,t_i)$ and $(v_{i+1},w_{i+1},t_{i+1})$ share an agent.
Thus, for each $i$, we can define a bit $f_i$ such that $f_i = 0$ if $v_{i+1} = w_i$ and $f_i=1$ if $v_{i=1} = v_i$.
Therefore, any member of a sequence can be expressed as $(v_i,f_i,t_i) \in P \times [1] \times [T'\tau]$ instead without losing information.
For $l \geq c\cdot 200T'$, it holds that
\begin{align*}
    |\mathcal{W}_l| &\leq \underbrace{\binom{k}{l}}_{\text{Choices of $v$}}\cdot\underbrace{\binom{T'\cdot\tau}{l}}_{\text{Choices of $t$}}\cdot\underbrace{\sum_{h=1}^l \binom{l-h}{h}}_{\text{Choices of $f$}}\\
    &\leq \binom{k}{l}\cdot\binom{T'\cdot\tau}{l}\cdot 2^l & \sum \binom{n-i}{i} = 2^n - 1\\
    &\leq \left(\frac{e \cdot k}{l}\right)^l \cdot \left(\frac{e \cdot T'\cdot\tau }{l}\right)^l \cdot 2^l & \binom{n}{i} \leq (\frac{en}{i})^i\\
    &= \left(\frac{2e^2 \cdot k \cdot T' \cdot \tau }{l}\right)^l = \left(\frac{8e^2 \cdot k \cdot T' \cdot n(n-1) }{k \cdot l}\right)^l & \tau = 4\cdot\frac{n(n-1)}{k}\\
    &= \left(\frac{8e^2 \cdot T' \cdot n(n-1) }{ l}\right)^l \leq \left(\frac{8e^2 \cdot T' \cdot n(n-1) }{ 200\cdot T'}\right)^l& l \geq 200T'\\
    &= \left(\frac{8e^2 \cdot n(n-1) }{ 200}\right)^l \leq \left(\frac{n(n-1) }{3}\right)^l & 8e^2 < 60.
\end{align*}
Finally, let $\mathcal{B}$ be the event that there is a witness sequence of length more than $c\cdot 200T'$.
We use the union bound to show
\begin{align*}
    \Prob{\mathcal{B}} &= \Prob{\bigcup_{l = c\cdot 200T'}^{T'\cdot\tau} \bigcup_{W \in \mathcal{W}_l} W } \leq \sum_{l = c\cdot 200T'}^{T'\cdot\tau} \sum_{W \in \mathcal{W}_l} \Prob{W}
    \leq  \sum_{l = c\cdot 200T'}^{T'\cdot\tau} \left(\frac{n(n-1) }{3}\right)^l\left(\frac{1}{n(n-1)}\right)^l\\
    &\leq  \sum_{l =c\cdot 200T'}^{T'\cdot\tau} \left(\frac{1}{3}\right)^l \leq T'\cdot\tau\cdot\left(\frac{1}{3}\right)^{c\cdot 200T'}
    \leq n^3 \cdot \left(\frac{1}{3}\right)^{c200 \log n} \leq \left(\frac{1}{n}\right)^{200c-3}.
\end{align*}
Therefore, a sequence of this length does not exist, w.h.p.
This proves the claim.
\end{proof}
To prove \cref{lemma:epoch_invariant}, we must show that $v$ receives a broadcast started in the epoch within the epoch.
It is easy to see that the broadcast time of broadcast message $b$ is bounded by a two-way epidemic, which is bounded by a one-way epidemic on the unranked agents.
However, we need to take into account that during the protocol's execution an unranked agent gets ranked and, therefore, stops spreading the broadcast.
Even worse, all agent holding a broadcast $b$ could get a rank and broadcast dies out completely.
As it turns, we can still show the following.
\begin{claim}
    Let $\mathcal{P} \subseteq P$ be a set of agents that are unranked at the end of the epoch.
    Then, w.h.p., all these agents received at least one broadcast within the epoch.
\end{claim}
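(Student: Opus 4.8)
The plan is to track, restricted to the surviving agents $\mathcal{P}$, the set of agents that currently hold a broadcast created within the current epoch; call such an agent \emph{informed}. Since discarding agents from consideration only slows the spread, it suffices to lower-bound the growth of the informed set in the process whose only sources of new information are (i) $\ell$ informing an agent of $\mathcal{P}$ whenever it resets that agent's liveness counter (that is, meets it while its coin shows $1$, or meets it while waiting), and (ii) an informed agent of $\mathcal{P}$ informing an uninformed one of $\mathcal{P}$ when they interact. Every agent of $\mathcal{P}$ is a phase agent for the whole epoch (the phase-agent count is non-increasing absent a reset, so an agent unranked at the end was unranked throughout), so the ground set of this process has a fixed size $m := |\mathcal{P}|$; I may additionally count, as free extra carriers during the initial part of the epoch, the phase agents not in $\mathcal{P}$ that have not yet been ranked, since this only helps.

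The core is the case $k \ge c_0 \log n$, where $k$ is the phase-agent count at the start of the epoch and $c_0$ is a suitably large constant. First I would bound the shrinkage: the epoch lasts $T\tau = 16c\log n \cdot n^2/k$ interactions, and the number of rankings in it equals the number of $\ell$--phase-agent interactions, which is stochastically dominated by $\Bin(T\tau, 2k/n^2)$ because the phase-agent count never exceeds $k$; a Chernoff bound \cite{DBLP:journals/ipl/HagerupR90} then gives at most $32c\log n$ rankings w.h.p., hence $m \ge k - 32c\log n \ge k/2$ once $c_0 \ge 64c$. So at least $k/2$ phase agents are present throughout. Next, seeding: $\ell$ meets an agent of $\mathcal{P}$ with probability at least $k/n^2$ per step, so within the first quarter of the epoch it does so $\Omega(\log n)$ times w.h.p.; combining this with the synthetic-coin guarantee that at least a quarter of all phase agents show heads during a constant fraction of the steps (cf.\ \cref{lem:coin_heads}), at least one such meeting occurs with an agent whose coin shows $1$ (or while $\ell$ is waiting), informing it. Finally, spreading: from that agent the informed set grows at least as fast as a one-way epidemic among the $\ge k/2$ surviving phase agents, which by \cref{lem:subset_broadcast_upper_tail} (with parameter $\ge k/2$) reaches all of $\mathcal{P}$ within a further $O(n^2\log n/k)$ steps w.h.p.; taking $c$ large enough this fits inside the remaining three quarters of the epoch, so every $v \in \mathcal{P}$ is informed before the epoch ends.

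The remaining regime $k < c_0\log n$ is where I expect the real difficulty: there $\mathcal{P}$ may be of constant size, the epoch is correspondingly long ($\Theta(n^2\log n/k)$), and the epidemic timing no longer matches the epoch length, so the clean argument above breaks down. Here I would argue more directly: with at most $k + O(1)$ unranked agents present at any time, an unranked agent $v$ decrements its liveness counter only on meeting one of these or one of the two agents ranked $n-1, n$, i.e.\ with probability $O(k/n^2)$ per step; meanwhile $\ell$ creates fresh broadcasts at rate $\Omega(k/n^2)$ and each propagates to all $\le k+O(1)$ unranked agents within $O(n^2\log n/k)$ steps (a one-way epidemic on a population of size $O(k)$), so $v$ is reset roughly a factor $\Theta(\cliveness)$ faster than it is decremented; for $\cliveness$ a large enough constant this keeps $v$'s counter above $(7/8)\Lmax$, with the needed concentration obtained by amortizing over $\Theta(\log n)$ consecutive epochs rather than a single one. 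The upshot is that the conclusion of \cref{lemma:epoch_invariant} holds in both cases: the claim as stated is exactly what the first case contributes, and in the second case the same conclusion is reached without routing through individual broadcasts. The genuine obstacle throughout — and the reason a case split is unavoidable — is that the phase-agent population both shrinks within an epoch and can itself be $o(\log n)$, so the standard one-way-epidemic tail bounds cannot be applied as a black box.
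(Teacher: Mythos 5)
Your argument for the regime $k \geq c_0\log n$ is essentially sound and is a legitimate alternative to the paper's: you control the shrinkage of the phase population by a Chernoff bound on the number of $\ell$--phase-agent interactions, plant a single seed in $\mathcal{P}$, and then invoke \cref{lem:subset_broadcast_upper_tail} as a black box on the fixed surviving set (forwarding between phase agents is coin-independent, so this is valid). Two quantitative details still need care there: the coin guarantee of \cref{lem:coin_heads} is only a constant-probability statement per window of $2n$ interactions and must be chained over the $\Omega(\log n)$ windows of the epoch before you may speak of a constant fraction of good steps w.h.p.; and ``a quarter of all phase agents show heads'' only transfers to agents of $\mathcal{P}$ after you have bounded $|P\setminus\mathcal{P}|$ by a sufficiently small fraction of $k$, which your own Chernoff estimate does supply.

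The genuine gap is the regime $k = O(\log n)$: there you do not prove the claim at all, but substitute a sketch of a different route to the conclusion of \cref{lemma:epoch_invariant} (comparing decrement and reset rates, amortized over $\Theta(\log n)$ epochs). This leaves the stated claim unestablished and would force a restructuring of the epoch-by-epoch chaining that surrounds it. Moreover, your premise that a case split is unavoidable is mistaken, and this is the idea you are missing. The paper tracks the set $I_t$ of agents knowing an intra-epoch broadcast and places $\ell$ in it by fiat; since $\ell$ is never ranked, it is a \emph{permanent source}, so in every good step the informed set grows with probability at least $\rho\cdot i(m-i)/(n(n-1))$ --- the two-way-epidemic term plus the fresh-seeding term $\rho(m-i)/(n(n-1))$, which keeps the rate bounded below even when $i=1$ and no agent of $\mathcal{P}$ is yet informed. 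The time to inform all of $\mathcal{P}$ is then dominated by a sum of independent geometrics $\sum_{i=1}^{m-1}\Geom\bigl(\tfrac{\rho}{4}\cdot\tfrac{i(k-i)}{n(n-1)}\bigr)$, whose w.h.p.\ upper bound $O(\rho^{-1}\tfrac{n^2}{k}\log n)$ holds for every $m \geq 1$ and matches the epoch length uniformly in $k$. Likewise, the lower bound $|\mathcal{P}| \geq k/2$ is obtained for all $k$ structurally --- once half the unranked agents are ranked, $\ell$ enters a waiting state that outlasts the epoch because it must meet unranked agents $\Wmax$ times --- rather than by concentration, which is the other place your approach needlessly restricts itself to large $k$.
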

\begin{proof}
First, note that we can assume that $|\mathcal{P}| \geq \nicefrac{k}{2}$, w.h.p.
Suppose the number of unranked agents halves at any point in the epoch.
Then, $\ell$ will start waiting (if it hasn't before).
In particular, $\ell$ will wait (and therefore not rank any agents) until it interacts with any unranked agent $\Wmax$ times.
Given that there are $k$ unranked agents initially, this takes at least $O(\frac{n(n-1)}{k} \cdot \Wmax)$ interaction on expectation and w.h.p.
For a large enough choice of $\Wmax$, this exceeds whatever time is remaining in the epoch.
Therefore, no further agents get ranked.

Going forward, we say that a time step $t$ is good if a constant fraction of $\rho\cdot|\mathcal{P}|$ agents in $\mathcal{P}$ have a coin that shows \emph{heads}.
According to \cref{lem:coin_tails}, each epoch has a constant fraction of good time steps unless there is a reset.
This follows because an episode of length $2n$ has a constant fraction of good steps with constant probability and the individual episodes are independent of one another as they assume an arbitrary distribution of the coins in the beginning of each episode.
Therefore, as the other two rules do not trigger a reset w.h.p., and during the epoch no counter gets to $0$, w.h.p., we can assume that a constant fraction of the steps is good, w.h.p.
In the remainder, we only focus on these good time steps.

Denote the event that at $\Theta(\frac{T\tau})$ time steps are good and at least $k/2$ agents stay unranked during the epoch as $\mathcal{E}$, i.e., they do not interact with $\ell$ if their coin is tails.
Let $T(\mathcal{P})$ be the number of good time steps until all agents are informed.
Let $I_t \subset \mathcal{P} \cup \{\ell\}$ be the set of agents that, in step $t$, know of a broadcast started after the beginning of the epoch.
Note that this set always contains $\ell$ as it starts all broadcasts.
Suppose that there are $i$ uninformed agents in $\mathcal{P}$ that know none of these broadcasts.
Then, the probability that the number of informed agents increases is in good step $t$ is
\begin{align*}
    \Prob{ |I_t| = |I_t|+1 \mid \{I_t = i\} \cap \{t \textit{ good }\} } &\geq \underbrace{\frac{(i-1)(m - i)}{n(n-1)}}_{\substack{ v \in I_t \cap \mathcal{P} \textit{ interacts } \\ \textit{ with }w \not\in I_t \cap \mathcal{P} }} + \underbrace{\frac{\rho(m - i)}{n(n-1)}}_{\textit{$\ell$ creates new broadcast}} \\
    &\geq \frac{\rho (i-1)(m - i) + \rho \cdot (m - i)}{n(n-1)}\\
    &\geq \rho\left(\frac{(i-1)(m - i) + (m - i)}{n(n-1)}\right)\\
    &= \rho\left(\frac{i(m - i)}{n(n-1)}\right).
\end{align*}
Furthermore, by the law of total probability it holds that
\begin{align*}
    \Prob{ |I_t| = |I_t|+1 \mid \{|I_t| = i\} \cap \{t \textit{ good }\} \cap \mathcal{E} }
    &\geq   \Prob{ |I_t| = |I_t|+1 \mid \{|I_t| = i\} \cap \{t \textit{ good }\} } - \Prob{\mathcal{E}}\\
    &\geq  \rho\left(\frac{i(m - i)}{n(n-1)}\right) - \frac{1}{n^c}
    \geq  \frac{\rho}{2}\left(\frac{i(m - i)}{n(n-1)}\right).
\end{align*}
As $\mathcal{P}$ has at least$k/2$ agents, the number of good steps required to inform all agents in $\mathcal{P}$ is stochastically dominated by
\begin{align*}
    T(\mathcal{P}) \mid \mathcal{E} \prec Y = \sum_{i \in [m-1]} X_i,\ \textup{with $X_i \sim \Geom\bc{\frac{\rho}{4} \cdot \frac{i(k-i)}{n(n-1)}}$ independent.}
\end{align*}
Therefore, we have $\Prob{T(\mathcal{P}) \geq c\cdot \rho^{-1} 24\frac{n^2}{k}\log n \mid \mathcal{E}} \leq n^{-c}$.
This means, we require $c\cdot \rho^{-1} 24\frac{n^2}{k}\log n$ good steps until all agents in $\mathcal{P}$ received the broadcast.
Thus, if we pick the length of our epoch larger than $c\cdot \rho^{-1} \cdot C \cdot \log n$ where $C$ is a large constant, it holds $c\cdot \rho^{-1} 24\frac{n^2}{k}\log n \leq \Theta({T\tau})$.
As we conditioned on $\Theta({T\tau})$ good steps in an epoch, all agents in $\mathcal{P}$ must receive a broadcast, w.h.p.
\end{proof}

Recall that we can choose $\Lmax$ as large as we want.
Thus, by choosing $\Lmax \geq 1600\cdot T$, we can combine all of our lemmas and get
\begin{align*}
    1 - n^{-c} &\leq
    \Prob{\exists b: l_b \leq c200T}
    = \Prob{\LiveCount{v} \leq \Lmax - c200T}\\
    &= \Prob{\LiveCount{v} \leq \Lmax - (\nicefrac{1}{8})\cdot \Lmax}
    = \Prob{\LiveCount{v} \leq (\nicefrac{7}{8})\cdot \Lmax} . \qedhere
\end{align*}
\end{proof}

\subsection{Auxiliary results: analysis of the phase agents' coin}

For a fixed $t$ where we assume that $\vX_t$ is a configuration in $\CM$,
    and $t' \geq t$, we
let $K_{t'}$ be the number of phase agents at time $t'$,
    and $H_{t'}$ be the number of phase agents whose coin shows $1$ at time $t'$.

\begin{lemma}\label{lem:coin_heads}
    Assume that $\vX_t$ is a configuration in $\CM$.
    Then there are constants $0 < p, c < 1$ such that
        with probability at least $p$,
            either a reset is triggered within the $2n$ rounds following $t$,
            or
            for at least a $c_1$ fraction of the next $2n$ rounds, $H_{t'} \geq K_{t'} / 4$---i.e., for $S_t = \{t' \in (t, t + 2n] \mid H_{t'} \geq K_{t'} / 4\}$ we have $|S_t| \geq c \cdot 2n$.
\end{lemma}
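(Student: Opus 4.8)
The plan is to dispose of the trivial case, then establish the bound one time step at a time before boosting it to a statement about a constant fraction of the window. If a reset is triggered within the $2n$ rounds after $t$, the lemma holds vacuously, so assume no reset occurs; then by \cref{obs:cmain_stays_unless_reset} every configuration $\vX_{t'}$ with $t' \in (t,t+2n]$ lies in $\CM$. Inspecting the transition function shows that within $\CM$ a phase agent can only become a ranked agent (never a waiting agent, and no new phase agents are created), so $K_{t'}$ is non-increasing and the phase agents at $t'$ form a subset of those at $t$. I would then restrict attention to the second half of the window, the $n$ time steps $t'\in(t+n,t+2n]$.

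The core claim is that $\Prob{H_{t'}\ge K_{t'}/4}\ge p_0$ for \emph{every} such $t'$, for a universal constant $p_0>0$. As long as $v$ is a phase agent, $\CoinT{t'}{v}=\CoinT{t}{v}\oplus(M_v\bmod 2)$, where $M_v$ is the number of coin-flips of $v$ in $(t,t']$; since $M_v$ is sandwiched between binomial variables $\Bin(t'-t,\Theta(1/n))$ whose mean lies in $[1,O(1)]$, the elementary identity $\Prob{\Bin(m,p)\text{ odd}}=(1-(1-2p)^m)/2$ gives $\Prob{\CoinT{t'}{v}=1}\ge q_0$ for a constant $q_0>1/4$ that is \emph{uniform} in the adversarial initial value $\CoinT{t}{v}$. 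Granting that this marginal survives conditioning on the set $P_{t'}$ of phase agents at $t'$ (the delicate point; see below), we get $\ExpCond{H_{t'}}{P_{t'}}\ge q_0|P_{t'}|$, and since always $H_{t'}\le K_{t'}$, the inequality $q_0K_{t'}\le\ExpCond{H_{t'}}{P_{t'}}\le K_{t'}\,\ProbCond{H_{t'}\ge K_{t'}/4}{P_{t'}}+K_{t'}/4$ forces $\ProbCond{H_{t'}\ge K_{t'}/4}{P_{t'}}\ge q_0-1/4=:p_0>0$, whatever the value of $|P_{t'}|$; notably no concentration of $H_{t'}$ is needed.

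Given the per-step bound, linearity of expectation yields $\Exp{|S_t\cap(t+n,t+2n]|}\ge p_0 n$ on the no-reset event, so $n-|S_t\cap(t+n,t+2n]|$ has expectation at most $(1-p_0)n$ and a reverse Markov inequality gives $\Prob{|S_t|\ge (p_0/2)\,n}\ge p_0/(2-p_0)$. Combining with the reset case and setting $c=p_0/4$ (so that $(p_0/2)n=c\cdot 2n$) and $p=p_0/(2-p_0)$, after absorbing constants, proves the lemma.

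The main obstacle is precisely the conditioning flagged above: a phase agent is removed exactly when it interacts with a leader-like agent while showing $1$, so conditioning on $v\in P_{t'}$ biases $v$ towards having shown $0$ at those interactions, and the dependence of $\CoinT{t'}{v}$ on $M_v$ is through a non-monotone parity, so negative association is not directly available either. I would resolve this by a case split on the speed of ranking. If only $O(1)$ agents are ranked over the window — which is the regime relevant to \cref{lem:few_bad_steps} and is forced when there are only $O(1)$ leader-like agents — then all but $O(1)$ phase agents are \emph{benign}, meeting in $(t,t']$ only other phase agents and ranked non-leader-like agents; for benign agents survival is automatic and conditioning on benignness perturbs the law of $M_v\bmod 2$ only negligibly, so the marginal $\Prob{\CoinT{t'}{v}=1}\ge q_0$ still holds and the $O(1)$ exceptional agents are harmless. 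In the complementary regime, where many leader-like agents drive fast ranking, $K_{t'}$ collapses to $0$ (or $O(1)$) over a constant fraction of the window, where $H_{t'}\ge K_{t'}/4$ is immediate. Turning the phrases "negligible" and "harmless" into rigorous estimates — above all, bounding how much the benignness conditioning shifts the parity of $M_v$ — is where the bulk of the technical work lies.
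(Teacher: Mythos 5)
Your route---a per-agent marginal bound on the coin via the parity of a $\Bin(t'-t,1/n)$ number of toggles, followed by an averaging (reverse-Markov) argument first over agents and then over rounds---is genuinely different from the paper's, which instead tracks the heads-fraction $R_{t'}=H_{t'}/K_{t'}$ and shows it drifts toward $1/2$ with per-round increment $(1-2R)/n$; the key observation there is that ranking a heads-agent changes $R$ by at least $-1/K_{t'}$, i.e.\ no worse than a coin toggle, so the survivorship bias is absorbed into the drift computation uniformly. You correctly identify that bias as the crux, but the dichotomy you propose to resolve it does not cover all configurations in $\CM$. The lemma is invoked (via \cref{lem:few_bad_steps} inside \cref{lemma:convergence_bad_case}) from an \emph{arbitrary} main configuration, where the number $L$ of rank-assigning agents can be $\Theta(n)$: for instance, ranks $1,\dots,\lfloor n/2\rfloor$ all present and every phase agent in phase $1$ makes every ranked agent a rank-assigner. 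Then over the $2n$-round window a constant fraction (about $1-e^{-2}$) of the phase agents meet a rank-assigner, so far more than $O(1)$ agents are non-benign and $\Theta(n)$ get ranked, yet $K_{t'}$ shrinks only by a constant factor and stays $\Theta(n)$. Neither branch of your case split applies; duplicate ranks are created in this scenario but take $\Theta(n^2)$ interactions to detect, so no reset rescues you inside the window, and note also that $K_{t'}=O(1)$ makes $H_{t'}\geq K_{t'}/4$ immediate only when $K_{t'}=0$.

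In that regime the bias is a constant-fraction effect against a constant margin, so it cannot be dismissed as negligible. A phase agent that met a rank-assigner at time $s$ and survived showed $\Coin{\cdot}=0$ at time $s$, and since it toggles its coin only $\Bin(\tau,1/n)$ times in the next $\tau$ rounds, it needs $\Omega(n)$ further rounds before its coin re-randomizes; agents last touched within $\epsilon n$ rounds of $t'$ show heads with probability at most roughly $\epsilon$, and with $L=\Theta(n)$ they form a constant fraction of the surviving phase population. A back-of-the-envelope estimate in the configuration above already pushes the average conditional heads-probability from your $q_0\approx 0.43$ down toward $0.3$, and your argument has no slack to spare beyond $q_0-1/4\approx 0.18$ before the reverse-Markov step returns nothing---and this is before accounting for adversarial initial coins at time $t$ or repeated touches. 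Making this rigorous would require exactly the kind of uniform accounting of ``removal of a heads-agent costs at most one toggle's worth of $R$'' that the paper's supermartingale argument provides, so as it stands the proposal has a genuine gap at its acknowledged weak point rather than a routine omission.
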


\begin{proof}
    For $\tau \geq 0$ consider the random variable
        \[R_{t+\tau} = \begin{cases}
            1,\quad&\textup{if for any time $t' \in (t, t+\tau]$,}\\
            &\textup{a reset was triggered at a time $t'$ or $K_{t'} = 0$,} \\
            H_t / K_t,\quad&\textup{otherwise.}
        \end{cases}\]
    We now consider the contribution to the expected change of $R$ by various events,
        noting that we want to bound this from below:\begin{itemize}
        \item In any case, if $R_{t+\tau}$ is $1$ due to the first branch of the definition, then $R_{t+\tau} = R_{t+\tau}$.
        \item If a reset is triggered or the last phase agent disappears at time $t+\tau+1$, then $R_{t+\tau+1} = 1 \geq \min\{1, R_{t+\tau+1} + 1/K_{t+\tau}\}$.
        \item If a phase agent $v$ with $\Coin v = 0$ is chosen as respondent at time $t+\tau$ (with probability $(K_{t+\tau} - H_{t+\tau})/n = (K_{t+\tau} - R_{t+\tau} K_{t+\tau}) / n$) and none of the earlier cases holds, then
        its coin is toggled and then
        $R_{t+\tau+1} = R_{t+\tau} + 1/K_{t+\tau}$.
        \item If a phase agent $v$ with $\Coin v = 1$ is chosen as respondent at time $t+\tau$ (with probability $H_{t+\tau}/n = R_{t+\tau} K_{t+\tau} / n$) and none of the earlier cases holds, then \begin{itemize}
            \item either its coin is toggled to tails and
        $R_{t+\tau+1} = R_{t+\tau} - 1/K_{t+\tau}$,
            \item or it is ranked and $R_{t+\tau+1} = (H_{t+\tau} - 1)/(K_{t+\tau} - 1) \geq R_{t+\tau} - 1/K_{t+\tau}$.
        \end{itemize}
    \end{itemize}
    Combining all of this, one can see that the additive $1/ K_{t+\tau}$ terms in the changes in $R$
        and $K_{t+\tau}$s in the denominators of the probabilities cancel
        to obtain
        \[\ExpCond{R_{t+\tau+1}}{R_{t+\tau}} \geq R_{t+\tau} + \frac{1-2R_{t+\tau}}{n} = 0.5 + (R_{t+\tau} - 0.5) - \frac{2(R_{t+\tau} - 0.5)}{n}.\]
    From this, induction over $\tau$ and using $R_{t+\tau} \geq 0$ gets us
        \[\ExpCond{0.5 - R_{t+\tau+1}}{R_t} \leq \bc{1-\frac{2}{n}} \cdot \ExpCond{0.5 - R_{t+\tau}}{R_t}
            \leq (0.5 - R_t) \cdot \bc{1-\frac{2}{n}}^\tau
            \leq \frac{1}{2} \cdot \bc{1-\frac{2}{n}}^\tau.\]
    For $\tau = 3n/2$, this yields
        \[\ExpCond{(0.5 - R_{t+3n/2})}{R_t} \leq \frac{1}{2} \cdot \exp\bc{-\frac{2\tau}{n}} = \frac{e^{-3}}{2} \leq \frac{1}{40},\]
        and hence applying Markov's inequality on $(1 - R_{t+3n/2})$ (which is $\geq 0$)
        \[\ProbCond{R_{t+3n/2} \geq 1/3}{R_t} = 1 - \Prob{1-R_{t+3n/2} \geq 2/3}{R_t} \geq 1 - \frac{0.525}{2/3} = 0.2125.\]

    Now condition on $R_{t+3n/2} \geq 1/3$.
    If this is because in the interval $(t, t+3n/2]$ a reset was triggered or there were no phase agents, we are already done.
    Otherwise, there is an $\Omega(1)$ probability that in the next $n/2$ rounds,
        at most $K_{t+3n/2}/12$ of the $H_{t+3n/2} \geq K_{t+3n/2} / 3$ phase agents $v$ having $\Coin v = 1$ at time $t+3n/2$
        are selected as respondents in the following $n/2$ rounds,
        and this is independent from the prior rounds.
    Hence, with at least that probability, from time $t+3n/2$ to time $t+2n$, at least a $\frac{1}{3} - \frac{1}{12}$ fraction of phase agents still has $\Coin{v} = 1$, as claimed.
\end{proof}

We need a similar result for the case where we want to show that no $\LivenessCount{v}$ reaches $0$, w.h.p., when the protocol is in a well-formed configuration for ranking.
There, we care about the number of rounds where there is at least a $1/4$ fraction of phase agents' coins showing tails (i.e., 0),
    so that we can ensure that some $\LivenessCount{v}$ is reset often enough.
Here, unlike above, a phase agent having its coin at $1$ being ranked doesn't work in our favor.
However, as long as we are in the good case, this will only happen with probability $\leq 1/n$ in any interaction because there is at most one unaware leader.

So for a time interval of size $2n$, there is an $\Omega(1)$ probability of the leader never being selected as initiator or responder, in which case no agent is ranked throughout the interval.

\begin{lemma}\label{lem:coin_tails}
    Assume that $\vX_t$ is a configuration in $\CM$.
    Then there are constants $0 < p, c < 1$ such that for $n$ sufficiently large,
        with probability at least $p$,
        either a reset is triggered within the $2n$ rounds following $t$, or there is more than one unaware leader in any of the $2n$ rounds following $t$,
        or for a $c_1$ fraction of the next $2n$ rounds, $H_{t'} \leq 3 K_{t'} / 4$---i.e., for $S_t = \{\tau \in (t, t + 2n] \mid H_t \leq 3 K_t / 4\}$ we have
        \(\Prob{|S_t| \geq c \cdot 2n} \geq p.\)
\end{lemma}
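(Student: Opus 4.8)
The plan is to run the argument of \cref{lem:coin_heads} with the coin values $0$ and $1$ interchanged, inserting one extra step to neutralise the fact that now a phase agent with $\Coin v = 1$ leaving the pool through ranking no longer pushes $R_{t'} = H_{t'}/K_{t'}$ toward $1/2$. The two extra escape branches in the statement are exactly what makes this possible. If there is no unaware leader at time $t$, or $K_t = 0$, the claim is trivial, so fix the (unique) unaware leader $\ell_0$ at time $t$ and let $A$ be the event that $\ell_0$ is never selected (as initiator or responder) during the $2n$ interactions after $t$. Since a uniform ordered pair contains a fixed agent with probability $2/n$, we get $\Prob{A} = (1-2/n)^{2n} \ge q$ for a constant $q>0$ and $n$ large. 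Put $A' = A \cap \{\text{no reset in } (t, t+2n]\} \cap \{\text{there is never more than one unaware leader in } (t, t+2n]\}$. Then $A \setminus A'$ is contained in the event that the conclusion holds (via its first or second branch), so it suffices to show $\Prob{\text{conclusion} \mid A'} \ge p'$ for a constant $p'>0$: this gives $\Prob{\text{conclusion}} \ge \Prob{A \setminus A'} + p' \Prob{A'} = \Prob{A} - (1-p')\Prob{A'} \ge p'\Prob{A} \ge p' q$, using $\Prob{A'} \le \Prob{A}$.

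On $A'$ nothing outside the coins and the liveness counters moves: $\ell_0$ is never selected, hence never finishes a phase and never becomes waiting, hence (since there is never a second unaware leader, and $K$ can only shrink through ranking by an unaware leader, which is forbidden on $A$) $\ell_0$ stays the unique unaware leader and $K_{t'} = K_t$ throughout; moreover the only coin writes to phase agents are the responder-toggles of the last line of \cref{alg:stable_ranking_full}, since the write $\Coin u \gets 0$ in \cref{alg:ranking_plus} only fires when the unaware leader completes a phase. Consequently, conditionally on $A'$, the number $H_{t'}$ of phase agents with $\Coin v = 1$ performs an Ehrenfest-type chain on $\{0,\dots,K_t\}$: each step the responder is uniform over the $n-1$ agents other than $\ell_0$, it is a phase agent with probability $K_t/(n-1)$, and then flips its coin, so $\ExpCond{H_{t'+1}-H_{t'}}{H_{t'}} = (K_t - 2H_{t'})/(n-1)$, i.e.\ $\ExpCond{R_{t'+1} - R_{t'}}{R_{t'}} = (1-2R_{t'})/(n-1)$. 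This is the recurrence of \cref{lem:coin_heads} up to the harmless replacement of $n$ by $n-1$, and here it is an exact identity (no ranking term), so the drift toward $1/2$ is exact rather than a lower bound.

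The rest is a replay of the proof of \cref{lem:coin_heads} with heads and tails swapped. The same induction gives that $\ExpCond{1/2 - R_{t+3n/2}}{R_t}$ has absolute value at most $\tfrac{1}{2}e^{-3} \le \tfrac{1}{40}$, hence $\Exp{R_{t+3n/2}} \le 0.525$, and Markov's inequality applied to $R_{t+3n/2} \ge 0$ yields $\Prob{R_{t+3n/2} < 2/3} \ge 0.2125$. Conditioning further on $R_{t+3n/2} < 2/3$, i.e.\ on there being at least $K_t/3$ phase agents with $\Coin v = 0$, the large-deviation argument for the Ehrenfest chain over the remaining $n/2$ rounds (the last paragraph of the proof of \cref{lem:coin_heads}, now counting tail-to-head flips) shows that with $\Omega(1)$ probability at most $K_t/12$ phase agents flip from $0$ to $1$ during $(t+3n/2, t+2n]$, so $R_{t'} \le 2/3 + 1/12 < 3/4$ throughout that interval, which is a $\tfrac{1}{4}$ fraction of all $2n$ rounds. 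Multiplying the two constant probabilities gives $p'$, hence $p$.

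The step I expect to be the main obstacle is making the second paragraph rigorous: verifying that on $A'$ the configuration is genuinely frozen except for coins and liveness counters (so the symmetric-chain reduction is exact), and, more delicately, that one may evaluate one-step drifts conditional on $A'$ — a future event — without the conditioning biasing the chain. The clean way around the latter is a coupling: on $A$, couple the real phase-agent coin process with an unconditional Ehrenfest chain on $K_t$ particles and stop both at the first time a reset or a second unaware leader occurs; before that time the two agree, so the deviation estimates transfer, and at that time the conclusion is already discharged by its first or second branch, so the coupling loses nothing.
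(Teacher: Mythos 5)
Your route genuinely differs from the paper's. The paper never conditions on the leader being idle; it folds the possibility of a coin-$1$ phase agent being ranked into the per-step drift, using only that \emph{at that step} there is at most one unaware leader, so that conditioned on the respondent being a coin-$1$ phase agent the initiator is that leader with probability at most $1/(n-1)$. This shifts the fixed point of the drift from $1/2$ to $\frac{n-1}{2n-3}=\frac12+o(1)$, and the rest proceeds as in \cref{lem:coin_heads}. Your idea --- condition on the event $A$ that the time-$t$ unaware leader is never selected, so that no ranking occurs, the population of phase agents is frozen, and the coin process is an exact Ehrenfest chain, with a stopping-time coupling to discharge the conditioning on ``no reset / no second leader'' --- is the one the paper's prose gestures at but does not carry out; the coupling device itself is sound, and where the reduction applies it is cleaner than the paper's.

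The gap is in the reduction. You dismiss the case ``no unaware leader at time $t$'' as trivial; it is not, and more generally immobilizing the \emph{time-$t$} leader $\ell_0$ does not prevent ranking during the window. Since $\vX_t$ is an arbitrary configuration in $\CM$, it may contain a waiting agent $w$ with small $\WaitCount{w}$; within the $2n$ rounds $w$ can reach $\WaitCount{w}=0$, take rank $1$, and thereby become an unaware leader (rank $1\le n\cdot2^{-k}$ for every admissible phase $k$). If no unaware leader existed at time $t$, or if $\ell_0$ has meanwhile lost the unaware-leader property because the phase agents incremented their phases past $\log_2(n/\Rank{\ell_0})$ --- which requires no interaction involving $\ell_0$ --- then $w$ is the \emph{unique} unaware leader: neither escape branch of the lemma fires, yet $w$ is not constrained by $A$ and can rank coin-$1$ phase agents. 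Hence on the event $A$, before any reset or second unaware leader occurs, the process need not be the frozen Ehrenfest chain you analyze, and $K_{t'}$ need not be constant. To close this you must either also control the emergence of new unaware leaders (e.g.\ show that two newly ranked rank-$1$ agents force the second escape branch, and separately immobilize the one surviving waiting agent --- which cannot be chosen in advance of running the process), or fall back on the paper's identity-agnostic per-step bound of $1/(n-1)$ for the residual ranking events. As written, the argument does not cover arbitrary configurations in $\CM$.
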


\begin{proof}
    For $\tau \geq 0$ consider the random variable
        \[R_{t+\tau} = \begin{cases}
            0,\quad&\textup{if for any time $t' \in (t, t+\tau]$, a reset was triggered at a time $t'$, or } \\
            &\textup{there was more than one unaware leader, or $K_{t'} = 0$,} \\
            H_t / K_t,\quad&\textup{otherwise.}
        \end{cases}\]
    We now consider the contribution to the expected change of $R$ by various events,
        noting that we want to bound this from above:\begin{itemize}
        \item In any case, if $R_{t+\tau}$ is $0$ due to the first branch of the definition, then $R_{t+\tau+1} = R_{t+\tau}$.
        \item If a reset is triggered or the last phase agent disappears at time $t+\tau+1$, then $R_{t+\tau+1} = 0 \leq \max\{0, R_{t+\tau} - 1/K_{t+\tau}\}$.
        \item If a phase agent $v$ with $\Coin v = 0$ is chosen as respondent at time $t+\tau$ (with probability $(K_{t+\tau} - H_{t+\tau})/n = (K_{t+\tau} - R_{t+\tau} K_{t+\tau}) / n$) and none of the earlier cases holds, then
        its coin is toggled and then
        $R_{t+\tau+1} = R_{t+\tau} + 1/K_{t+\tau}$.
        \item If a phase agent $v$ with $\Coin v = 1$ is chosen as respondent at time $t+\tau$ (with probability $H_{t+\tau}/n = R_{t+\tau} K_{t+\tau} / n$) and none of the earlier cases holds, then \begin{itemize}
            \item either its coin is toggled to tails and
        $R_{t+\tau+1} = R_{t+\tau} - 1/K_{t+\tau}$,
            \item or it is ranked and $R_{t+\tau+1} = (H_{t+\tau} - 1)/(K_{t+\tau} - 1) = R_{t+\tau} - \frac{1-R_{t+\tau}}{K_{t+\tau}-1} \leq R_{t+\tau}$;
                however, this can only happen with probability $\leq 1/(n-1)$ as there is at most one unaware leader in this case.
        \end{itemize}
    \end{itemize}
    Again, combining all of this, one can see that the additive $1/ K_{t+\tau}$ terms in the changes in $R$
        and $K_{t+\tau}$s in the denominators of the probabilities cancel
        to obtain
        \[\ExpCond{R_{t+\tau+1}}{R_{t+\tau}} \leq R_{t+\tau} + \frac{1-\bc{2-\frac{1}{n-1}}R_{t+\tau}}{n}
        = a + (R_{t+\tau} - a) - \frac{\bc{2-\frac{1}{n}}(R_{t+\tau} - a)}{n}.\]
    where $a = \bc{2 - \frac{1}{n}}^{-1} = \frac{n-1}{2n-3}$.
    Then, induction over $\tau$ and using $R_{t+\tau} \leq 1$ gets us
        \begin{align*}
        \ExpCond{R_{t+\tau+1}}{R_t}
           &\leq a + \bc{1-\frac{2-\frac{1}{n}}{n}} \cdot \ExpCond{R_{t+\tau} - a}{R_t}
        \leq a + (R_t-a) \cdot \bc{1-\frac{2-\frac{1}{n}}{n}}^\tau
        \leq a + (1 - a) \cdot \bc{1-\frac{2 - \frac{1}{n}}{n}}^\tau.
        \end{align*}
    Now $a = \frac{n-1}{2n-3} = \frac{1}{2} + \frac{1}{4n-6} = \frac{1}{2} + o(1)$,
        and so for $\tau = 3n/2$ we have
        \[\ExpCond{R_{t+\tau+1}}{R_t}
            \leq a + (1-a) \cdot \exp\bc{-\,\frac{(2 - o(1)\tau}{n}}
            = \frac{1}{2} + o(1) + \bc{\frac{1}{2} - o(1)} \cdot \exp(-3+o(1)),\]
        which is at most $0.525$ for sufficiently large $n$.
    An argument analogous to that at the end of the proof of \cref{lem:coin_heads} yields the claim.
\end{proof}

\fi
\end{document}